\numberwithin{equation}{section}
\theoremstyle{plain}
\newtheorem{theorem}{Theorem}[section]
\newtheorem{definition}{Definition}[section]
\newtheorem{proposition}{Proposition}[section]
\newtheorem{lemma}{Lemma}[section]
\newtheorem{example}{Example}[section]
\newtheorem{remark}{Remark}[section]
\newtheorem{assumption}{Assumption}[section]
\newtheorem{condition}{Condition}[section]
\newtheorem{step}{Step}[section]
\begin{document}

\begin{frontmatter}
	\title{Estimation for recurrent events through conditional estimating equations}
	\runtitle{Estimation for recurrent events through estimating equations}
	
	\begin{aug}
		\author{\fnms{Ioana} \snm{Schiopu-Kratina}\thanksref{t1}\ead[label=e1]{ioanakratina@gmail.com}}
		\and
		\author{\fnms{Liu} \snm{Hai Yan}\thanksref{t1}\ead[label=e2]{hliu038@uottawa.ca}}
		
		\address{\printead{e1,e2}}
		
		\author{\fnms{Mayer} \snm{Alvo} \thanksref{t1}\ead[label=e3]{malvo@uottawa.ca}}
		\and
		\author{\fnms{Pierre-J\'er\^ome} \snm{Bergeron} \thanksref{t1}\ead[label=e6]{pierrejerome.bergeron@gmail.com}}
			\address{\printead{e3,e6}}
		
		\thankstext{t1}{Department of Mathematics and Statistics, University of Ottawa, 150 Louis Pasteur Private, Ottawa ON K1N 6N5}
		\runauthor{Schiopu-Kratina et al.}
		
		\affiliation{University of Ottawa}
		
	\end{aug}

\begin{abstract}
We present new estimators for the statistical analysis of the dependence of the mean gap time length between consecutive recurrent events, on a set of explanatory random variables and in the presence of right censoring. The dependence is expressed through regression-like and overdispersion parameters, estimated via conditional estimating equations. The mean and variance of the length of each gap time, conditioned on the observed history of prior events and other covariates, are known functions of parameters and covariates. Under certain conditions on censoring, we construct normalized estimating functions that are asymptotically unbiased and contain only observed data. We discuss the existence, consistency and asymptotic normality of a sequence of estimators of the parameters, which are roots of these estimating equations. Simulations suggest that our estimators could be used successfully with a relatively small sample size in a study of short duration.

\end{abstract}

\begin{keyword}
\kwd{Conditional estimating functions}
\kwd{Recurrent events}
\kwd{ Censoring}
\kwd{Covariates}
\kwd{Strong consistency of estimators}
\kwd{Asymptotic normality of estimators}

\end{keyword}
\tableofcontents
\end{frontmatter}

\section{Introduction}\label{introduction}
\subsection{Background}

Recurrent event data for a subject consist of repeated occurrences of the same type of event over  a period of time. Some examples from health sciences given in \cite{Amorim and Jianwen} and \cite{Rogers.etal:2016} are: heart failure hospitalizations in cardio-vascular trials, seizures in epilepsy trials, asthma attacks, migraines, cancer recurrences \textit{etc.}  In many medical studies, the focus is on estimating counts of recurrent events, or the time to the event, as in survival analysis.  Some statistics of interest related to recurrent event data are: the total number of events over a specified time period, the rate at which events occur, time to the event for successive events, and gap times between two successive events. A problem of great practical importance in epidemiology is the estimation of the time (or its average) between two successive mutations of a specific virus. Studies based on gap times  are also of interest in the study of system failures or cyclical phenomena, like hurricanes or earthquakes in specified geographic areas, where it is of interest to characterize the cycle length.

The authors of \cite{clement and Strawderman} have identified two classes of methods of analysis for recurrent event data: intensity and marginal models. They refer to \cite{rJ07} for an extensive review of the existing methods. They also give references on the use of marginal models. We discuss marginal models in connection with our work further on in this section.

Statistical methods for time to event data are well established. The time to event studies often rely on Cox proportional-hazards model, the most commonly used regression model in survival analysis. The drawback of using this method with recurrent event data is that it applies to independent gap times, which generally limits its use to a one-time event. The Andersen-Gill generalization in \cite{Andson and Gill:1982} gives the intensity/hazard rate for recurrent event processes and is such that each gap time contributes to the likelihood. Furthermore, gap times are conditionally independent, given the covariates. This method can be used if the correlation between events is induced by covariates. It is a semiparametric method in that the baseline intensity is not required. The method employs, however, a likelihood function.

Recurrent event data are collected on a preferably large sample of subjects. We view it as longitudinal data collected on each subject before or at the time each recurrent event occurs. Data collected on the same subject on different occasions are often correlated and therefore can be viewed as a cluster of observations when an intra-cluster dependence exists. Right censoring creates, for many subjects, partially observed last gap times. Even if each subject were observed for a prespecified number of complete gap times, the drop-out of subjects would create partially observed last gap times. Deleting these gap times leads to length bias in the analysis and generates incomplete data sets. The commonly accepted practice is to replace the incompletely observed gap times with data that does not alter the distribution of the fully observed gap times, \textit{i.e.}, to impute.

A typical example is given in \cite{Murphy.etal:1995}. The authors study the menstrual pattern of a sample of Lese women from Zaire.  They express the mean length of the menstrual cycle, conditioned on the past cycles length and covariates (age, Body Mass Index, or BMI), as a function of these variables and a vector valued parameter. Likewise, the conditional variance is parameterized with overdispersion as an additional component of the parameter. This study encapsulates most of the challenges encountered in the analysis of recurrent event data: the covariates are time varying and there is right censoring.

In this article, we base our approach on the methodology developed in \cite{Murphy.etal:1995} and \cite{clement and Strawderman},  which does not specify the joint probability of repeated occurrences, given the previous history, \textit{i.e.}, we use a semiparametric approach. The approach in \cite{Murphy.etal:1995} and \cite{clement and Strawderman} for the analysis of recurrent events is based on the use of estimating functions, introduced in the seminal paper \cite{Liang and Zeger:1986} for marginal longitudinal models. The theoretical background developed in \cite{Murphy and Li:1995} accommodates the drop-out of subjects from the study, through the use of conditional estimating functions. As expressed in \cite{Murphy and Li:1995}, theoretically, this conditioning amounts to a projection of a partial likelihood score function onto the space generated by a class of ``conditionally linear'' estimating functions. As shown in \cite{Murphy and Li:1995}, one can remove certain terms from this projection, without compromising its inferential properties. Conditioning also helps mitigate problems encountered when dealing with time-varying covariates

We have identified three stages in the development of the methodology on which our work is based. Firstly, thanks to the broad applicability of the method described in \cite{Liang and Zeger:1986}, the estimating equations methodology could be applied to recurrent event processes in \cite{Murphy.etal:1995} and \cite{clement and Strawderman}. Secondly, the extension to the case of random covariates posed some problems, which have been circumvented. One such problem is the potential bias the random covariates may  introduce in the generalized estimating functions. The unbiasedness of these functions plays an important role in establishing the consistency of the estimators that they define. It is at this stage that conditional estimating functions were introduced in
\cite{Murphy and Li:1995}, applied in \cite{Murphy.etal:1995} and then in \cite{clement and Strawderman}. These are unbiased estimating functions, conditioned on sigma fields that capture the evolution in time of the recurrent event process. Additional conditions  are required to arrive at a final conditioned function used in estimation. Such a condition is given in \cite{clement and Strawderman}, but not in \cite{Murphy.etal:1995}. Neither reference provides an explicit definition of the $\sigma$-field used for conditioning. We provide such a description in Section 6.2 and elaborate on our choice in Section 6.1.

The last stage consists of selecting an imputation method and defining estimators (implicitly) from estimating equations containing the imputed data.
 A parametric method is used in \cite{Murphy.etal:1995} and \cite{clement and Strawderman}. We use  only the observed data to define our estimators of the regression and overdispersion parameters. Conditions for the existence, consistency and asymptotic normality of these estimators are given in \cite{clement and Strawderman}. We cover these properties in Section 3--4 of this article. In \cite{Murphy.etal:1995} the estimators are produced using an iterative process. No formal proofs of their properties are given. More on the comparison of methods is in our Section 6.2.

We now present some technical aspects related to the three stages described above.
We briefly describe the methodology in \cite{Liang and Zeger:1986}, with their notation.

Let $y_{ij}\in R$ represent the response variables from subjects, indexed by $i$, $x_{ij}\in R^p$ the \textit{nonrandom} covariates and $y_i$ the vector with components \\ $ \displaystyle y_{ij}\;(j=1,\ldots,m_i, i=1,\ldots$), where $j$ labels the occasion on which measurements are taken and $i$ the subject of the analysis.  The vectors $y_i$ are assumed independent, $(i=1,\ldots).$  Here $\beta\in R^{p}$ is the  regression parameter. Only the means and variances at each occasion are specified, namely
\begin{equation}
\label{equation0.1}
E_{\beta}[y_{ij}]=\mu(x_{ij}^T\beta)=\mu_{ij}(\beta),\quad Var_{\beta}[y_{ij}]=\dot{\mu}(x_{ij}^T\beta)=\phi \sigma_{ij}^2(\beta),
\end{equation}
where $\mu$ is a known canonical link function, $\dot{\mu}$ its first derivative and $\phi$ denotes the overdispersion parameter. We denote by $\mu_{i}(\beta)$ the vector with components $\mu_{ij}(\beta), (j=1,\ldots,m_i, i=1,\ldots)$.
The models defined in (\ref{equation0.1}) are called \textit{marginal models} because the first two moments of the response variables are specified separately for each occasion $(j=1,\ldots, m_i)$.

To estimate the true value of the regression parameter denoted by $\beta_0$, the authors of \cite{Liang and Zeger:1986} obtained a sequence of estimators $\hat{\beta}_n$, which are roots of the generalized estimating equation
\begin{equation}
\label{equation0.2}
\sum_{i=1}^n\bigg(\frac{\partial \mu_{i}(\beta)}{\partial \beta^T}\bigg)^TV_{i}^{-1}(\beta, \alpha)(y_i-\mu_{i}(\beta))=0 \quad (n=1,\ldots).
\end{equation}
In (\ref{equation0.2}), 
$\alpha$ is a nuisance parameter  and $V_{i}(\beta, \alpha)$ is a $m_i\times m_i$ \textit{working covariance } matrix, which stands for the correct, but unknown intra-cluster covariance matrix. Under the \textit{working independence assumption}, the correlation matrix corresponding to $V_i$ in (\ref{equation0.2}) is the identity matrix.
It is shown in \cite{Liang and Zeger:1986} that the sequence $\{\hat{\beta}_n\}_{n\geq 1}$ is strongly consistent, $i.e.$, it converges to $\beta_0$ $a.s.$, regardless of what the working covariance is, and this sequence of estimators is asymptotically normally distributed. The penalty for using a working covariance in lieu of the true one is a decrease in efficiency. Theoretical justification and extensions of the results in \cite{Liang and Zeger:1986} were given in \cite{Xie and Yang:2003}, \cite{Balan.etal:2005}, \cite{Balan.etal:2010}. Further applications and examples can be found in \cite{Zeger and Liang: 1986} and \cite{Zeger and Liang: 1992}.

We now present the use of the working independence assumption in the analysis of recurrent events (see also section 2.5 of \cite{clement and Strawderman}). We first point out the difference in notation. In \cite{clement and Strawderman}
and in this article, the response process is denoted $Y_{ij}$, and it represents a measure of the gap time between the $(j-1)$th and the $j$th occurrence of the event for subject $i$. The main regression parameter $\beta$ is denoted $\theta$, 
 whereas $\sigma_{ij}(\beta)$ in (\ref{equation0.1}) is $V_{ij}(\theta)$ in \cite{clement and Strawderman}, $i.e.$, the marginal standard deviation. Censoring aside, the first expression in (2.7) of \cite{clement and Strawderman} generates estimating equations which correspond to (\ref{equation0.2}) here, with $Z_{ij}(\theta)$ and $f_{ij}(\theta)$ defined in (2.6) of \cite{clement and Strawderman} and in  (\ref{equation1.7}) here.

We now discuss the extension of (\ref{equation0.1}) to the case of random covariates, which requires some form of conditioning in (\ref{equation0.1}). The simplest situation occurs when conditioning is done on the last observed covariate $x_{ij}$ (marginal models). As shown in \cite{peper and Anderson:1994} and \cite{Lai and Small:2007}, when some covariates are random and \textit{time-varying}, the expectation of the estimating functions in (\ref{equation0.2}) may not be zero, so the generated estimators may not be consistent.  This does not happen under the working independence assumption. It is suggested in \cite{peper and Anderson:1994} to either use the working independence assumption or a strong condition of independence among covariates, not needed in \cite{Yi.et al: 2012}.
 In this article we adopt the working independence method. The model assumptions (\ref{equation0.1}) have been relaxed here, in that the marginal, conditional variance need not be related to the conditional mean.
In the context of recurrent events, conditioning in (\ref{equation0.1}) is done on the data of each subject at the $jth$ event, available right after the $(j-1)th$ occurrence of the event $(j=1,\ldots )$.

Time varying covariates generate a stochastic process, which may interact with the recurrent event process in complex ways.
 In \cite{Murphy.etal:1995}, a hormonal level of a woman would be a time-varying covariate, which changes within cycles. A slow varying covariate in this study is age. In \cite{Murphy.etal:1995}, as in here, we consider covariates which are relatively constant within gap times, but vary between gap times. We also asssume that changes within a gap time do not affect its length.

Conditioning on some variables may control the within cycles variability of time-varying covariates. In \cite{Murphy.etal:1995}, the BMI is measured at the begining of a period, so its value depends on the length of the previous cycles. The conditioning $\sigma$-field for the mean in our  (\ref{equation1.4}) of Example \ref{example1.4} contains the previous cycles total length. Conditional estimating functions lead to a conditional imputation method. The use of observed data in the conditional $\sigma$- fields points to a missing at random (MAR, as in \cite{Little and Rubin: 2002}) nonresponse mechanism (p.454 in \cite{clement and Strawderman}). Other advantages  of conditioning in this context are described on p.1845 of \cite{Murphy.etal:1995}.

\subsection{Our contribution}

 Our main contribution consists of new estimators for the analysis of the dependence of the gap time between consecutive occurrences of a recurrent event on a set of random covariates, in the presence of right censoring. The estimators are defined as roots of estimating equations, in which pertinent data collected from all subjects replaces the partially censored gap times. By contrast, the imputation procedure in \cite{Murphy.etal:1995} and \cite{clement and Strawderman} relies on a parametric approach.

 To arrive at our estimating equations, we defined first two sets of estimating functions. The first set corresponds to the case of fully observed gap times, the second is a projection of the first set onto $\sigma$-fields generated by some observed data, which reflect the evolution in time of the censored process. The last terms of these conditioned functions contain moments of the unknown conditional distribution of the censored gap times. It is these last terms that are imputed, thus creating the final set of estimating functions from which the estimators are derived.

We attempted here to combine the pioneering ideas in \cite{Murphy and Li:1995} and \cite{Murphy.etal:1995} with the mathematical rigour (and notation) in \cite{clement and Strawderman}. This means, among other things, that we defined the conditioning $\sigma$-fields in \cite{Murphy.etal:1995} (identical to ours), as well as in \cite{clement and Strawderman} (see Section 6.2). We gave conditions for and proved the unbiasedness of all estimating functions that we employed. Our conditions model the censoring mechanism by expressing the degree of independence between some information on the censoring time and the recurrent event process. We also analyzed the asymptotic behaviours of our estimating functions and gave complete proofs of the asymptotic normality of our estimators (Appendix \ref{appb.2}).  Our Theorem \ref{theorem2} generalizes the results on the existence and consistency of the estimators presented in \cite{clement and Strawderman}. We used the survey in \cite{Murphy.etal:1995} to exemplify some  analytical conditions required for our results to hold. This study was also the methodological source of our numerical results. A comparison with the parametric imputation method in \cite{clement and Strawderman} shows that our nonparametric method performs well, specially when the sample of subjects is small and the length of the study is relatively short.

This article is organized as follows. Section 2 presents the model and our suggested estimating functions. Section 3 is about the strong consistency of our estimators, while their asymptotic normality is discussed in Section 4. In Section
5 we present simulation results. In Section 6 we compare methods which use conditional estimating functions and draw conclusions. Appendix \ref{app} contains a summary of some analytical results required in Section 3 and an illustrative example. The proofs required in Section 4 are in Appendix \ref{appb.2}. More simulation results  can be found  in the Supplementary section.

\section{The model and basic assumptions}\label{themodelassuption}

\subsection{Model assumptions and examples}
 We introduce the set-up and the estimating functions defined in \cite{clement and Strawderman}, with a slightly different notation. We then state the conditional model used and the conditional independence assumptions governing the censoring times.

 Data $d_i$ are collected on subject $i$ and  are generated by a distribution indexed by a true parameter $\eta_0=(\theta_0, \sigma_{0}^2)^T \in K \subset R^{p+1}$, where $\sigma_{0}^2\in R$ and $K$ is a compact set of parameters. Let $ (\Omega, \mathcal{F}) $ be a measurable space, $i.e.$, $\Omega$ is the set of all outcomes and $\mathcal{F}$ a $\sigma$-field. All our functions are $\mathcal{F}$- measurable and various probability measures $P_{\eta}, \eta \in K$ can be defined on this space. We assume that the time of origin for the analysis is $S_{i0}=0$, with subsequent similar events occurring at times
  $0<S_{i1}<\cdots<S_{ij}<\cdots.$
Observation stops at a time $0<C_i<\infty$ $(i=1,\ldots)$. The observed, uncensored data for subject $i$ at times $S_{ij}$ generate the $\sigma$-field
\begin{equation}
 \label{equation1.1}
  \mathcal{F}_{ij}=\sigma \{S_{il}\;(l=0,\ldots,j);x_{il}\; (l=1,\ldots,j+1)\}\quad (i=1,\ldots; j=0,\ldots),\nonumber
\end{equation}
where $x_{ij}$ denotes the covariate information, available at time $S_{i,j-1}$ $(j=1,\ldots)$. 
   Note that $\{\mathcal{F}_{ij}\}$ is a filtration, meaning that $\mathcal{F}_{ij}\subset\mathcal{F}_{i,j+1}$ $(j=0,\ldots)$.
We now define a filtration larger than $\{\mathcal{F}_{ij}\}\; (i,j=1,\ldots)$, which contains some information about the censoring times $C_i\; (i=1,\ldots)$.
Let
\begin{equation}
\mathcal{G}_{ij}=\sigma\{\mathcal{F}_{ij},\{S_{ik}\leq C_i\},\{S_{ik}=C_i\},(k=1,\dots,j)\}, \mathcal{G}_{i0}= \mathcal{F}_{i0}.\nonumber
\end{equation}
\begin{remark} As in \cite{Murphy.etal:1995}, p.1845, we assume that the time of departure of a subject from the study (``death") depends on outside influences, or on past covariates. If so, taking it as censoring time $C_i$ does not invalidate our analysis.
\end{remark}
Define the $j$th gap time
  $\displaystyle Y_{ij}=S_{ij}-S_{i,j-1} \;(i,j=1,\ldots).$
 We assume throughout that each $Y_{ij}$ has a finite second moment.

The data vectors are  assumed to be independent and identically distributed (\textit{i.i.d.}) and can be written as
\begin{equation}
 d_{i}^T=\{S_{ij}, x_{ij},C_i \;(j=1,\ldots)\},\nonumber
 \end{equation}
  where $x_{ij}$ are random covariates $ (i,j=1,\ldots)$.
With $\displaystyle \eta=(\theta^T, \sigma)^T\in K ,$ the basic modeling assumptions are
\begin{equation}
  \label{euation1.2}
  E_{\theta}[Y_{ij}\mid\mathcal{F}_{i,j-1}]=\mu_{ij}(\theta),\quad   \quad   var_{\eta}[Y_{ij}\mid \mathcal{F}_{i,j-1}]=\sigma^{2}V_{ij}^{2}(\theta),
\end{equation}
where $\mu_{ij}(\theta)\in R$ and $V_{ij}(\theta)>0 $ $(i,j=1,\ldots)$ are known scalar functions of the parameter vector $\theta\in R^p$, and of covariates, which will be displayed in examples.
 We think of $\theta$ as the regression parameter and of $\sigma^2$ as the overdispersion parameter.
While (\ref{euation1.2}) holds for all possible values of the parameter $\eta$, convergence results, which are used to estimate the true parameter $\eta_0$, hold in  the probability measure $P_{\eta_0}$ , which we omit writing, when no confusion may arise.

Unless specified otherwise we assume (A) throughout.
\begin{assumption}[$A$]
\begin{equation}
E_{\theta}[Y_{ij}\mid\mathcal{G}_{i,j-1}]=E_{\theta}[Y_{ij}\mid\mathcal{F}_{i,j-1}],\nonumber \end{equation}
\begin{equation}
var_{\eta}[Y_{ij}\mid\mathcal{G}_{i,j-1}]=var_{\eta}[Y_{ij}\mid\mathcal{F}_{i,j-1}]\; \;\quad(i,j=1,\ldots).\nonumber
\end{equation}
\end{assumption}

This assumption holds when, conditional on $\mathcal{F}_{i,j-1}$, the sigma fields generated by $\{S_{ik}\leq C_{i}\},\{S_{ik}=C_i\}\; (k=1,\ldots,j-1)$ and $Y_{ij}$ are independent $(i,j=1,\ldots)$ ( see 34.11 of \cite{Billingsley:1995}).

We now write condition $(A0)$ of \cite{clement and Strawderman} (p.465) in our notation. Let $H_{ij}$ be the $\sigma$-field generated by $\mathcal{F}_{i,j-1}$ and $\{C_i\geq S_{i,j-1}\}, \; i, j\geq 1.$ Then  $(A0)$ is:
\begin{eqnarray}
  E_{\theta}[Y_{i,j}|H_{i, j-1}]&=&E_{\theta}[Y_{ij}|\mathcal{F}_{i,j-1}]\nonumber\\
                      var_{\eta}[Y_{ij}|H_{i, j-1}]&=&var_{\eta}[Y_{ij}|\mathcal{F}_{i,j-1}]\nonumber
\end{eqnarray}

\begin{remark}
\label{remark2.1final}
Condition (A) is a stronger  variation of the noninformative censoring condition $(A0)$. While $(A0)$ holds when, in the presence of the covariate history up to and including occasion $j-1$ for subject $i$, $i.e.$, $\mathcal{F}_{i, j-1}$, the position of $ C_i$ with respect to $S_{i, j-1}$ has no bearing on the value of $Z_{ij}$. Our condition  $(A)$ holds if, in addition, the position of $C_i$ versus earlier occasions $S_{i, k},\; k\leq j-1$ is also noninformative for $Z_{ij}$. This allows us to define the filtration $\{\mathcal{G}_{ij}\}_{j\geq 1}$, on which we can obtain stronger asymptotic results.
\end{remark}
Our theoretical results apply to the following general examples.
   \begin{example}\label{example1.1} $E_{\theta}[Y_{ij}\mid\mathcal{F}_{i,j-1}]=\mu_{ij}(\theta), \quad var_{\eta}[Y_{ij}\mid\mathcal{F}_{i,j-1}]=\sigma^{2}V_{ij}^{2}(\theta)$.
   \end{example}
   \begin{example}\label{example1.2} $ E_{\theta}[Y_{ij}\mid\mathcal{F}_{i,j-1}]=\mu_{ij}(\theta),\quad var_{\eta}[Y_{ij}\mid\mathcal{F}_{i,j-1}]=\sigma^{2}\mu_{ij}^{2}(\theta)$.
   \end{example}

When no confusion may arise, we omit writing the subscript of $E$.

  Example \ref{example1.2} generalizes the accelerated gap times model proposed in \cite{Strawderman:2005}, which assumes that the gap times of the recurrent event process satisfy  $S_{ij}-S_{i,j-1}=R_{ij}\mu(\theta)$, where $R_{ij} $ are \textit{i.i.d.} random variables. Here $\mu(\theta)$, which is known, accelerates or decelerates the baseline gap times, based on the values of the time-independent covariates. When $E[R_{ij}]=1$, Example \ref{example1.2} is a direct generalization of the accelerated gap times model with $\mu_{ij}(\theta)= \mu(\theta), V_{ij}(\theta)=\mu(\theta)$ and $\sigma^{2}=var[R_{ij}]\;(i,j=1,\ldots)$.

We define, for $\theta^T=(\gamma_0, \gamma_1, \rho)\in R^3$ and $\rho\neq -(j-2)^{-1}$, the function
\begin{equation}
\label{equation1.3}
f_{j}(\rho)=\rho[\rho(j-1)+1-\rho]^{-1}\quad (j=1,\ldots).
\end{equation}

The example below was first introduced in \cite{Murphy.etal:1995}.
\begin{example}\label{example1.4} We assume that $\mu_{ij}(\theta)$ and $V_{ij}(\theta)$ are, for each $i=1,\ldots,$
\begin{equation}
\mu_{i1}(\theta)=\gamma_0+\gamma_1\overline{BMI}_{i1}\nonumber
\end{equation}
\begin{equation}
\label{equation1.4}
\mu_{ij}(\theta)=\gamma_0+\gamma_1\overline{BMI}_{ij}+f_{j}(\rho)
\bigg[\sum_{l=1}^{j-1}{Y_{il}}-\sum_{l=1}^{j-1}(\gamma_0+\gamma_1\overline{BMI}_{il})\bigg]\quad (j=2,\ldots),
\end{equation}
\begin{equation}
\label{equation1.5}
V_{ij}(\theta)= \bigg|1+f_{j}(\rho)\bigg|^{1/2}.
\end{equation}
In lieu of (\ref{equation1.5}), we could use $V_{ij}(\theta)=\mid\mu_{ij}(\theta)\mid$.
 The covariate $\overline{BMI}_{il}$ represents an average of several body mass index measurements taken on individual $i$ at the $(l-1)$th occurrence of the event. Formula (\ref{equation1.4}) is (3.3) of \cite{clement and Strawderman}, with the constant 28 incorporated in our $\gamma_0$. Let
   \begin{equation}
   c_{i1,0}(\theta)=\gamma_0;\;\;\; c_{i1,1}=\gamma_1; c_{i1,h}(\theta)=0 \;\;(h=2,\ldots).\nonumber
  \end{equation}
    For $x_{i1}$, the components are
 \begin{equation}
  x_{i1,0}=1;\;\; x_{i1,1}=\overline{BMI}_{i1}; \;\; x_{i1,h}=0 \; \; (h=2,\ldots).\nonumber
  \end{equation}
   For each $j=2,\ldots,$ let $c_{ij}(\theta)$ be the vector with components:
  \begin{eqnarray}
  c_{ij,0}(\theta)&=&\gamma_0-(j-1)\gamma_0f_{j}(\rho); \quad   c_{ij,h}(\theta)=-\gamma_1f_{j}(\rho)\;(h=1,\ldots, j-1); \nonumber\\
  c_{ij,j}(\theta)&=&\gamma_1; \quad c_{ij,h}(\theta)=f_{j}(\rho)\quad  (h=j+1,\ldots, 2j-1)\quad  \mbox{and}\nonumber\\
   c_{ij,h}(\theta)&=&0\quad (h=2j,\ldots).\nonumber
  \end{eqnarray}
   We define the components of the corresponding vector $ x_{ij}$ of covariates as :
    \begin{eqnarray}
    &&x_{ij,0}=1; \quad  x_{ij,h}=\overline{BMI}_{i,h}  (h=1,\ldots,j); \nonumber\\
    && x_{ij,h}=Y_{i,h-j} (h=j+1,\ldots,2j-1)\;\mbox{ and} \;x_{ij,h}=0 \; (h=2j,\ldots).\nonumber
    \end{eqnarray}
     One can see that $\displaystyle \mu_{ij}(\theta)=c_{ij}^T(\theta)x_{ij} \;(i,j=1,\ldots).$
\end{example}
The next example serves  as the basis  for the theoretical results illustrated in this article.
\begin{example} \label{example1.5}The conditional mean in (\ref{euation1.2}) is
$
\displaystyle \mu_{ij}(\theta)=\mu(c_{ij}^T(\theta)x_{ij})$\\$\displaystyle(i,j=1,\ldots),$
where $\theta$ is a p-dimensional vector, $\mu: R\rightarrow R$ is known, $ c_{ij}(\theta)$ and $x_{ij}$ are vectors of the same dimension.  The third derivatives of the function $\mu$ and of the components of $c_{ij}(\theta)$ are continuous. Here $c_{ij}(\theta)$ is a known function, which need not be a linear function of $\theta$.
\end{example}

\subsection{Unbiased estimating functions}

In this section we introduce the observed and the empirical estimating functions, which will be used throughout, and we study their unbiasedness properties.
We often appeal to the strong law of large numbers (SLLN) for $i.i.d.$ random variables and we always assume that these variables have finite expectations. As in \cite{clement and Strawderman}, we adopt the following notation:
\begin{equation}
\label{equation1.7}
  f_{ij}(\theta)=\frac{\partial\mu_{ij}(\theta)}{\partial\theta}V_{ij}^{-1}(\theta),\quad\quad  Z_{ij}(\theta)=\frac{Y_{ij}-\mu_{ij}(\theta)}{V_{ij}(\theta)}\quad (i,j=1,\ldots).
\end{equation}

We assume that $E[Z_{ij}^2(\theta)]<\infty\ (i,j=1,\ldots)$.  From (\ref{euation1.2}) and (\ref{equation1.7}), it follows that $f_{ij}(\theta)$ is
$ \mathcal{F}_{i,j-1}$-measurable. We consider the estimating functions
 \begin{equation}
\label{equation1.9}
g_{n,1}(\theta)=\sum_{i=1}^n\sum_{j=1}^{\infty}f_{ij}(\theta)Z_{ij}(\theta)I\{S_{i,j-1}<C_{i}\},
\end{equation}
\begin{equation}
\label{equation1.10}
g_{n,2}(\eta)=\sum_{i=1}^n\sum_{j=1}^{\infty}b_{ij}(\eta)(Z_{ij}^2(\theta)-\sigma^2)I\{S_{i,j-1}< C_i\}\quad (n=1,\ldots),
\end{equation}

where we assume that the inner sums are finite. In (\ref{equation1.10}) $b_{ij}(\eta)$ are $\mathcal{F}_{i,j-1}$- measurable random variables $(i,j=1,\ldots).$
\begin{definition}
\label{definitionS2.1}
Let $\tau_{i}=\min\{j\geq 1:C_i\leq S_{ij}\}$, if such $j$ exists, and $\tau_{i}=\infty$ otherwise $(i=1,\ldots)$.
 \end{definition}

Since $\{\tau_i=n\}\in \mathcal{G}_{i,n}$, $\tau_{i}$  is a stopping time with respect to the filtration $\{\mathcal{G}_{ij}\}_{j\geq 1}\;$ for $ (i=1,\ldots)$.
Now (\ref{equation1.9}--\ref{equation1.10}) become
 \begin{equation}
 \label{sequation.2.8newf}
g_{n,1}(\theta)=\sum_{i=1}^n\sum_{j=1}^{\tau_{i}}f_{ij}(\theta)Z_{ij}(\theta),
\end{equation}
\begin{equation}
\label{sequtaion.2.9.newf}
g_{n,2}(\eta)=\sum_{i=1}^n\sum_{j=1}^{\tau_{i}}b_{ij}(\eta)(Z_{ij}^2(\theta)-\sigma^2).
\end{equation}

We will often consider one of  the following conditions, which are progressively stronger.
\begin{condition}
\label{Condition2.1}
We assume that $\tau_i<\infty$ $a.s.$, $(i=1,\ldots).$
\end{condition}
We assume throughout that Condition \ref{Condition2.1} holds.
\begin{condition}
\label{Condition2.2}
We assume that $E[\tau_i]< \infty\; (i=1,\ldots).$
\end{condition}
Note that Condition \ref{Condition2.2} implies Condition \ref{Condition2.1} because $\tau_i>0,(i=1,\ldots).$
\begin{condition}
\label{Condition2.3}
$(T1)$  We assume that $\tau_{i}$ is  $a.s.$ bounded from above by a nonrandom integer $ m_{i} \;(i=1,\ldots)$.
\end{condition}
\begin{condition}
\label{Condition2.4}
  $(T2)$ There exists a non-random integer $m,$ such that $\tau_i\leq m\;(i=1,\ldots)$ $a.s$.
\end{condition}

We introduce more notation. Let
\begin{equation}
g_i(\theta)= \sup_{j}\|f_{ij}(\theta)Z_{ij}(\theta)\|,\;\;\;\;\; h_{i}(\eta)=\sup_{j}|b_{ij}(\eta)\\(Z_{ij}^2(\theta)-\sigma^2)|, (i=1,\ldots).\nonumber
\end{equation}

\begin{condition}
\label{Condition2.5}
The random variables $g_i(\theta)\tau_i$ and $h_i(\eta)\tau_i$ are integrable \\ $(i=1,\ldots,)$.
\end{condition}
Theorem 2.1 stated in \cite{clement and Strawderman} is proved in the Supplementary material from \cite{clement and Strawderman}.
In the first part of our Proposition \ref{Proposition1.1} we prove a more general result.
\begin{proposition} \label{Proposition1.1} Assume that Condition \ref{Condition2.5} holds.  Then the estimating functions in (\ref{sequation.2.8newf}--\ref{sequtaion.2.9.newf}) 
 are unbiased (\textit{i.e.}, have zero expectation). Furthermore, $a.s.$,
\begin{equation}
\label{equation1.13}
\qquad \qquad \qquad \qquad \qquad n^{-1} g_{n,k}(\eta_0)\rightarrow 0, \quad n\rightarrow \infty, \quad \quad \quad\quad(k=1,2).
\end{equation}
\end{proposition}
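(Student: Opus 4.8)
The plan is to establish unbiasedness first and then deduce (\ref{equation1.13}) from the strong law of large numbers. Since the data vectors $d_i$ are i.i.d., the clusterwise contributions
\[
W_{i,1}(\theta)=\sum_{j=1}^{\tau_i}f_{ij}(\theta)Z_{ij}(\theta),\qquad
W_{i,2}(\eta)=\sum_{j=1}^{\tau_i}b_{ij}(\eta)\bigl(Z_{ij}^{2}(\theta)-\sigma^{2}\bigr)
\]
are i.i.d.\ in $i$, so it suffices to show $E[W_{i,1}(\theta)]=0$ and $E[W_{i,2}(\eta)]=0$; summing over $i$ then gives the unbiasedness of $g_{n,1}$ and $g_{n,2}$, and dividing by $n$ and applying the SLLN gives (\ref{equation1.13}). (The same computation works under any $P_\eta$, since both (\ref{euation1.2}) and Assumption (A) are assumed for every parameter value.)

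To show $E[W_{i,1}(\theta)]=0$ I would first rewrite the inner sum as $W_{i,1}(\theta)=\sum_{j\ge 1}f_{ij}(\theta)Z_{ij}(\theta)I\{\tau_i\ge j\}$. Because $S_{i0}=0<C_i$ and the $S_{ij}$ increase, $\{\tau_i\ge j\}=\{S_{i,j-1}<C_i\}$; and since $\tau_i$ is a stopping time for $\{\mathcal{G}_{ij}\}$ (as observed just after Definition \ref{definitionS2.1}), $\{\tau_i\ge j\}=\{\tau_i\le j-1\}^{c}\in\mathcal{G}_{i,j-1}$. The bound $\sum_{j\ge 1}\|f_{ij}(\theta)Z_{ij}(\theta)\|\,I\{\tau_i\ge j\}\le g_i(\theta)\tau_i$, integrable by Condition \ref{Condition2.5}, permits interchanging $E$ and $\sum_j$ by Fubini. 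For fixed $j$, both $f_{ij}(\theta)$ (which is $\mathcal{F}_{i,j-1}$-measurable by (\ref{euation1.2}) and (\ref{equation1.7})) and $I\{\tau_i\ge j\}$ are $\mathcal{G}_{i,j-1}$-measurable, so
\[
E\bigl[f_{ij}(\theta)Z_{ij}(\theta)I\{\tau_i\ge j\}\bigr]
=E\Bigl[f_{ij}(\theta)I\{\tau_i\ge j\}\,E[Z_{ij}(\theta)\mid\mathcal{G}_{i,j-1}]\Bigr],
\]
and by Assumption (A) and (\ref{euation1.2}), $E[Z_{ij}(\theta)\mid\mathcal{G}_{i,j-1}]=(E[Y_{ij}\mid\mathcal{G}_{i,j-1}]-\mu_{ij}(\theta))/V_{ij}(\theta)=0$; hence every term vanishes. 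The computation for $W_{i,2}(\eta)$ is identical, with domination $h_i(\eta)\tau_i$ from Condition \ref{Condition2.5}, and with the observation that the mean part of Assumption (A) gives $var_\eta[Y_{ij}\mid\mathcal{G}_{i,j-1}]=E[(Y_{ij}-\mu_{ij}(\theta))^{2}\mid\mathcal{G}_{i,j-1}]=V_{ij}^{2}(\theta)\,E[Z_{ij}^{2}(\theta)\mid\mathcal{G}_{i,j-1}]$, which by the variance part of (A) and (\ref{euation1.2}) equals $\sigma^{2}V_{ij}^{2}(\theta)$; thus $E[(Z_{ij}^{2}(\theta)-\sigma^{2})\mid\mathcal{G}_{i,j-1}]=0$.

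For (\ref{equation1.13}), the variables $W_{i,1}(\eta_0)$ (resp.\ $W_{i,2}(\eta_0)$), $i\ge 1$, are i.i.d., and $\|W_{i,1}(\eta_0)\|\le g_i(\theta_0)\tau_i$ and $|W_{i,2}(\eta_0)|\le h_i(\eta_0)\tau_i$ are integrable by Condition \ref{Condition2.5}; the componentwise SLLN therefore gives $n^{-1}g_{n,k}(\eta_0)=n^{-1}\sum_{i=1}^{n}W_{i,k}(\eta_0)\to E[W_{1,k}(\eta_0)]=0$ a.s.\ for $k=1,2$, the last equality being the unbiasedness just proved at $\eta_0$.

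I expect the only load-bearing steps to be: (i) identifying $\{\tau_i\ge j\}$ with $\{S_{i,j-1}<C_i\}$ and checking it lies in $\mathcal{G}_{i,j-1}$, so that the stopping indicator can be pulled out of the conditional expectation; and (ii) invoking Assumption (A) — introduced precisely for this — to pass from conditioning on $\mathcal{F}_{i,j-1}$ to conditioning on the larger $\mathcal{G}_{i,j-1}$ before using the conditional-mean-zero property of $Z_{ij}(\theta)$ and of $Z_{ij}^{2}(\theta)-\sigma^{2}$. Everything else (the Fubini interchange, the SLLN) is routine once the integrability in Condition \ref{Condition2.5} is in force, which is exactly what controls the a.s.\ finite but unbounded number of summands $\tau_i$.
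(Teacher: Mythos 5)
Your proof is correct and follows essentially the same route as the paper's: both reduce unbiasedness to the term-by-term identity $E[f_{ij}(\theta)I\{S_{i,j-1}<C_i\}E[Z_{ij}(\theta)\mid\mathcal{G}_{i,j-1}]]=0$ (via Assumption (A) and the $\mathcal{G}_{i,j-1}$-measurability of the stopping indicator), justify the passage to the full random sum by the domination $g_i(\theta)\tau_i$, $h_i(\eta)\tau_i$ from Condition \ref{Condition2.5}, and conclude (\ref{equation1.13}) by the SLLN for i.i.d.\ zero-mean clusterwise sums. The only cosmetic difference is that the paper packages the interchange of expectation and infinite sum as dominated convergence of the zero-mean partial-sum martingale $g_k^{[l]}$ to $g_k^{[\tau]}$, while you invoke Fubini directly with the same dominating function.
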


\begin{proof}

In conjunction with (\ref{equation1.9}--\ref{equation1.10}), we fix an index $i$ and omit writing it for now, along with the parameters $\theta$ and $\sigma$. With $(l=1,\ldots,)$, we
define the functions

\begin{equation}
\label{equation2.11.newf}
g_{1}^{[l]}= \sum_{j=1}^l f_jZ_jI\{S_{j-1}< C\}
\end{equation}
\begin{equation}
\label{equation2.12.newf}
g_{2}^{[l]}=\sum_{j=1}^l b_j(Z_{j}^2-\sigma^2) I\{S_{j-1}<C\}
\end{equation}
 Since $S_{l-1}\geq S_{\tau}\geq C,$ if $l>\tau$, $g_{k}^{[\tau]}$ is $g_{k}^{[l]}$ when $l\geq \tau\;(k=1, 2)$.
Now,
\begin{equation}
E[f_jZ_jI\{S_{j-1}<C\}|\mathcal{G}_{j-1}]=E[f_jI\{S_{j-1}<C\}E[Z_j|\mathcal{G}_{j-1}]]=0, \quad (j=1,\ldots,).\nonumber
\end{equation}
Similarly, $E[b_j(Z_{j}^2-\sigma^2)I\{S_{j-1}< C\}]=0$, by the definition of $\mathcal{G}_{j-1}$, $(A)$ and (\ref{euation1.2}). It follows that $g_{k}^{[l]}$ is a zero-mean martingale in $l$, and we also have

\begin{equation}
\label{equation1.14}
\qquad \qquad \qquad \quad \quad\quad  g_{k}^{[l]}\quad \rightarrow \quad g_{k}^{[\tau]}\quad\quad \quad  \mbox{a.s.}, \quad l\rightarrow\infty \quad \quad \;(k=1,2).
\end{equation}
Convergence in expectation follows if some form of uniform integrability is assumed (see \cite{Billingsley:1995}, p.464). Here $\displaystyle\|g_{1}^{[l]}\|\leq g\sum_{j=1}^{\infty}I\{S_{j-1}< C\}=g\tau$,
and $|g_{2}^{[l]}|\leq h\tau$. By Condition \ref{Condition2.5} with $i$ suppressed, the bounded convergence theorem  and (\ref{equation1.14})
\begin{equation}
\label{equation1.15}
\quad\qquad \qquad 0=E[g_{k}^{[l]}]\;\rightarrow \; E[g_{k}^{[\tau]}], \quad \quad  l\rightarrow \infty,\quad\quad (k=1,2)\nonumber
\end{equation}
To prove (\ref{equation1.13}), we write  $\displaystyle n^{-1} g_{n,k}(\eta_0)= n^{-1}\sum_{i=1}^n g_{k}^{[\tau_i]}(\eta_0)$, which converges to 0 $a.s., (k=1,2)$, by the strong law of large numbers (SLLN) for zero mean $i.i.d.$ variables.
\end{proof}

We introduce some notation. Let $i,j=1,\ldots,$ consider the set
\begin{equation}
 \{S_{i,j-1}<C_{i}<S_{ij}\}=\{0<C_i-S_{i,j-1}<S_{ij}-S_{i,j-1}\}\nonumber
\end{equation}

and note that it belongs to $\mathcal{G}_{ij}$.
The complement of this set is
\begin{equation}
\{C_i\leq S_{i,j-1}\}\cup \{S_{ij}\leq C_{i}\}= \{C_i-S_{i,j-1}\leq 0\}\cup\{S_{ij}-S_{i,j-1}\leq C_i-S_{i,j-1}\},\nonumber
\end{equation}
 $i.e.$,
it is a union of two disjoint sets. We define the set indicators
\begin{eqnarray}
I_{ij}^{obs}&=&I\{S_{ij}\leq C_{i}\},\quad\quad \quad I_{ij}^{cen}=I\{S_{i,j-1}< C_i< S_{ij}\},\nonumber\\
 I_{ij}^{out}&=&I\{C_i\leq S_{i,j-1}\}\;\;\quad(i,j=1,\ldots).\nonumber
\end{eqnarray}

Here ``obs'' stands for fully observed gap times, ``cen'' for censored gap times,
and ``out'' for fully unobserved gap times. Note that $I_{ij}^{out}$ is $\mathcal{G}_{i,j-1}$-measurable and the other two  set indicators are $\mathcal{G}_{ij}$-measurable $(i,j=1,\ldots)$.
Since
\begin{equation}
 \{S_{i,j-1}<C_{i}\}=\{S_{ij}\leq C_{i}\}\bigcup\{S_{i,j-1}< C_{i}<S_{ij}\},\nonumber
\end{equation}
 the estimating functions in (\ref{equation1.9}--\ref{equation1.10}) are, for $n=1,\ldots,$
\begin{eqnarray}
\label{equation1.16}
g_{n,1}(\theta)=\sum_{i=1}^n\sum_{j=1}^{\infty}f_{ij}(\theta)(Z_{ij}(\theta)I_{ij}^{obs}+Z_{ij}(\theta)I_{ij}^{cen}),
\end{eqnarray}
\begin{equation}
\label{equation1.17}
g_{n,2}(\eta)=\sum_{i=1}^n\sum_{j=1}^{\infty}b_{ij}(\eta)(Z_{ij}^2(\theta)-\sigma^2)(I_{ij}^{obs}+I_{ij}^{cen}),
\end{equation}
 It is the terms restricted to $I_{ij}^{cen}\;(i,j=1,\ldots)$ which will be imputed, using the observed data.

\subsection{Observed estimating functions}

In this section we describe a three-step procedure for imputing and estimating the censored terms of (\ref{equation1.16}--\ref{equation1.17}). We also discuss the unbiasedness properties of the resulting  estimating functions.  We start with (\ref{equation1.16}).

 We start  Step \ref{step2.1}  by discarding terms with $E[I_{ij}^{cen}]=0$ and writing the observed and empirical estimating functions (\ref{eauation.1.18}) and (\ref{equation.1.19}), respectively.
\begin{equation}
\label{eauation.1.18}
g_{n,1}^{obs}(\theta)=\sum_{i=1}^n\sum_{j=1}^{\tau_i}\bigg (f_{ij}(\theta)Z_{ij}(\theta)I_{ij}^{obs}+f_{ij}(\theta)I_{ij}^{cen}\frac{E_{\theta}[Z_{ij}(\theta)I_{ij}^{cen}]}{E_{\theta}[I_{ij}^{cen}]}\bigg).
\end{equation}
\begin{equation}
\label{equation.1.19}
\hat{g}_{n,1}^{obs}(\theta)=\sum_{i=1}^n\sum_{j=1}^{\tau_i}\bigg(f_{ij}(\theta)Z_{ij}(\theta)I_{ij}^{obs}-f_{ij}(\theta)I_{ij}^{cen}\frac{\sum_{k=1}^nZ_{kj}(\theta)I_{kj}^{obs} }{\sum_{k=1}^nI_{kj}^{cen}}\bigg).
\end{equation}
The empirical estimating functions $\hat{g}_{n,1}^{obs}(\theta)$ are used in our simulations to obtain estimators of $\theta_0$.

For estimating $\sigma_{0}^2$, we consider the following estimating functions
\begin{equation}
\label{equation.1.20}
g_{n,2}^{obs}(\eta)=\sum_{i=1}^n\sum_{j=1 }^{\tau_i}\bigg[b_{ij}(\eta)(Z_{ij}^2(\theta)-\sigma^2)I_{ij}^{obs}+b_{ij}(\eta)I_{ij}^{cen}\frac{E_{\eta}[(Z_{ij}^2(\theta)-\sigma^2)I_{ij}^{cen}]}
{E_{\eta}[I_{ij}^{cen}]}\bigg],
\end{equation}
\begin{equation}
\label{equation.1.21}
\hat{g}_{n,2}^{obs}(\eta)=\sum_{i=1}^n\sum_{j=1 }^{\tau_i}b_{ij}(\eta)\bigg[(Z_{ij}^2(\theta)-\sigma^2)I_{ij}^{obs}-I_{ij}^{cen}\frac{\sum_{k=1}^n(Z_{kj}^2(\theta)-\sigma^2)I_{kj}^{obs}}
{\sum_{k=1,k\neq i}^n I_{kj}^{cen}}\bigg].
\end{equation}

\begin{step}\label{step2.1} Replace $Z_{ij}(\theta)I_{ij}^{cen}$ with  $I_{ij}^{cen}E_{\theta}[Z_{ij}(\theta)I_{ij}^{cen}]/E_{\theta}[I_{ij}^{cen}]$.  Give conditions\\\\ under which the resulting $g_{n,1}^{obs}(\theta)$ in (\ref{eauation.1.18})  are unbiased. Adjust and apply the procedure to (\ref{equation1.17}).
\end{step}
 \begin{step}\label{step2.2}Show that $E_{\theta}[Z_{ij}(\theta)I_{ij}^{cen}]=-E_{\theta}[Z_{ij}(\theta)I_{ij}^{obs}]\;( i,j=1,\ldots),$ which justifies the definition of  $\hat{g}_{n,1}^{obs}(\theta)$ in (\ref{equation.1.19}). Apply the procedure to (\ref{equation1.17}).
\end{step}
\begin{step}\label{step2.3} Show that $g_{n,k}^{obs}(\theta)$  and $\hat{g}_{n,k}^{obs}(\theta)$, normalized by $n$, $(k=1,2)$ are asymptotically equivalent.
\end{step}
\begin{remark}
The justification for Step \ref{step2.1} is given in Section 6.1, where we show that the estimating functions (\ref{eauation.1.18}) and (\ref{equation.1.20}) are the projection of (\ref{equation1.16}), (\ref{equation1.17}), respectively, on an  appropriately defined $\sigma$-field.
\end{remark}

When $b_{ij}(\eta)=1$ for all $\eta\in K, i,j=1,\ldots,$ one can explicitly produce an estimator of $\sigma_{0}^2$ using  (\ref{equation.1.20}--\ref{equation.1.21}). The consistency of this estimator is easier to prove than the consistency of the implicitly defined estimators required in the general case.

To prove the unbiasedness of (\ref{eauation.1.18}), we define the $\sigma$-fields
\begin{equation}
\label{equation1.22}
\mathcal{O}(f_{i}(\theta))=\sigma\bigg(f_{ij}(\theta)I_{ij}^{cen},I_{ij}^{cen}, (j=1,\ldots )\bigg)\quad (i=1,\ldots).
\end{equation}
We recall that $I_{ij}^{cen}I_{ij'}^{cen}=0$ if $j\neq j'.$

\begin{condition}\;$(B_{f(\theta)})$:
\begin{equation}
\label{condtion.4}
(B_{f(\theta)})\qquad \qquad I_{ij}^{cen}E_{\theta}[Z_{ij}(\theta)|\mathcal{O}(f_{i}(\theta))]=I_{ij}^{cen}E_{\theta}[Z_{ij}(\theta)|I_{ij}^{cen}] \quad(i,j=1,\ldots).\nonumber
\end{equation}
\end{condition}
Condition $(B_{b(\eta)})$ is defined similarly, once we define $\mathcal{O}(b_{i}(\eta))$ as in (\ref{equation1.22}) by replacing $f_{ij}(\theta)$ with $b_{ij}(\eta)$.
We illustrate in Example \ref{example.1.5new} below  conditions $(B_{f(\theta)})$  and $(A)$, for $j=1,2.$  For simplicity, we introduce the notation, valid for $(i=1,\ldots,j=2,\ldots)$ :
\begin{equation}
\label{equation.final.2.3.8.New}
x_{ij,j}=\overline{BMI}_{ij}\;(j=1,\ldots), \; Y_{i}^{(j-1)}=\sum_{l=1}^{j-1}Y_{il}, \; x_{i}^{(j-1)}=\sum_{h=1}^{j-1}x_{ij,h}\;(j=2,\ldots).\nonumber
\end{equation}
Then in Example \ref{example1.4}
\begin{equation} \mu_{i1}(\theta)=\gamma_0+\gamma_1x_{i1,1},\;\mu_{ij}(\theta)=\gamma_{0}+\gamma_{1}x_{ij,j}+f_{j}(\rho)[Y_{i}^{(j-1)}-(j-1)\gamma_{0}-\gamma_{1}x_{i}^{(j-1)}]\nonumber
\end{equation}

\begin{example}\label{example.1.5new} Consider Example \ref{example1.4} with $V_{ij}(\theta)$ as in (\ref{equation1.5}). We first identify the generators of $\displaystyle \sigma(f_{ij}(\theta)), (i,j=1\ldots).$ Since $V_{ij}(\theta)$ is nonrandom, we only need to look at $\displaystyle \partial \mu_{ij}(\theta)/\partial \theta$. We have

\begin{equation}
\partial \mu_{i1}(\theta)/\partial \gamma_0=1;\;\; \partial \mu_{i1}(\theta)/\partial \gamma_1=x_{i1,1};\;\;\partial \mu_{i1}(\theta)/\partial \rho=0 \nonumber
\end{equation}
 and, with $(j=2,\ldots)$,

\begin{equation}
\frac{\partial \mu_{ij}(\theta)}{\partial \gamma_0}=1-(j-1)f_{j}(\rho),\quad
\frac{\partial \mu_{ij}(\theta)}{\partial \gamma_1}=x_{ij,j}-f_{j}(\rho)x_{i}^{(j-1)},\nonumber
\end{equation}
\begin{equation}
\frac{\partial \mu_{ij}(\theta)}{\partial \rho}=\dot{f}(\rho)[Y_{i}^{(j-1)}-(j-1)\gamma_0-\gamma_{1}x_{i}^{(j-1)}],\; \dot{f}_{j}(\rho)=[\rho(j-2)+1]^{-2}, by\; (\ref{equation1.3}).\nonumber
\end{equation}
\end{example}

 For $\displaystyle j=1, \mathcal{F}_{0}=\sigma(x_{i1})=\mathcal{G}_{i0}$, since $\{S_{i1}>0\}$ is the set of all outcomes. Thus $(A)$ holds in this case. On the other hand,

 \begin{equation}
 I_{i1}^{cen}E[Z_{i1}\mid \sigma(x_{i1}I_{i1}^{cen}, I_{i1}^{cen})]=I_{i1}^{cen}E[Z_{i1}\mid I_{i1}^{cen}], \nonumber
 \end{equation}
 by (\ref{equation1.22}) and $(B_{f(\theta)})$, which means that, in predicting $Z_{i1}$, the information provided by $x_{i1}I_{i1}^{cen}$ is irrelevant in the presence of the information that censoring has just occurred, $i.e.$, on the event $\{0<C_i<S_{i1}\}$.

 We recall that the covariate $x_{i1}$ was recorded at $S_0=0$. When $j=2$,
   \begin{equation}
   \mathcal{F}_{i1}=\sigma(x_{i1},x_{i2}, S_{i1})\;\;\mbox{ and}\;\; \mathcal{G}_{i1}=\sigma(\mathcal{F}_{i1},\{C_i>S_{i1}\},\{C_i=S_{i1}\})\nonumber
   \end{equation}
    so $(A)$ states that the complete history $\mathcal{F}_{i1}$ influences $Z_{i2}$, regardless of the information on the position of censoring relative to $S_{i1}$. Now $\displaystyle \sigma(\partial \mu_{i2}/ \partial\theta)$ is generated by $\displaystyle x_{i2,2}-f_2(\rho)x_{i2,1}$ and $\displaystyle Y_{i1}-\gamma_{1}x_{i2,1}$, both $\mathcal{F}_{i1}$-measurable, so $\displaystyle \sigma(\partial \mu_{i2}/\partial \theta)\subset \mathcal{F}_{i1}$. According to $(B_{f(\theta)})$, for $j=2$ and in the event that censoring has just occurred, $i.e.$, on $\displaystyle I_{i2}^{cen}=\{S_{i1}<C_i<S_{i2}\}$, this latter information prevails over the incomplete information provided by $\displaystyle \sigma(f_{i2}(\theta)I_{i2}^{cen})$ in influencing the incompletely observed $Z_{i2}$. Note that $Z_{ij}$ is disregarded on $I_{ij}^{out}$. Thus, conditions $(A)$ and $(B_{f(\theta)})$ complement each other in that, if $C_i$ has ``just'' occurred, this event, $i.e.$, $I_{ij}^{cen}=1$, is quite informative for $Z_{ij}$, unlike events of the type
    $ \{S_{ik}\leq C_i\}, \; k\leq j-1$.

\begin{proposition}\label{proposition1.2} Assume that Condition \ref{Condition2.5}, $(B_{f(\theta)})$ and $(B_{b(\eta)})$ hold. Then $g_{n,k}^{obs}(\theta)$
is unbiased, $k=1,2.$
\end{proposition}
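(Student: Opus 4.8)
The plan is to derive the unbiasedness of $g_{n,1}^{obs}(\theta)$ and $g_{n,2}^{obs}(\eta)$ from the unbiasedness of $g_{n,1}(\theta)$ and $g_{n,2}(\eta)$ already supplied by Proposition \ref{Proposition1.1}, by checking that the substitution prescribed in Step \ref{step2.1} preserves the expectation term by term. I begin with the case $k=1$. Since $I_{ij}^{obs}$ and $I_{ij}^{cen}$ vanish for $j>\tau_i$, the sum $\sum_{j=1}^{\tau_i}$ in (\ref{eauation.1.18}) coincides with $\sum_{j\geq 1}$, so subtracting the representation (\ref{equation1.16}) of $g_{n,1}(\theta)$ gives
\[
g_{n,1}^{obs}(\theta)-g_{n,1}(\theta)=\sum_{i=1}^n\sum_{j\geq 1}f_{ij}(\theta)I_{ij}^{cen}\,(c_{ij}-Z_{ij}(\theta)),\qquad c_{ij}:=\frac{E_{\theta}[Z_{ij}(\theta)I_{ij}^{cen}]}{E_{\theta}[I_{ij}^{cen}]},
\]
with the convention (as in Step \ref{step2.1}) that summands with $E_{\theta}[I_{ij}^{cen}]=0$ are omitted. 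Because the data are \textit{i.i.d.} and $E_{\theta}[g_{n,1}(\theta)]=0$ by Proposition \ref{Proposition1.1}, it then suffices to prove the term-by-term identity
\begin{equation}
\label{eq:keyB}
E_{\theta}[f_{ij}(\theta)Z_{ij}(\theta)I_{ij}^{cen}]=c_{ij}\,E_{\theta}[f_{ij}(\theta)I_{ij}^{cen}]\qquad(i,j=1,2,\ldots),
\end{equation}
the interchange of $E_{\theta}$ with the sums being handled exactly as in the proof of Proposition \ref{Proposition1.1}, using Condition \ref{Condition2.5} together with $I_{ij}^{cen}I_{ij'}^{cen}=0$ for $j\neq j'$ (so at most one correction term per subject is nonzero).

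To establish (\ref{eq:keyB}) I would condition on the $\sigma$-field $\mathcal{O}(f_i(\theta))$ defined in (\ref{equation1.22}). The random variables $f_{ij}(\theta)I_{ij}^{cen}$ and $I_{ij}^{cen}$ are generators of $\mathcal{O}(f_i(\theta))$, hence $\mathcal{O}(f_i(\theta))$-measurable; using this, the idempotency $(I_{ij}^{cen})^{2}=I_{ij}^{cen}$ (so that $f_{ij}(\theta)Z_{ij}(\theta)I_{ij}^{cen}=(f_{ij}(\theta)I_{ij}^{cen})(Z_{ij}(\theta)I_{ij}^{cen})$), the tower property, condition $(B_{f(\theta)})$, and the elementary identity $I_{ij}^{cen}E_{\theta}[Z_{ij}(\theta)\mid I_{ij}^{cen}]=c_{ij}I_{ij}^{cen}$ (the conditioning $\sigma$-field has the single relevant atom $\{I_{ij}^{cen}=1\}$), one gets
\begin{align*}
E_{\theta}[f_{ij}(\theta)Z_{ij}(\theta)I_{ij}^{cen}]
&=E_{\theta}\bigl[(f_{ij}(\theta)I_{ij}^{cen})\,E_{\theta}[Z_{ij}(\theta)I_{ij}^{cen}\mid\mathcal{O}(f_i(\theta))]\bigr]\\
&=E_{\theta}\bigl[f_{ij}(\theta)\,I_{ij}^{cen}\,E_{\theta}[Z_{ij}(\theta)\mid\mathcal{O}(f_i(\theta))]\bigr]\\
&=E_{\theta}\bigl[f_{ij}(\theta)\,I_{ij}^{cen}\,E_{\theta}[Z_{ij}(\theta)\mid I_{ij}^{cen}]\bigr]
=c_{ij}\,E_{\theta}[f_{ij}(\theta)I_{ij}^{cen}],
\end{align*}
which is (\ref{eq:keyB}). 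Summing over $i\leq n$ and $j$, and adding $E_{\theta}[g_{n,1}(\theta)]=0$, yields $E_{\theta}[g_{n,1}^{obs}(\theta)]=0$.

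For $k=2$ the same argument applies verbatim with $f_{ij}(\theta)$ replaced by $b_{ij}(\eta)$, $Z_{ij}(\theta)$ by $Z_{ij}^{2}(\theta)-\sigma^{2}$, and $\mathcal{O}(f_i(\theta))$ by $\mathcal{O}(b_i(\eta))$, using $(B_{b(\eta)})$ in place of $(B_{f(\theta)})$ and the unbiasedness of $g_{n,2}(\eta)$ from Proposition \ref{Proposition1.1} (which in turn rests on $E_{\eta}[(Z_{ij}^{2}(\theta)-\sigma^{2})\mid\mathcal{G}_{i,j-1}]=0$, a consequence of $(A)$ and the variance part of (\ref{euation1.2})).

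I do not expect a genuine obstacle here: the whole content of the proposition is that $(B_{f(\theta)})$ and $(B_{b(\eta)})$ are precisely the conditions making the replacement in Step \ref{step2.1} expectation-preserving. The only points that deserve care are the bookkeeping ones — confirming the $\mathcal{O}(f_i(\theta))$-measurability of $f_{ij}(\theta)I_{ij}^{cen}$ so it may be taken out of the inner conditional expectation, justifying the exchange of expectation with the random sums via Condition \ref{Condition2.5} and the disjointness $I_{ij}^{cen}I_{ij'}^{cen}=0$, and keeping track of the probability measure ($P_{\theta}$ when $k=1$, $P_{\eta}$ when $k=2$) under which the moments and conditional expectations are computed.
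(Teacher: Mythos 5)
Your proposal is correct and follows essentially the same route as the paper: reduce to the term-by-term identity $E_{\theta}[f_{ij}(\theta)Z_{ij}(\theta)I_{ij}^{cen}]=c_{ij}E_{\theta}[f_{ij}(\theta)I_{ij}^{cen}]$ via Proposition \ref{Proposition1.1}, then establish it by conditioning on $\mathcal{O}(f_i(\theta))$, using the $\mathcal{O}(f_i(\theta))$-measurability of $f_{ij}(\theta)I_{ij}^{cen}$ and condition $(B_{f(\theta)})$, with the symmetric argument for $k=2$. You merely run the chain of equalities in the opposite direction and spell out the interchange of expectation and sum (via Condition \ref{Condition2.5} and the disjointness of the $I_{ij}^{cen}$), which the paper leaves implicit.
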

\begin{proof} Case $k=1$. By (\ref{eauation.1.18}) and Proposition \ref{Proposition1.1}, we must show that, for $i,j=1,\ldots,$
\begin{equation}
E_{\theta}\bigg[f_{ij}(\theta)I_{ij}^{cen}\frac{E_{\theta}[Z_{ij}(\theta)I_{ij}^{cen}]}{E_{\theta}[I_{ij}^{cen}]}\bigg]=E_{\theta}\bigg[f_{ij}(\theta)I_{ij}^{cen}Z_{ij}(\theta)\bigg].\nonumber
\end{equation}
The left hand side is
\begin{eqnarray}
\label{qua.ass.b3}
E_{\theta}\bigg[f_{ij}(\theta)I_{ij}^{cen}E_{\theta}[Z_{ij}(\theta)|I_{ij}^{cen}]\bigg]&=&E_{\theta}\bigg[f_{ij}(\theta)I_{ij}^{cen}E_{\theta}[Z_{ij}(\theta)|\mathcal{O}(f_{i}(\theta))]\bigg]\nonumber\\
&=&E_{\theta}\bigg[f_{ij}(\theta)I_{ij}^{cen}Z_{ij}(\theta)\bigg],\nonumber
\end{eqnarray}
by $(B_{f(\theta)})$ and (\ref{equation1.22}). The proof for $k=2$ is similar, with $(B_{f(\theta)})$ replaced by $(B_{b(\eta)})$.
\end{proof}

We now proceed with Step \ref{step2.2}, embodied in the following lemma.
\begin{lemma}\label{Lemma.1} Let $g_{ij}(\theta)$ be  $\mathcal{G}_{i,j-1}$-measurable, $(i,j=1,\ldots)$. We have, when all integrals exist:
\begin{equation}
\label{equationnew21}
E_{\eta}[g_{ij}(\eta)Z_{ij}(\theta)I_{ij}^{cen}|\mathcal{G}_{i,j-1}]=-E_{\eta}[g_{ij}(\eta)Z_{ij}(\theta)I_{ij}^{obs}|\mathcal{G}_{i,j-1}].
\end{equation}
\begin{equation}
\label{equationnew22}
E_{\eta}[g_{ij}(\eta)(Z_{ij}^{2}(\theta)-\sigma^2)I_{ij}^{cen}|\mathcal{G}_{i,j-1}]=-E_{\eta}[g_{ij}(\eta)(Z_{ij}^{2}(\theta)-\sigma^2)I_{ij}^{obs}|\mathcal{G}_{i,j-1}].
\end{equation}
\end{lemma}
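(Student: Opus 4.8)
The plan is to prove \eqref{equationnew21} and \eqref{equationnew22} by observing that the key identity is simply that $I_{ij}^{obs} + I_{ij}^{cen} = I\{S_{i,j-1} < C_i\}$, and that this latter indicator is $\mathcal{G}_{i,j-1}$-measurable. Indeed, from the decomposition recorded just before \eqref{equation1.16}, $\{S_{i,j-1} < C_i\} = \{S_{ij} \leq C_i\} \cup \{S_{i,j-1} < C_i < S_{ij}\}$, a disjoint union, so $I_{ij}^{obs} + I_{ij}^{cen}$ is the indicator of $\{S_{i,j-1} < C_i\} = \{C_i - S_{i,j-1} > 0\}$, which depends only on $S_{i,j-1}$ and $\{C_i \geq S_{i,j-1}\}$ — both $\mathcal{G}_{i,j-1}$-measurable. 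Hence, writing $W_{ij}$ for either $Z_{ij}(\theta)$ (case \eqref{equationnew21}) or $Z_{ij}^2(\theta) - \sigma^2$ (case \eqref{equationnew22}), it suffices to show
\begin{equation}
E_{\eta}\bigl[g_{ij}(\eta) W_{ij} (I_{ij}^{obs} + I_{ij}^{cen}) \mid \mathcal{G}_{i,j-1}\bigr] = 0,\nonumber
\end{equation}
since then the $I_{ij}^{obs}$ term and the $I_{ij}^{cen}$ term are negatives of one another, which is exactly the assertion.

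To establish this, first pull the $\mathcal{G}_{i,j-1}$-measurable factors out of the conditional expectation: $g_{ij}(\eta)$ is $\mathcal{G}_{i,j-1}$-measurable by hypothesis, and $I_{ij}^{obs} + I_{ij}^{cen} = I\{S_{i,j-1} < C_i\}$ is $\mathcal{G}_{i,j-1}$-measurable by the observation above. Thus the left-hand side equals $g_{ij}(\eta)\, I\{S_{i,j-1} < C_i\}\, E_{\eta}[W_{ij} \mid \mathcal{G}_{i,j-1}]$. Now invoke Assumption $(A)$ together with the model assumptions \eqref{euation1.2}: in case \eqref{equationnew21}, $E_{\eta}[Z_{ij}(\theta) \mid \mathcal{G}_{i,j-1}] = E_{\eta}[Z_{ij}(\theta) \mid \mathcal{F}_{i,j-1}] = 0$ because $\mu_{ij}(\theta)$ and $V_{ij}(\theta)$ are $\mathcal{F}_{i,j-1}$-measurable and $E_{\theta}[Y_{ij} \mid \mathcal{F}_{i,j-1}] = \mu_{ij}(\theta)$; in case \eqref{equationnew22}, $E_{\eta}[Z_{ij}^2(\theta) - \sigma^2 \mid \mathcal{G}_{i,j-1}] = E_{\eta}[Z_{ij}^2(\theta) - \sigma^2 \mid \mathcal{F}_{i,j-1}] = V_{ij}^{-2}(\theta)\,\mathrm{var}_{\eta}[Y_{ij} \mid \mathcal{F}_{i,j-1}] - \sigma^2 = 0$. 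Either way $E_{\eta}[W_{ij} \mid \mathcal{G}_{i,j-1}] = 0$, so the whole expression vanishes, which is what we wanted.

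There is essentially no obstacle here beyond bookkeeping: the content of the lemma is the measurability of $I_{ij}^{obs} + I_{ij}^{cen}$ with respect to $\mathcal{G}_{i,j-1}$ — note that neither indicator is individually $\mathcal{G}_{i,j-1}$-measurable (each is only $\mathcal{G}_{ij}$-measurable), but their sum is — combined with the conditional-mean and conditional-variance identities of Assumption $(A)$. The only point requiring a word of care is the standing hypothesis that all the integrals exist, which justifies the manipulation of conditional expectations and the factoring-out step; this is ensured in our applications by the integrability conditions already in force (e.g.\ $E[Z_{ij}^2(\theta)] < \infty$ and Condition \ref{Condition2.5}).
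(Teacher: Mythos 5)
Your proof is correct and is essentially the paper's own argument: both rest on the decomposition $1=I_{ij}^{obs}+I_{ij}^{cen}+I_{ij}^{out}$ (equivalently, the $\mathcal{G}_{i,j-1}$-measurability of $I_{ij}^{obs}+I_{ij}^{cen}=I\{S_{i,j-1}<C_i\}$) together with Assumption $(A)$ and (\ref{euation1.2}) giving $E_{\eta}[Z_{ij}(\theta)\mid\mathcal{G}_{i,j-1}]=0$ and $E_{\eta}[Z_{ij}^2(\theta)-\sigma^2\mid\mathcal{G}_{i,j-1}]=0$. Your write-up works directly at the level of conditional expectations (the paper phrases the same cancellation after wrapping everything in an outer $E_{\eta}$), but this is a presentational difference, not a different route.
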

 \begin{proof}

Since
\begin{eqnarray}
 0&=&E_{\eta}[g_{ij}(\eta)E_{\theta}[Z_{ij}(\theta)|\mathcal{G}_{i,j-1}]]
=E_{\eta}[g_{ij}(\eta)E_{\theta}[Z_{ij}(\theta)I_{ij}^{obs}|\mathcal{G}_{i,j-1}]]\nonumber\\
&+&E_{\eta}[g_{ij}(\eta)E_{\theta}[Z_{ij}(\theta)I_{ij}^{cen}|\mathcal{G}_{i,j-1}]]
+E_{\eta}[g_{ij}(\eta)E_{\theta}[Z_{ij}(\theta)I_{ij}^{out}|\mathcal{G}_{i,j-1}]],\nonumber
\end{eqnarray}
 by $(A)$, (\ref{euation1.2}) and $1=I_{ij}^{obs}+I_{ij}^{cen}+I_{ij}^{out}$. Since
 \begin{equation}
 E_{\theta}[Z_{ij}(\theta)I_{ij}^{out}|\mathcal{G}_{i,j-1}]=I\{S_{i,j-1}\geq C_{i}\}E_{\theta}[Z_{ij}(\theta)|\mathcal{G}_{i,j-1}]=0,\nonumber
 \end{equation}
by (A) and (\ref{euation1.2}), (\ref{equationnew21}) holds. We prove (\ref{equationnew22}) in a similar manner.
\end{proof}

We now continue with Step \ref{step2.3} and show the asymptotic equivalence of the normalized estimating functions $g_{n,k}^{obs}$ and $\hat{g}_{n,k}^{obs}, (k=1,2)$.

\begin{theorem}\label{theorem.1}Assume that Condition \ref{Condition2.5}, $(B_{f(\theta)})$, $(B_{b(\eta)})$ and $(T2)$ hold. Furthermore, assume that there exists $C>0$, such that, $a.s.$,
\begin{equation}
\label{equationnew23}
\max\bigg \{\sup_{i,j\geq 1}| b_{ij}(\eta_0)|,\sup_{i,j\geq 1}\parallel f_{ij}(\theta_0)\parallel\bigg \} \leq C<\infty.
\end{equation}
Then, $a.s.$,
\begin{equation}
\label{equationnew24}
n^{-1}\parallel g_{n,k}^{obs}(\theta_0)-\hat{g}_{n,k}^{obs}(\theta_0)\parallel\rightarrow 0,\quad
n^{-1}\parallel g_{n,k}(\theta_0)-g_{n,k}^{obs}(\theta_0)\parallel\rightarrow 0 \quad n\rightarrow\infty,
\end{equation}
and therefore, by (\ref{equation1.13}), $ n^{-1}\hat{g}_{n,k}^{obs}(\theta_0)\rightarrow 0 $ when $n\rightarrow\infty, (k=1,2)$.
\end{theorem}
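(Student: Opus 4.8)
The plan is to establish the two a.s.\ limits in (\ref{equationnew24}) for $k=1,2$ separately; the last assertion $n^{-1}\hat{g}_{n,k}^{obs}\to0$ is then immediate from (\ref{equation1.13}) and the triangle inequality $n^{-1}\|\hat{g}_{n,k}^{obs}\|\le n^{-1}\|g_{n,k}-g_{n,k}^{obs}\|+n^{-1}\|g_{n,k}^{obs}-\hat{g}_{n,k}^{obs}\|+n^{-1}\|g_{n,k}\|$. I suppress $\eta_0=(\theta_0,\sigma_0)$ from the notation and use two standing reductions: by $(T2)$ every inner sum $\sum_{j=1}^{\tau_i}$ has at most $m$ terms and, since $I_{ij}^{cen}=0$ for $j>\tau_i$, only indices $j\in\{1,\dots,m\}$ enter the censored part, so the $j$-sums below are finite uniformly in $i$; and by Step \ref{step2.1} any term with $E[I_{ij}^{cen}]=0$ is discarded (there $I_{ij}^{cen}=0$ a.s.), so I may assume $E[I_{1j}^{cen}]>0$ for the finitely many $j$ that remain.

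\textbf{The difference $g_{n,k}-g_{n,k}^{obs}$.} Writing $g_{n,1}$ in the form (\ref{sequation.2.8newf}), subtracting (\ref{eauation.1.18}) and using $1=I_{ij}^{obs}+I_{ij}^{cen}$ on $\{j\le\tau_i\}$ (where $I_{ij}^{out}=0$), one gets $g_{n,1}-g_{n,1}^{obs}=\sum_{i=1}^n W_i^{(1)}$ with $W_i^{(1)}=\sum_{j=1}^{\tau_i} f_{ij}(\theta_0)I_{ij}^{cen}\bigl(Z_{ij}(\theta_0)-E[Z_{ij}(\theta_0)I_{ij}^{cen}]/E[I_{ij}^{cen}]\bigr)$, and likewise $g_{n,2}-g_{n,2}^{obs}=\sum_i W_i^{(2)}$ with $f_{ij}(\theta_0),Z_{ij}(\theta_0)$ replaced by $b_{ij}(\eta_0),Z_{ij}^2(\theta_0)-\sigma_0^2$. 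Each $W_i^{(k)}$ depends on $d_i$ alone, hence the $W_i^{(k)}$ are i.i.d.; they have mean zero, since $g_{n,k}$ and $g_{n,k}^{obs}$ are both unbiased (Propositions \ref{Proposition1.1} and \ref{proposition1.2}), so that $n\,E[W_1^{(k)}]=E[g_{n,k}-g_{n,k}^{obs}]=0$; and they are integrable, because at most one $j\le\tau_i\le m$ has $I_{ij}^{cen}=1$, whence by (\ref{equationnew23}) $\|W_i^{(1)}\|\le C\max_{j\le m}|Z_{ij}(\theta_0)-E[Z_{1j}(\theta_0)I_{1j}^{cen}]/E[I_{1j}^{cen}]|$, which has finite expectation since $\sum_{j\le m}E|Z_{ij}(\theta_0)|<\infty$ (and analogously for $W_i^{(2)}$, using $\sum_{j\le m}E[Z_{ij}^2(\theta_0)]<\infty$), or one may simply invoke Condition \ref{Condition2.5}. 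The strong law of large numbers for i.i.d.\ integrable summands then yields $n^{-1}(g_{n,k}-g_{n,k}^{obs})=n^{-1}\sum_i W_i^{(k)}\to E[W_1^{(k)}]=0$ a.s.

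\textbf{The difference $g_{n,k}^{obs}-\hat{g}_{n,k}^{obs}$.} Subtracting (\ref{equation.1.19}) from (\ref{eauation.1.18}) gives $g_{n,1}^{obs}-\hat{g}_{n,1}^{obs}=\sum_{i=1}^n\sum_{j=1}^{\tau_i}f_{ij}(\theta_0)I_{ij}^{cen}\,\delta_{nj}$ with $\delta_{nj}=E[Z_{1j}(\theta_0)I_{1j}^{cen}]/E[I_{1j}^{cen}]+\bigl(\sum_{k=1}^n Z_{kj}(\theta_0)I_{kj}^{obs}\bigr)/\bigl(\sum_{k=1}^n I_{kj}^{cen}\bigr)$. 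By Lemma \ref{Lemma.1} (Step \ref{step2.2}) the first summand of $\delta_{nj}$ equals $-E[Z_{1j}(\theta_0)I_{1j}^{obs}]/E[I_{1j}^{cen}]$, while by the SLLN applied over the finitely many relevant $j$, $n^{-1}\sum_k Z_{kj}(\theta_0)I_{kj}^{obs}\to E[Z_{1j}(\theta_0)I_{1j}^{obs}]$ and $n^{-1}\sum_k I_{kj}^{cen}\to E[I_{1j}^{cen}]>0$ a.s.; hence $\delta_{nj}$ is well defined for large $n$ and $\delta_{nj}\to0$ a.s. Using (\ref{equationnew23}) and $n^{-1}\sum_i I_{ij}^{cen}\le1$ we get $n^{-1}\|g_{n,1}^{obs}-\hat{g}_{n,1}^{obs}\|\le C\sum_{j=1}^m|\delta_{nj}|\to0$ a.s. The case $k=2$ is the same, except the denominator in (\ref{equation.1.21}) is the leave-one-out sum $\sum_{k\neq i}I_{kj}^{cen}=\sum_k I_{kj}^{cen}-I_{ij}^{cen}$; since $0\le I_{ij}^{cen}\le1$ and $n^{-1}\sum_k I_{kj}^{cen}\to E[I_{1j}^{cen}]>0$, this correction is negligible uniformly in $i\le n$, so the corresponding brackets $\delta_{nij}$ satisfy $\sup_{i\le n}|\delta_{nij}|\to0$ a.s.\ for each $j\le m$, and the same bound applies with $\sum_{j\le m}\sup_i|\delta_{nij}|$.

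The routine ingredients are the $(T2)$-reductions and the two applications of the SLLN; the step that genuinely needs care is matching, in the last paragraph, the \emph{deterministic} coefficient $E[Z_{ij}I_{ij}^{cen}]/E[I_{ij}^{cen}]$ in $g_{n,k}^{obs}$ with (minus) the a.s.\ limit of the \emph{empirical} ratio used in $\hat{g}_{n,k}^{obs}$ — this is precisely what Lemma \ref{Lemma.1} provides — together with the uniformity in $i\le n$ of the negligibility of the leave-one-out correction when $k=2$. After that, (\ref{equationnew23}) closes the estimate, each subject contributing at most one censored index carrying a factor of norm at most $C$.
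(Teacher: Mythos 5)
Your proof is correct and follows essentially the same route as the paper's: the same decomposition into the two differences, the SLLN over the finitely many indices $j\le m$ guaranteed by $(T2)$, the bound (\ref{equationnew23}), and Lemma \ref{Lemma.1} to match the deterministic coefficient in $g_{n,1}^{obs}$ with the limit of the empirical ratio in $\hat{g}_{n,1}^{obs}$. The only cosmetic difference is that for $n^{-1}(g_{n,k}-g_{n,k}^{obs})\to 0$ you deduce $E[W_1^{(k)}]=0$ from the already-proved unbiasedness of both estimating functions rather than recomputing the expectation via $(B_{f(\theta)})$ as the paper does, and you spell out the leave-one-out denominator in the $k=2$ case, which the paper leaves to ``the proof is similar.''
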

\begin{proof} From (\ref{eauation.1.18}--\ref{equation.1.19}), to prove the first assertion in (\ref{equationnew24}), we evaluate, using (\ref{equationnew23}),
\begin{equation}
n^{-1}\parallel g_{n,1}^{obs}(\theta_0)-\hat{g}_{n,1}^{obs}(\theta_0)\parallel
\leq C \sum_{j=1}^{m}\bigg|\frac{E[Z_{1j}(\theta_0)I_{1j}^{obs}]}{E[I_{1j}^{cen}]}-\frac{\sum_{k=1}^{n}Z_{kj}(\theta_{0})I_{kj}^{obs}}{\sum_{k=1 }^{n}I_{kj}^{cen}}\bigg|,\nonumber
\end{equation}

For each $j$ we can apply the SLLN to each sum within the absolute value, and (\ref{equationnew24}) follows for $k=1$.
Now,\\
$ \displaystyle n^{-1}(g_{n,1}(\theta_0)-g_{n,1}^{obs}(\theta_0))=n^{-1}\sum_{i=1}^n\sum_{j=1}^{m}f_{ij}(\theta_0) I_{ij}^{cen} [Z_{ij}(\theta_0)-E[Z_{1j}(\theta_0)I_{1j}^{cen}]/E[I_{1j}^{cen}]
]
$ converges $a.s.$ to
$\displaystyle \sum_{j=1}^{m} [E[f_{1j}(\theta_0)I_{1j}^{cen}Z_{1j}(\theta_0)]-E[Z_{1j}(\theta_0)I_{1j}^{cen}]E[f_{1j}(\theta_0)I_{1j}^{cen}]/E[I_{1j}^{cen}]].$
By (\ref{equation1.22}) and $(B_{f(\theta)})$,
\begin{equation} E[f_{ij}(\theta_0)I_{ij}^{cen}Z_{ij}(\theta_0)|\mathcal{O}(f_{i}(\theta_0))]
=f_{ij}(\theta_0)I_{ij}^{cen}\\E[Z_{1j}(\theta_0)I_{1j}^{cen}]/E[I_{1j}^{cen}].\nonumber
\end{equation}
Taking expectation proves the second part of (\ref{equationnew24}), for $k=1$. The proof for $k=2$ is similar.
\end{proof}
\begin{remark}\label{remarknew2.3}
The results of Sections 2.2--2.3 are still valid if we replace assumption $(A)$ with the weaker assumption $(A0)$.
\end{remark}

\subsection{Connection with other longitudinal studies}

 We consider the left side of (\ref{equation0.2}) under the working independence assumption. The variances at the denominator in (\ref{equation0.2}) are the diagonal matrices
\begin{equation} V_{i}(\beta)=
\mbox{diag}_{1\leq j\leq m_i}\{ \sigma_{ij}(\beta) \}I_{m_i}\mbox{diag}_{1\leq j\leq m_i}\{\sigma_{ij}(\beta)\}=
\mbox{diag}_{1\leq j\leq m_i}\{\sigma_{ij}^2(\beta)\},\nonumber
\end{equation}

 where $I_{m_i}$ is the $m_i\times m_i$ identity matrix and the diagonal entries $\sigma_{ij}(\beta)$  correspond to $V_{ij}(\theta)$ in (\ref{euation1.2}), when $\theta=\beta$ and the conditioning $\sigma$-fields in (\ref{euation1.2}) contain only the entire set of outcomes  and the empty set. Then (\ref{equation0.2})  becomes:
\begin{equation}
\sum_{i=1}^n\bigg(\frac{\partial \mu_{i}(\beta)}{\partial \beta^T}\bigg)^T \mbox{diag}_{1\leq j\leq m_i}\{ \sigma_{ij}^{-2}(\beta)\}
(y_i-\mu_{i}(\beta))=0.\nonumber
\end{equation}
 When (T1) holds, (\ref{equation1.9}) becomes
\begin{equation}
\label{eqnuation.2.5.second}
\sum_{i=1}^{n}\frac{\partial \mu_{i}(\theta)}{\partial \theta}\mbox{diag}_{1\leq j\leq m_i}\{ V_{ij}^{-2}(\theta)\}
(y_i-\mu_{i}(\theta))=0.\nonumber
\end{equation}
Since $\partial \mu_{i}(\theta)/\partial \theta=(\partial \mu_{i}(\theta)/\partial \theta^T)^T$, the similarity of (\ref{equation0.2}) and (\ref{equation1.9}) is now apparent. In this context and in
a longitudinal study, each individual $i$ is observed at random times $S_{ij}, 1\leq j\leq \tau_i,i\geq 1$. The random covariates $x_{ij}$ are available at time $S_{i,j-1}$, while the response variable $y_{ij}$, which satisfies (\ref{euation1.2}), is recorded at time $S_{ij}$.  It could represent, for instance, the result of a blood test, or some characteristic that may require some costly effort to obtain. In this case the recurrent events themselves are the object of the analysis, rather than the gap times. In a study of asthma in children we could be interested in the intensity, duration or type of the asthma episodes, rather than the time gaps between episodes. Note that in this case only the estimating functions (\ref{equation1.9}--\ref{equation1.10}) are needed.

While $\tau_i$ remains as in Definition \ref{definitionS2.1}, the response variable $y_{ij}$ replaces $Y_{ij}$ in (\ref{euation1.2}), (\ref{equation1.7}) and beyond. Therefore our method can be applied to more general longitudinal studies, $e.g.$ when data is collected at random times and censoring occurs. This approach also provides estimators for the overdispersion parameter.

\section{Strong consistency}\label{strong}

\subsection{General results}
We first introduce the necessary definitions.
The numerical radius of a $p\times p$ matrix $A$ is $|||A|||=\sup_{\parallel \lambda\parallel=1}|\lambda^TA\lambda|,\;\lambda\in R^p.$
The Euclidian norm is denoted $\parallel A\parallel $. These  norms are asymptotically equivalent, and, depending on the situation, one can use the most convenient one to prove asymptotic results.

Let $ B_{r}(\theta)=\{\theta'\in R^p: \parallel \theta'-\theta\parallel\leq r\}$.
Let $q_{n}(\eta)=\sum_{i=1}^n u_{i}(\eta)$, where $\eta\in K \subset R^{p+1}$ is a parameter, $u_{i}(\eta)\in R^{p+1}$ are random vectors, which are square integrable and continuously differentiable in $\eta$. Let $\mathcal{D}_{n}(\eta)=-\partial q_{n}(\eta)/(\partial \eta^T)$ be the $(p+1)\times (p+1)$ matrix of derivatives.
The following theorem gives sufficient conditions for the almost sure existence and strong consistency of a sequence of estimators of $\eta_0$ (see Theorem 4.2 of \cite{Balan.etal:2010}).
\begin{theorem}\label{theorem2} Assume that the following conditions hold $a.s.$,

\item[(LN)]  $n^{-1}q_{n}(\eta_0)\rightarrow 0,$   when $ n\rightarrow \infty, $

\item[(S)]  There exist random variables $C_{0}>0$, $r_1>0$, and a random integer $n_1\geq 1$, such that,
 for all $\lambda \in R^{p+1},\parallel\lambda\parallel=1$,\\\\
$(i)$    $\quad  \inf\limits_{n\geq n_1}\inf\limits_{\eta \in B_{r_1}(\eta_0)}|\lambda^T\mathcal{D}_{n}(\eta)\lambda|>0;$\\\\
$(ii)$  $\quad \lim\limits_{r\rightarrow 0}\limsup\limits_{n\rightarrow\infty}\sup\limits_{\eta \in B_{r}(\eta_0)}n^{-1}|||\mathcal{D}_{n}(\eta)-\mathcal{D}_{n}(\eta_{0})|||=0 ;$\\\\
$(iii)$ $\quad \inf\limits_{n\geq n_1} n^{-1}|\lambda^T\mathcal{D}_{n}(\eta_0)\lambda|\geq C_{0}.$\\
Then, there exists a sequence of random vectors $\{\hat{\eta}_{n}\}\subset R^{p+1}$,
and a random integer $n_{0}$, such that:\\
$ P\{q_{n}(\hat\eta_{n})=0,\quad \mbox{for all}\quad n\geq n_{0}\}=1$ and
 $ \hat\eta_{n}\rightarrow\eta_{0} $ $a.s.$,  as  $n\rightarrow \infty.$
\end{theorem}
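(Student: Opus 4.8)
\emph{Proof sketch.} The plan is to treat the claim as essentially deterministic: fix a realisation in the almost sure event on which (LN) and (S) both hold, produce roots of $q_{n}$ inside arbitrarily small balls about $\eta_{0}$ by a fixed‑point argument, and re‑introduce measurability only at the very end. Throughout write $h_{n}=n^{-1}q_{n}$.

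First I would linearise. Each $u_{i}$, hence $q_{n}$, is $C^{1}$, so the fundamental theorem of calculus gives
\[
h_{n}(\eta)-h_{n}(\eta_{0})=-\,\overline{\mathcal{D}}_{n}(\eta)\,(\eta-\eta_{0}),\qquad \overline{\mathcal{D}}_{n}(\eta):=\int_{0}^{1}n^{-1}\mathcal{D}_{n}\big(\eta_{0}+t(\eta-\eta_{0})\big)\,dt,
\]
and, pairing with $\eta-\eta_{0}$ and writing $\lambda=(\eta-\eta_{0})/\parallel\eta-\eta_{0}\parallel$,
\[
(\eta-\eta_{0})^{T}h_{n}(\eta)=\parallel\eta-\eta_{0}\parallel\,\lambda^{T}h_{n}(\eta_{0})-\parallel\eta-\eta_{0}\parallel^{2}\,\lambda^{T}\overline{\mathcal{D}}_{n}(\eta)\,\lambda.
\]
Next I would pin down the quadratic form uniformly in $n$. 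Since the numerical radius is subadditive and dominates integrals, (S)(ii) provides, for each $\varepsilon>0$, a radius $r_{\varepsilon}\le r_{1}$ and an integer $N_{\varepsilon}$ with $|||\overline{\mathcal{D}}_{n}(\eta)-n^{-1}\mathcal{D}_{n}(\eta_{0})|||<\varepsilon$ for all $n\ge N_{\varepsilon}$ and $\eta\in B_{r_{\varepsilon}}(\eta_{0})$ (using that the whole segment $\eta_{0}+t(\eta-\eta_{0})$, $t\in[0,1]$, stays in that ball). Taking $\varepsilon=C_{0}/2$ and combining with (S)(iii) gives $|\lambda^{T}\overline{\mathcal{D}}_{n}(\eta)\lambda|\ge C_{0}/2$ there, for every unit $\lambda$ and $n\ge\max(n_{1},N_{\varepsilon})$; and by (S)(i) the continuous function $(\eta,\lambda)\mapsto\lambda^{T}\mathcal{D}_{n}(\eta)\lambda$ is nonvanishing on the connected set $B_{r_{1}}(\eta_{0})\times\{\parallel\lambda\parallel=1\}$, hence of constant sign there for each fixed $n$, and so then is $\lambda^{T}\overline{\mathcal{D}}_{n}(\eta)\lambda$.

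Now fix $r\le r_{\varepsilon}$. By (LN), $\parallel h_{n}(\eta_{0})\parallel\to0$, so for $n$ large $\parallel h_{n}(\eta_{0})\parallel<rC_{0}/2$; the second displayed identity then shows that on $\parallel\eta-\eta_{0}\parallel=r$ the quantity $(\eta-\eta_{0})^{T}h_{n}(\eta)$ is bounded away from $0$ with the sign opposite to that of $\lambda^{T}\overline{\mathcal{D}}_{n}(\eta)\lambda$, uniformly over the sphere. The standard topological lemma — a consequence of Brouwer's fixed point theorem, that a continuous map on $\overline{B_{r}(\eta_{0})}$ whose radial component keeps one strict sign on the boundary must vanish in the open ball — then yields a root $\hat\eta_{n}\in B_{r}(\eta_{0})$ of $h_{n}$, equivalently of $q_{n}$. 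Running this along $r=1/m$, $m\ge m_{0}$, with the associated thresholds $N(1/m)$ taken increasing, and for each $n$ selecting a root in the smallest currently available ball, produces a sequence with $q_{n}(\hat\eta_{n})=0$ for all $n$ past some index and $\parallel\hat\eta_{n}-\eta_{0}\parallel\to0$. Because $\omega\mapsto\{\eta:q_{n}(\eta)=0\}$ is a closed‑valued measurable multifunction, a measurable selection theorem upgrades this to a genuine sequence of random vectors and a random index $n_{0}$, which is the assertion.

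The delicate point — and essentially the only one — is the uniform‑in‑$n$ bookkeeping feeding the topological step: (S)(ii)--(iii) are phrased with $\limsup_{n}$ and $\inf_{n\ge n_{1}}$, and (LN) annihilates the linear term, precisely so that a single radius $r$ can be made to work for all large $n$ simultaneously; without this the fixed‑point argument would not close. Note that one does not need the sign of $\lambda^{T}\mathcal{D}_{n}(\eta)\lambda$ to agree across different $n$, since the topological step is applied for each $n$ separately, and connectedness supplies constancy of the sign only over $(\eta,\lambda)$ for fixed $n$. In the concrete settings of this paper $n^{-1}\mathcal{D}_{n}(\eta_{0})$ converges a.s., by the SLLN, to a positive definite matrix, which both makes that sign eventually positive and supplies (S)(iii).
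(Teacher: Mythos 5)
Your argument is correct and is essentially the proof the paper defers to: the paper merely cites Theorem 4.2 of \cite{Balan.etal:2010}, whose proof is exactly this linearisation via the averaged derivative $\overline{\mathcal{D}}_{n}$, the uniform-in-$n$ control of the quadratic form from $(S)(ii)$--$(iii)$, the sign-constancy-by-connectedness step, and the Brouwer-type radial-sign lemma on shrinking balls, with $(LN)$ annihilating the linear term. Your reconstruction, including the final diagonal/measurable-selection step producing the random index $n_{0}$, matches that route.
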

\begin{proof} The proof is identical to the proof of Theorem 4.2 in \cite{Balan.etal:2010}, once $\alpha_{n}^{1/2+\delta}$ is replaced by $n$. Conditions $S(i)$ and $S(iii)$ ensure nonsingularity and $S(ii)$ the equicontinuity of the derivatives at $\eta_0$.
\end{proof}

 \begin{remark}\label{remark.2}  Condition $(LN)$ replaces the unbiasedness of $q_{n}(\eta)$ at $\eta_0,n\geq 1$. It is satisfied if we can find an estimating function $q_{n,0}(\eta)$, for which $(LN)$ holds at $\eta_0$, and such that,
$
 n^{-1}(q_{n}(\eta_0)-q_{n,0}(\eta_0))\rightarrow 0$ $a.s.$ as $ n\rightarrow \infty.$
To apply Theorem \ref{theorem2}, we could take $q_n(\eta)=(\hat{g}_{n,1}^{obs}(\eta),\hat{g}_{n,2}^{obs}(\eta))$ from (\ref{equation.1.19}), (\ref{equation.1.21}) and $q_{n,0}(\eta)=(g_{n,1}^{obs}(\eta),g_{n,2}^{obs}(\eta))$ from (\ref{eauation.1.18}), (\ref{equation.1.20}) respectively, to obtain the  convergence results above and $\hat{\eta}_{n}$, such that $q_{n}(\hat{\eta}_{n})=0$  for $ n\geq n_0$ and $\hat\eta_n\rightarrow \eta_0$ $a.s.$
To check condition $(LN)$, we can use (\ref{equationnew24}) from Theorem \ref{theorem.1} and the results of Proposition \ref{Proposition1.1}.
\end{remark}

 We first find a specific sequence of estimators $\hat{\eta}_{n}$ of $\eta_0$, which is consistent and has a well-defined asymptotic distribution. This sequence is obtained in three steps.

\begin{step}  Solve the system $\hat{g}_{n,1}^{obs}(\theta)=0$, to obtain $\hat{\theta}_{n},$ with $ \hat{\theta}_{n}\rightarrow \theta_0$ $a.s.$
\end{step}

\begin{step}  Solve $\hat{g}_{n,2}^{obs}(\hat{\theta}_{n}, \sigma^2)=0$, to  obtain
$\hat{\sigma}^{2}_{n}=\sigma^{2}_{n}(\hat{\theta}_{n})\rightarrow \sigma^{2}_{0}$ $a.s.$
\end{step}

 \begin{step}Put together  $(\hat{\theta}_{n},\hat{\sigma}_{n}^{2})$, to obtain a sequence of strongly  consistent estimators of $\eta_0$.
  \end{step}

 If only the main regression parameter $\theta$ is of interest, then one can deal with a simpler, self-contained version of Theorem \ref{theorem2}, while $\sigma^2$, along with $\hat{g}_{n,2}^{obs}(\eta)$, can be completely ignored.

 In order to apply Theorem \ref{theorem2}, conditions on the moduli of continuity of functions related to the derivative in Theorem \ref{theorem2} are needed. In \cite{Balan.etal:2010}, only the analytical properties of $\mu$ are needed, as $V_{ij}(\theta)$ is a function of $\mu_{ij}(\theta)$ and $c_{ij}(\theta)$ is a linear function there. As for $b_{ij}(\eta)$, it is not present since the overdispersion parameter is not considered there.

\subsection{Asymptotic results for derivatives}
Let
$\displaystyle \quad \;\;h_{ij}(\theta)=\dot\mu(c_{ij}^{T}(\theta)x_{ij})\partial c_{ij}^T(\theta)/{\partial \theta},\;
 \mbox{or}\;\;  h=\dot{\mu}\partial{c^T}/\partial{\theta},$\\\\

$\displaystyle \quad \quad \quad \delta_{r,n}(h)=\sup_{\theta',\theta\in B_{r}(\theta_0)}\max_{1\leq i\leq n,1\leq j\leq \tau_i}V_{ij}^{-1}(\theta')\parallel
 h_{ij}(\theta)-h_{ij}(\theta_0)\parallel,\;$\\\\

$\displaystyle\quad \quad \quad b_{r,n}(h)=\sup_{\theta', \theta \in B_{r}(\theta_0)}\max_{1\leq i\leq n, 1\leq j\leq \tau_i}V_{ij}^{-1}(\theta')\parallel h_{ij}(\theta)\parallel.$ \\ \\

Let $X_i$ denote the $\tau_i\times q$ matrix of covariates, with $jh$ entry $(X_i)_{jh}=x_{ij,h}, (j=1,\ldots,\tau_i; h=1,\ldots,q)$ and $(i=1,\ldots),$ where $q$ is the maximum number of covariates over  all individuals, assumed to be finite. Note that \\$\displaystyle \sum_{j=1}^{\tau_i}\parallel x_{ij}\parallel^2=tr(X_{i}^TX_i)$. Let $\displaystyle H_n(\theta)=\sum_{i=1}^n\sum_{j=1}^{\tau_i}f_{ij}(\theta)f_{ij}^T(\theta)I_{ij}^{obs}$. The following result illustrates condition $S(ii)$ on the leading term in the decomposition of $\mathcal{D}_{n}(\eta)$ (see \cite{Balan.etal:2010}).
\begin{lemma}
\label{lemma2}
Assume that
 \begin{equation}E[tr(X_{1}^TX_1)]<\infty,\;\;\lim_{r\rightarrow 0}\limsup_{n\rightarrow \infty}\delta_{r,n}(h)=0 \;\mbox{and}\; \;\lim_{r\rightarrow 0}\limsup_{n\rightarrow \infty}b_{r,n}(h)<\infty.\nonumber
 \end{equation}
  Then
\begin{equation}
\lim_{r\rightarrow 0}\limsup_{n\rightarrow \infty} \sup_{\theta \in B_{r}(\theta_0)}n^{-1}|||H_{n}(\theta)-H_{n}(\theta_0)|||=0.\nonumber
\end{equation}
\end{lemma}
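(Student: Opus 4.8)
The plan is to control the numerical radius $|||H_n(\theta)-H_n(\theta_0)|||$ by first reducing it to the Euclidean norm (the two being comparable, as noted in the text) and then expanding $f_{ij}(\theta)f_{ij}^T(\theta)-f_{ij}(\theta_0)f_{ij}^T(\theta_0)$ as a telescoping difference $f_{ij}(\theta)[f_{ij}(\theta)-f_{ij}(\theta_0)]^T + [f_{ij}(\theta)-f_{ij}(\theta_0)]f_{ij}^T(\theta_0)$. Recalling from \eqref{equation1.7} that $f_{ij}(\theta)=V_{ij}^{-1}(\theta)\,\partial\mu_{ij}(\theta)/\partial\theta$, and that in Example~\ref{example1.5} we have $\mu_{ij}(\theta)=\mu(c_{ij}^T(\theta)x_{ij})$ so that $\partial\mu_{ij}(\theta)/\partial\theta = \dot\mu(c_{ij}^T(\theta)x_{ij})\,\partial c_{ij}^T(\theta)/\partial\theta = h_{ij}(\theta)$, each factor of the form $V_{ij}^{-1}(\theta')\|h_{ij}(\theta)\|$ or $V_{ij}^{-1}(\theta')\|h_{ij}(\theta)-h_{ij}(\theta_0)\|$ is exactly what $b_{r,n}(h)$ and $\delta_{r,n}(h)$ bound, uniformly over $1\le i\le n$, $1\le j\le\tau_i$, and $\theta',\theta\in B_r(\theta_0)$. (One must be slightly careful that $f_{ij}$ uses $V_{ij}^{-1}$ at the same $\theta$ as $h_{ij}$; writing $f_{ij}(\theta)-f_{ij}(\theta_0) = V_{ij}^{-1}(\theta)(h_{ij}(\theta)-h_{ij}(\theta_0)) + (V_{ij}^{-1}(\theta)-V_{ij}^{-1}(\theta_0))h_{ij}(\theta_0)$ splits this cleanly, the second piece being handled by continuity of $V_{ij}^{-1}$ together with $b_{r,n}(h)$, or one simply folds the $V_{ij}^{-1}$ variation into the definition of $\delta_{r,n}$ as the text's display already does by taking the sup over $\theta'$ separately.)

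The next step is to assemble these pointwise bounds into a bound on the sum. Using $\|f_{ij}(\theta)f_{ij}^T(\theta)-f_{ij}(\theta_0)f_{ij}^T(\theta_0)\| \le (\|f_{ij}(\theta)\|+\|f_{ij}(\theta_0)\|)\,\|f_{ij}(\theta)-f_{ij}(\theta_0)\|$ and the fact that each $\|f_{ij}\|$ is dominated (up to the $V_{ij}^{-1}$ factor and continuity) by $b_{r,n}(h)$ while $\|f_{ij}(\theta)-f_{ij}(\theta_0)\|\le \delta_{r,n}(h)$ (again absorbing the $V_{ij}^{-1}$ factor), we get
\begin{equation}
\sup_{\theta\in B_r(\theta_0)} n^{-1}\|H_n(\theta)-H_n(\theta_0)\| \le C\,b_{r,n}(h)\,\delta_{r,n}(h)\,\cdot\, n^{-1}\sum_{i=1}^n \sum_{j=1}^{\tau_i}\|x_{ij}\|^2,\nonumber
\end{equation}
where the $\|x_{ij}\|^2$ appears because $h_{ij}$ carries a factor $\partial c_{ij}^T/\partial\theta$ which, in the linear-in-the-relevant-coordinates structure of Example~\ref{example1.4}/\ref{example1.5}, is bounded by a constant times $\|x_{ij}\|$ after the $\dot\mu$ and $\partial c/\partial\theta$ pieces are absorbed into $b_{r,n}(h)$ and $\delta_{r,n}(h)$ — more precisely, one factors $h_{ij}(\theta)=\dot\mu\cdot\partial c_{ij}^T/\partial\theta$ and keeps the $\|x_{ij}\|$-type growth explicit so that what remains is genuinely uniformly bounded and is captured by $b_{r,n}$, $\delta_{r,n}$. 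Then $n^{-1}\sum_{i=1}^n\sum_{j=1}^{\tau_i}\|x_{ij}\|^2 = n^{-1}\sum_{i=1}^n \operatorname{tr}(X_i^TX_i)$ converges a.s.\ to the finite constant $E[\operatorname{tr}(X_1^TX_1)]$ by the SLLN for i.i.d.\ integrable variables (integrability is the first hypothesis of the lemma).

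Finally, taking $\limsup_{n\to\infty}$ and then $\lim_{r\to 0}$: the SLLN factor tends to the finite constant $E[\operatorname{tr}(X_1^TX_1)]$, $\limsup_n b_{r,n}(h)$ stays bounded by hypothesis, and $\limsup_n \delta_{r,n}(h)\to 0$ as $r\to 0$ by hypothesis, so the product vanishes; passing back from the Euclidean norm to the numerical radius $|||\cdot|||$ only changes constants. The main obstacle I anticipate is bookkeeping rather than deep: carefully factoring $f_{ij}f_{ij}^T$ differences so that every surviving quantity is genuinely one of $b_{r,n}(h)$, $\delta_{r,n}(h)$, or a power of $\|x_{ij}\|$ summable by the SLLN — in particular making sure the $V_{ij}^{-1}$ prefactors (evaluated at a possibly different $\theta'$) are handled consistently with the way $\delta_{r,n}(h)$ and $b_{r,n}(h)$ are defined in the text, and that the product-rule cross terms from differentiating $\mu(c^T x)$ do not introduce any factor not already controlled. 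No new analytic input beyond the three stated hypotheses and the SLLN is required.
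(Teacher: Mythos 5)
Your proof is correct and follows essentially the same route as the paper's: expand $f_{ij}(\theta)f_{ij}^T(\theta)-f_{ij}(\theta_0)f_{ij}^T(\theta_0)$ into difference and cross terms, bound each factor uniformly by $\delta_{r,n}(h)\|x_{ij}\|$ or $b_{r,n}(h)\|x_{ij}\|$, and let the SLLN applied to $n^{-1}\sum_i \operatorname{tr}(X_i^TX_i)$ supply the finite limiting factor before sending $r\to 0$. Your explicit splitting of $f_{ij}(\theta)-f_{ij}(\theta_0)$ to account for the $V_{ij}^{-1}$ prefactor being evaluated at a different parameter value is in fact slightly more careful than the paper's own display, which absorbs that variation implicitly (it is addressed separately in Appendix A via condition (\ref{A4}) and Remark \ref{remarkA.1}).
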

\begin{proof}
Let $\theta\in B_{r}(\theta_0)$. Using the Cauchy-Schwarz inequality for each term, with $\displaystyle\lambda\in R^p,$ $ \displaystyle \parallel \lambda\parallel=1,$
\begin{eqnarray}
   |||H_{n}(\theta)-H_{n}(\theta_0)|||&\leq& \sum_{i=1}^n\sum_{j=1}^{\tau_i}\lambda^T(f_{ij}(\theta)-f_{ij}(\theta_0))(f_{ij}(\theta)-f_{ij}(\theta_0))^T\lambda I_{ij}^{obs}\nonumber\\
   &+&2\mid \sum_{i=1}^n\sum_{j=1}^{\tau_i}\lambda^T f_{ij}(\theta_0)(f_{ij}(\theta)-f_{ij}(\theta_0))^T\lambda I_{ij}^{obs}\mid\nonumber\\
  &\leq &\delta_{r,n}^2(h)\sum_{i=1}^n\sum_{j=1}^{\tau_i}\lambda^T x_{ij}x_{ij}^T\lambda\nonumber\\
  &+&2\sum_{i=1}^n\sum_{j=1}^{\tau_i}
   b_{r,n}(h)\delta_{r,n}(h)\parallel x_{ij}\parallel^2\nonumber\\
   &\leq &\delta_{r,n}(h)\sum_{i=1}^n tr(X_{i}^T X_i)(\delta_{r,n}(h)+2 b_{r,n}(h)).\nonumber
\end{eqnarray}
The conclusion follows from the hypotheses and the SLLN.
\end{proof}
We discuss the conditions of Lemma \ref{lemma2} at the end of  Appendix \ref{app}.

\section{The asymptotic normality of estimators}\label{asyptotic}
\subsection{The asymptotic normality of $\hat{\theta}_n$}
In this section we present three results. The first shows that $n^{-1/2}g_{n,1}^{obs}(\theta_0)$ is asymptotically normal with mean zero and a nonsingular covariance matrix $\Sigma$, denoted $N(0,\Sigma)$. The second result shows that the asymptotic distribution of $n^{-1/2}\hat{g}_{n,1}^{obs}(\theta_0)$ is also $N(0,\Sigma)$. Finally, we show that $\sqrt{n}(\hat{\theta}_{n}-\theta_0)$ is asymptotically normal with mean 0 and a covariance matrix of a sandwich form, $\mathcal{D}_{1}^{-1}(\theta_0)\Sigma(\mathcal{D}_{1}^{-1}(\theta_0))^T,$ with $\mathcal{D}_{1}(\theta_0)$ nonrandom and nonsingular, where, $a.s.$,
 \begin{equation}
\label{quation.4.1.1.0001}\mathcal{D}_{1}(\theta_0)=\lim_{n\rightarrow \infty}n^{-1}\mathcal{D}_{n,1}(\theta_0),\quad \mbox{ and}\quad \mathcal{D}_{n,1}(\theta_0)=-\frac{\partial \hat{g}_{n,1}^{obs}(\theta_0)}{\partial\theta^T},\quad n\geq1.
\end{equation}
Proofs of all stated results are in Appendix \ref{appb.2}.

We introduce the notations:
 \begin{equation} g_{n,1}^{obs}(\theta_{0})=\sum_{i=1}^{n}u_{i,1}, \quad u_{i,1}=\sum_{j=1}^{m}(a_{ij}^{(1)}I_{ij}^{obs}-b_{ij}^{(1)}I_{ij}^{cen}),\nonumber
 \end{equation}
where
 \begin{equation} a_{ij}^{(1)}=f_{ij}(\theta_0)Z_{ij}(\theta_0), \quad b_{ij}^{(1)}=f_{ij}(\theta_0)E_{\theta_0}[Z_{1j}(\theta_0)I_{1j}^{obs}]/
E_{\theta_0}[I_{1j}^{cen}].\nonumber
\end{equation}
Similarly, $\displaystyle g_{n,2}^{obs}(\eta_{0})=\sum_{i=1}^nu_{i,2}.$

In the course of proving the following result, we calculate all entries of $\Sigma$.
\begin{theorem}\label{equ.th.4.1.1} Assume that $(T2)$  and the assumptions of Proposition \ref{proposition1.2} hold, and that
\begin{equation}
\label{equ.con.4.1.1.f}
\max_{1\leq j\leq m}E_{\theta_0}\bigg[\parallel f_{1j}(\theta_0)\parallel^2Z_{1j}^{2}(\theta_0)\bigg]<\infty,\;\;
\max_{1\leq j\leq m}E_{\theta_0}\bigg[\parallel f_{1j}(\theta_0)\parallel^2\bigg]<\infty.
\end{equation}
  Then
$\displaystyle
n^{-1/2}g_{n,1}^{obs}(\theta_{0})\;\rightarrow \; N(0, \Sigma)\;\mbox{ in distribution, as}\;\; n\rightarrow \infty,\nonumber
$
where $N(0, \Sigma)$ is a $p$-dimensional random vector normally distributed with zero mean and covariance matrix $\Sigma$. The $(kl)$-entry of $\Sigma$ is:

\begin{eqnarray}
\label{eqn.n.4.1.4.1.26}
&&\sum_{j,j'=1}^{m}E_{\theta_0}[a_{1j,k}^{(1)}a_{1j',l}^{(1)}I_{\max\{j,j'\}}^{obs}]\nonumber
-\sum_{j=1,j'>j}^{m}E_{\theta_0}[a_{1j,k}^{(1)}I_{1j'}^{cen}b_{1j',l}^{(1)}]\nonumber\\
 &-&\sum_{j'=1,j>j'}^m E_{\theta_0}[a_{1j',l}^{(1)}I_{1j}^{cen}b_{1j,k}^{(1)}]+\sum_{j=1}^{m}E_{\theta_0}[b_{1j,k}^{(1)}b_{1j,l}^{(1)}I_{1j}^{cen}].
\end{eqnarray}
\end{theorem}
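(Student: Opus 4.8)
The plan is to observe that $g_{n,1}^{obs}(\theta_0)=\sum_{i=1}^{n}u_{i,1}$ is a sum of independent and identically distributed, mean-zero, square-integrable random vectors, apply the multivariate Lindeberg--L\'evy central limit theorem, and then identify $\Sigma$ as the common covariance matrix $E_{\theta_0}[u_{1,1}u_{1,1}^{T}]$, whose entries are read off by expanding this product and using the deterministic algebra of the indicators $I_{ij}^{obs}$, $I_{ij}^{cen}$. Before that I would record a few reductions. Each $u_{i,1}$ is a measurable function of $d_i$ alone, since $a_{ij}^{(1)}$, $f_{ij}(\theta_0)$ depend only on $d_i$ and the scalars $E_{\theta_0}[Z_{1j}(\theta_0)I_{1j}^{obs}]/E_{\theta_0}[I_{1j}^{cen}]$ in $b_{ij}^{(1)}$ are deterministic; hence the $u_{i,1}$ are i.i.d. By Definition \ref{definitionS2.1}, for $j>\tau_i$ one has $S_{i,j-1}\ge S_{i,\tau_i}\ge C_i$, so $I_{ij}^{obs}=I_{ij}^{cen}=0$; thus under $(T2)$ (Condition \ref{Condition2.4}) the sum defining $u_{i,1}$ is finite and coincides with the $\sum_{j=1}^{\tau_i}$ form of $g_{n,1}^{obs}$. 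Finally, the constant multiplying $f_{ij}(\theta_0)$ in $b_{ij}^{(1)}$ is finite because Step \ref{step2.1} has already discarded the indices $j$ with $E_{\theta_0}[I_{1j}^{cen}]=0$.

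Next I would verify the two hypotheses of the CLT. For the mean, Proposition \ref{proposition1.2} (with $n=1$) gives $E_{\theta_0}[u_{1,1}]=E_{\theta_0}[g_{1,1}^{obs}(\theta_0)]=0$. For square integrability, $a_{1j}^{(1)}=f_{1j}(\theta_0)Z_{1j}(\theta_0)$ is square integrable by the first bound in (\ref{equ.con.4.1.1.f}), and $b_{1j}^{(1)}$ is a constant multiple of $f_{1j}(\theta_0)$, square integrable by the second bound in (\ref{equ.con.4.1.1.f}) together with the preceding remark; a sum of at most $m$ such vectors is again square integrable, so $\Sigma:=E_{\theta_0}[u_{1,1}u_{1,1}^{T}]$ exists and is finite, and $n^{-1/2}g_{n,1}^{obs}(\theta_0)\to N(0,\Sigma)$ in distribution. (The nonsingularity of $\Sigma$ referred to in the surrounding discussion is a separate matter, treated together with $\mathcal{D}_1(\theta_0)$, and is not part of this statement.)

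To compute the $(kl)$-entry I would expand
\[
u_{1,1,k}u_{1,1,l}=\sum_{j,j'=1}^{m}\bigl(a_{1j,k}^{(1)}I_{1j}^{obs}-b_{1j,k}^{(1)}I_{1j}^{cen}\bigr)\bigl(a_{1j',l}^{(1)}I_{1j'}^{obs}-b_{1j',l}^{(1)}I_{1j'}^{cen}\bigr)
\]
and use that $0=S_{i0}<S_{i1}<\cdots$ is strictly increasing, which forces $I_{1j}^{obs}I_{1j'}^{obs}=I_{1,\max\{j,j'\}}^{obs}$, $I_{1j}^{cen}I_{1j'}^{cen}=0$ for $j\ne j'$ and $=I_{1j}^{cen}$ for $j=j'$, and $I_{1j}^{obs}I_{1j'}^{cen}=I_{1j'}^{cen}$ when $j<j'$ while $I_{1j}^{obs}I_{1j'}^{cen}=0$ when $j\ge j'$. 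Substituting these identities and taking expectations term by term collapses the four families of products into exactly the four sums displayed in (\ref{eqn.n.4.1.4.1.26}).

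The argument carries no deep analytic content; the part that needs genuine care is the indicator bookkeeping in the last step — in particular getting the index ranges of the two cross sums right — together with the routine checks flagged in the first paragraph (that the truncated representation of $u_{i,1}$ is legitimate and that the constants entering $b_{ij}^{(1)}$ are finite). I expect those to be the only places where a careless version of the proof could go astray.
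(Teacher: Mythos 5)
Your proposal is correct and follows essentially the same route as the paper: apply the multivariate i.i.d.\ central limit theorem to $g_{n,1}^{obs}(\theta_0)=\sum_{i=1}^n u_{i,1}$, use Proposition \ref{proposition1.2} for $E_{\theta_0}[u_{1,1}]=0$ and (\ref{equ.con.4.1.1.f}) for square integrability, and read off the entries of $\Sigma=E_{\theta_0}[u_{1,1}u_{1,1}^T]$ from the same indicator identities the paper records. Your extra preliminary checks (that each $u_{i,1}$ depends only on $d_i$, that the sum truncates at $m$ under $(T2)$, and that the constants in $b_{ij}^{(1)}$ are finite after discarding indices with $E_{\theta_0}[I_{1j}^{cen}]=0$) are sound and merely make explicit what the paper leaves implicit.
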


\begin{proposition}\label{Pro.4.1.2} Under the conditions of Theorem \ref{equ.th.4.1.1}, $n^{-1/2}\hat{g}_{n,1}^{obs}(\theta_{0})$  and  $n^{-1/2}g_{n,1}^{obs}(\theta_{0}) $   have the same asymptotic distribution $N(0,\Sigma)$.
\end{proposition}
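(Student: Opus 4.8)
The plan is to deduce Proposition \ref{Pro.4.1.2} from Theorem \ref{equ.th.4.1.1} by a converging-together (Slutsky) argument. Since Theorem \ref{equ.th.4.1.1} already gives $n^{-1/2}g_{n,1}^{obs}(\theta_0)\to N(0,\Sigma)$ in distribution, it suffices to show that
\begin{equation}
D_n:=n^{-1/2}\bigl(\hat g_{n,1}^{obs}(\theta_0)-g_{n,1}^{obs}(\theta_0)\bigr)\longrightarrow 0\qquad\text{in probability};\nonumber
\end{equation}
equivalently, one establishes the asymptotic linearisation $n^{-1/2}\hat g_{n,1}^{obs}(\theta_0)=n^{-1/2}\sum_{i=1}^{n}u_{i,1}+o_P(1)$, after which the classical CLT for the i.i.d.\ summands $u_{i,1}$ — the same one that produces $N(0,\Sigma)$ in Theorem \ref{equ.th.4.1.1} — gives the claim.

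To make $D_n$ explicit I would first invoke $(T2)$, so that the inner sums may be taken over $j=1,\dots,m$ (both $I_{ij}^{obs}$ and $I_{ij}^{cen}$ vanish for $j>\tau_i$), and then use Lemma \ref{Lemma.1} (Step \ref{step2.2}) to rewrite the imputation term of $g_{n,1}^{obs}$ through $E_{\theta_0}[Z_{1j}(\theta_0)I_{1j}^{obs}]$; this is exactly the reduction that yields the vector $b_{ij}^{(1)}$ of the statement. The two estimating functions then differ only through the ratio used for imputation,
\begin{equation}
\hat g_{n,1}^{obs}(\theta_0)-g_{n,1}^{obs}(\theta_0)=-\sum_{j=1}^{m}\bigl(\hat\rho_{n,j}-\rho_j\bigr)\sum_{i=1}^{n}f_{ij}(\theta_0)I_{ij}^{cen},\nonumber
\end{equation}
where $\rho_j=E_{\theta_0}[Z_{1j}(\theta_0)I_{1j}^{obs}]/E_{\theta_0}[I_{1j}^{cen}]$ (terms with $E_{\theta_0}[I_{1j}^{cen}]=0$ discarded in Step \ref{step2.1}) and $\hat\rho_{n,j}$ is the empirical ratio in (\ref{equation.1.19}). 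Since the $j$-sum is finite, it is enough to control each summand.

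Fix $j$. By the SLLN the denominator $n^{-1}\sum_k I_{kj}^{cen}\to E_{\theta_0}[I_{1j}^{cen}]>0$ a.s., so a first-order (delta-method) expansion gives the linearisation $\sqrt n\,(\hat\rho_{n,j}-\rho_j)=\bigl(E_{\theta_0}[I_{1j}^{cen}]\bigr)^{-1}n^{-1/2}\sum_k\bigl(Z_{kj}(\theta_0)I_{kj}^{obs}-\rho_j I_{kj}^{cen}\bigr)+o_P(1)$, which is $O_P(1)$; at the same time the uniform bound (\ref{equationnew23}) and the SLLN give $n^{-1}\sum_i f_{ij}(\theta_0)I_{ij}^{cen}\to E_{\theta_0}[f_{1j}(\theta_0)I_{1j}^{cen}]$ a.s. The $j$-th contribution to $D_n$ is then $-\sqrt n\,(\hat\rho_{n,j}-\rho_j)\cdot n^{-1}\sum_i f_{ij}(\theta_0)I_{ij}^{cen}$, and the crux is to show that this product is in fact $o_P(1)$ even though each factor is only $O_P(1)$.

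This is the step I expect to require the most care, and it should rest on the cancellation deliberately built into the imputation rule. Expanding $\bigl(\sum_i f_{ij}I_{ij}^{cen}\bigr)\bigl(\sum_k(Z_{kj}I_{kj}^{obs}-\rho_j I_{kj}^{cen})\bigr)$ and using $I_{ij}^{obs}I_{ij}^{cen}=0$, the diagonal ($i=k$) part equals $-\rho_j\sum_i f_{ij}I_{ij}^{cen}$ and cancels the contribution inherited from $g_{n,1}^{obs}$; what survives is an off-diagonal two-sample sum whose leading linear part involves $E_{\theta_0}[f_{1j}(\theta_0)I_{1j}^{cen}]$, and one must argue that after normalisation by $n^{-1/2}$ its contribution disappears. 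This is where the hypotheses of Proposition \ref{proposition1.2} — in particular $(B_{f(\theta)})$, entering through Lemma \ref{Lemma.1} — and the boundedness (\ref{equationnew23}) are used to control the relevant conditional means; it is arguably cleanest to run the argument with the leave-one-out normalisation $\sum_{k\neq i}I_{kj}^{cen}$ (as in (\ref{equation.1.21})), which renders $f_{ij}I_{ij}^{cen}$ independent of the corresponding ratio. Once $D_n\to_P 0$ is established, Slutsky's lemma completes the proof, and the identical scheme — now starting from $g_{n,2}^{obs}(\eta_0)=\sum_i u_{i,2}$ — disposes of $\hat g_{n,2}^{obs}$.
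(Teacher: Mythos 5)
Your overall scheme is the paper's: reduce to showing $n^{-1/2}(\hat g_{n,1}^{obs}(\theta_0)-g_{n,1}^{obs}(\theta_0))\to 0$ in probability, write this difference as $-\sum_{j=1}^{m}\bigl(\sum_{i=1}^{n}f_{ij}(\theta_0)I_{ij}^{cen}\bigr)\,\delta_{n,j}$ with $\delta_{n,j}=\hat\rho_{n,j}-\rho_j$, and handle each of the finitely many $j$ separately. But your proof stops exactly at the step you yourself flag as the crux, and that step is a genuine gap, not a technicality. In your own bookkeeping the $j$th term is $\bigl(n^{-1}\sum_i f_{ij}(\theta_0)I_{ij}^{cen}\bigr)\cdot\sqrt n\,\delta_{n,j}$; the first factor converges a.s.\ to $E_{\theta_0}[f_{1j}(\theta_0)I_{1j}^{cen}]$, which no hypothesis of Theorem \ref{equ.th.4.1.1} forces to vanish, and the second factor is asymptotically $N(0,v_j)$ with $v_j$ in general positive. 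The product therefore converges in distribution to $E_{\theta_0}[f_{1j}(\theta_0)I_{1j}^{cen}]$ times a nondegenerate normal, i.e.\ it is $O_P(1)$ but not $o_P(1)$. The diagonal/off-diagonal cancellation you invoke does not repair this: the diagonal ($i=k$) part of the expanded double sum contributes only $O_P(n^{-1/2})$ after normalisation, and what survives is precisely the leading term just described; neither $(B_{f(\theta)})$ nor (\ref{equationnew23}) gives $E_{\theta_0}[f_{1j}(\theta_0)I_{1j}^{cen}]=0$. So, as written, your argument cannot conclude.

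For comparison, the paper distributes the $\sqrt n$ the other way: it treats $n^{-1/2}\sum_{i}f_{ij}(\theta_0)I_{ij}^{cen}$ as convergent in distribution by the CLT (hence tight) and multiplies it by $\delta_{n,j}^{cen}\to 0$ a.s., concluding by the standard product-convergence theorem. That disposes of the step in two lines, but only because it takes for granted the tightness of an uncentred normalized sum, which holds precisely when $E_{\theta_0}[f_{1j}(\theta_0)I_{1j}^{cen}]=0$ --- the very quantity your analysis isolates. In other words, you have correctly located the one nontrivial point of this proposition; to finish you must either justify that centring (or impose it as an explicit additional condition) or accept the paper's tightness claim as stated. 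The remaining ingredients of your write-up --- restriction of the inner sums to $j\le m$ via $(T2)$, Lemma \ref{Lemma.1} to express the imputation constant through $E_{\theta_0}[Z_{1j}(\theta_0)I_{1j}^{obs}]$, positivity of $E_{\theta_0}[I_{1j}^{cen}]$ after discarding null terms, and the final appeal to Slutsky --- all match the paper.
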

\begin{theorem}\label{theorem.clt.4.13}
 Assume that the conditions of theorems \ref{theorem2} and \ref{equ.th.4.1.1} hold,  with $q_{n}(\eta)$ and $\mathcal{D}_{n}(\eta)$ replaced by $\hat{g}_{n,1}^{obs}(\theta)$ and $\mathcal{D}_{n,1}(\theta)$, respectively . Assume further that condition (\ref{quation.4.1.1.0001}) holds, with $\mathcal{D}_{1}(\theta_0)$  nonrandom and invertible. Then

\begin{equation}
n^{1/2}(\hat{\theta}_{n}-\theta_0)\rightarrow N(0,\mathcal{D}_{1}^{-1}(\theta_0)\Sigma(\mathcal{D}_{1}^{-1}(\theta_0))^T).\nonumber
\end{equation}

\end{theorem}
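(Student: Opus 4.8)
The plan is to linearize the estimating equation $\hat{g}_{n,1}^{obs}(\hat\theta_n)=0$ around $\theta_0$ by a first-order Taylor expansion, show that the normalized Jacobian converges a.s.\ to the nonrandom invertible matrix $\mathcal{D}_1(\theta_0)$, and then combine the central limit theorem of Proposition \ref{Pro.4.1.2} with Slutsky's theorem.

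First I would set the stage using Theorem \ref{theorem2} (with $q_n$, $\mathcal{D}_n$ replaced by $\hat{g}_{n,1}^{obs}$, $\mathcal{D}_{n,1}$ as in the hypotheses): a.s.\ there is a random integer $n_0$ and roots $\hat\theta_n$ of $\hat{g}_{n,1}^{obs}(\theta)=0$ for $n\ge n_0$, with $\hat\theta_n\to\theta_0$ a.s.; all subsequent arguments take place on this probability-one event. Since $\hat{g}_{n,1}^{obs}$ is continuously differentiable in $\theta$ (the smoothness of $\mu$ and of the $c_{ij}$ in Example \ref{example1.5} guarantees this), I apply the mean value theorem to each of the $p$ coordinate functions of $\hat{g}_{n,1}^{obs}$ separately, obtaining for each $n\ge n_0$ points $\tilde\theta_n^{(1)},\dots,\tilde\theta_n^{(p)}$ on the segment joining $\hat\theta_n$ and $\theta_0$. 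Let $\bar{\mathcal{D}}_{n,1}$ be the matrix whose $k$-th row is the $k$-th row of $-\partial \hat{g}_{n,1}^{obs}(\tilde\theta_n^{(k)})/\partial\theta^T$. Then
\[
0=\hat{g}_{n,1}^{obs}(\hat\theta_n)=\hat{g}_{n,1}^{obs}(\theta_0)-\bar{\mathcal{D}}_{n,1}\,(\hat\theta_n-\theta_0),
\]
so that, once $n^{-1}\bar{\mathcal{D}}_{n,1}$ is shown to be nonsingular for large $n$,
\[
n^{1/2}(\hat\theta_n-\theta_0)=\bigl(n^{-1}\bar{\mathcal{D}}_{n,1}\bigr)^{-1}\,n^{-1/2}\hat{g}_{n,1}^{obs}(\theta_0).
\]

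The key step is to prove $n^{-1}\bar{\mathcal{D}}_{n,1}\to\mathcal{D}_1(\theta_0)$ a.s. Each $\tilde\theta_n^{(k)}$ lies between $\hat\theta_n$ and $\theta_0$, hence converges a.s.\ to $\theta_0$, so for every fixed $r>0$ all $\tilde\theta_n^{(k)}$ eventually lie in $B_r(\theta_0)$. Writing
\[
n^{-1}\bar{\mathcal{D}}_{n,1}-\mathcal{D}_1(\theta_0)=\bigl(n^{-1}\bar{\mathcal{D}}_{n,1}-n^{-1}\mathcal{D}_{n,1}(\theta_0)\bigr)+\bigl(n^{-1}\mathcal{D}_{n,1}(\theta_0)-\mathcal{D}_1(\theta_0)\bigr),
\]
the second bracket vanishes by \eqref{quation.4.1.1.0001}, while the first is bounded (row by row, using the asymptotic equivalence of the numerical radius and the Euclidean norm) by $\sup_{\theta\in B_r(\theta_0)}n^{-1}|||\mathcal{D}_{n,1}(\theta)-\mathcal{D}_{n,1}(\theta_0)|||$, which tends to $0$ upon letting $n\to\infty$ and then $r\to 0$ by the equicontinuity condition $S(ii)$ (assumed here for $\mathcal{D}_{n,1}$). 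Hence $n^{-1}\bar{\mathcal{D}}_{n,1}\to\mathcal{D}_1(\theta_0)$ a.s.; since $\mathcal{D}_1(\theta_0)$ is invertible, $n^{-1}\bar{\mathcal{D}}_{n,1}$ is invertible for all large $n$ and $(n^{-1}\bar{\mathcal{D}}_{n,1})^{-1}\to\mathcal{D}_1^{-1}(\theta_0)$ a.s.

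Finally, Proposition \ref{Pro.4.1.2} gives $n^{-1/2}\hat{g}_{n,1}^{obs}(\theta_0)\to N(0,\Sigma)$ in distribution. Combining this with the a.s.\ (hence in-probability) convergence $(n^{-1}\bar{\mathcal{D}}_{n,1})^{-1}\to\mathcal{D}_1^{-1}(\theta_0)$ via Slutsky's theorem yields $n^{1/2}(\hat\theta_n-\theta_0)\to \mathcal{D}_1^{-1}(\theta_0)\,N(0,\Sigma)$, whose law is $N\bigl(0,\mathcal{D}_1^{-1}(\theta_0)\Sigma(\mathcal{D}_1^{-1}(\theta_0))^T\bigr)$, as claimed. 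I expect the main obstacle to be the key step: controlling the Jacobian at the \emph{random, row-dependent} intermediate points $\tilde\theta_n^{(k)}$ uniformly, which is precisely where condition $S(ii)$ transferred to $\mathcal{D}_{n,1}$ (together with Lemma \ref{lemma2}-type estimates on the leading term of its decomposition) does the work; care is also needed that the componentwise mean value theorem yields a matrix $\bar{\mathcal{D}}_{n,1}$ with row-dependent evaluation points rather than a single $\tilde\theta_n$.
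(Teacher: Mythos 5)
Your proof is correct and follows essentially the same route as the paper: a mean-value-theorem linearization of $\hat{g}_{n,1}^{obs}$ at $\theta_0$, control of the Jacobian at the intermediate points via condition $S(ii)$ together with the convergence in (\ref{quation.4.1.1.0001}), and then Proposition \ref{Pro.4.1.2} combined with Slutsky's theorem. Your remark that the vector-valued mean value theorem must be applied coordinate-wise, producing row-dependent intermediate points, is a small refinement of the paper's argument (which writes a single $\bar{\theta}_n$) but does not change the substance.
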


\begin{remark}
The convergence of the normalized derivative in the hypotheses of Theorem \ref{theorem.clt.4.13} justifies the use of condition $(S)(iii)$ in Theorem \ref{theorem2}.
\end{remark}

\subsection{The asymptotic normality of $\hat{\sigma}_{n}^2$}

The presentation in this section is similar to that of the previous section. We detail only the presentation of the main results. The proofs are in Appendix \ref{appb.2}.

We write:
$\displaystyle \quad \hat{g}_{n,1}^{obs}(\theta)=\sum\limits_{i=1}^{n}\hat{u}_{i,1}(\theta), \;  \mbox{and} \; \hat{g}_{n,2}^{obs}(\theta,\sigma^2)=\sum\limits_{i=1}^{n}\hat{u}_{i,2}(\theta,\sigma^2)
$.\\

Let
$ \displaystyle \quad \mathcal{D}_{n,2}(\theta,\sigma^2)=-\partial \hat{g}_{n,2}^{obs}(\theta, \sigma^2)/\partial \sigma^2,\;
\mathcal{D}_{n,3}(\theta,\sigma^2)=-\partial \hat{g}_{n,2}^{obs}(\theta, \sigma^2)/\partial \theta^T,$\\\\
where $\displaystyle \mathcal{D}_{n,2}(\theta,\sigma^2) $ is a scalar and $ \displaystyle \mathcal{D}_{n,3}(\theta,\sigma^2)$ is of dimension $1\times p$.
\begin{theorem}\label{theroem5} Assume that the conditions of Theorem \ref{theorem.clt.4.13} hold. Let $\hat{\sigma}_{n}^2$ be a sequence of strongly consistent estimators of $\sigma_{0}^2$, such that $\displaystyle \hat{g}_{n,2}^{obs}(\hat{\theta}_{n},\hat{\sigma}_{n}^2)=0$ $a.s.$, for large $n$. Furthermore, assume that (\ref{equ.new.final.4.2.15}) holds $a.s.$:

\begin{equation}
 \label{equ.new.final.4.2.15}
\lim_{r\rightarrow 0}\limsup_{n\rightarrow\infty}n^{-1}\sup_{\eta \in B_{r}(\eta_0)}\bigg|\partial\hat{g}_{n,2}^{obs}(\eta)/\partial\eta^T-\partial\hat{g}_{n,2}^{obs}(\eta_0)/\partial\eta^T\bigg|=0.
\end{equation}
Assume that the sequence of derivatives converges $a.s.$ as follows:

\begin{equation}
\label{equ.new.final.4.2.16}
  n^{-1}\mathcal{D}_{n,2}(\eta_0)\rightarrow \mathcal{D}_{2}(\eta_0)\neq 0,\quad\quad \quad
n^{-1}\mathcal{D}_{n,3}(\eta_0)\rightarrow \mathcal{D}_{3}(\eta_0),
\end{equation}
with $\mathcal{D}_{2}(\eta_0)$ and $\mathcal{D}_{3}(\eta_0)$ nonrandom. Then
\begin{equation}
n^{1/2}(\hat{\sigma}_{n}^{2}-\sigma_{0}^2)\rightarrow N(0,\mathcal{D}_{2}^{-2}(\eta_0)\Omega),\nonumber
\end{equation}
where
$
\Omega=E[Q^{2}(\theta_0,\sigma_{0}^{2})],\; Q(\theta_0,\sigma_{0}^{2})=u_{1,2}-\mathcal{D}_{3}(\eta_0)\mathcal{D}_{1}^{-1}(\theta_0) u_{1,1}.\nonumber
$
\end{theorem}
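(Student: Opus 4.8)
The plan is to reduce the statement to the classical i.i.d.\ central limit theorem, exploiting the Bahadur-type linearization already available for $\hat\theta_n$. Since $\hat{g}_{n,2}^{obs}(\hat\theta_n,\hat\sigma_n^2)=0$ for all large $n$ and, by the smoothness assumed in Example~\ref{example1.5}, $\hat{g}_{n,2}^{obs}$ is continuously differentiable in $\eta=(\theta^T,\sigma^2)^T$, I would perform a first-order (mean-value) expansion of $\hat{g}_{n,2}^{obs}$ jointly in $(\theta,\sigma^2)$ around $\eta_0=(\theta_0,\sigma_0^2)$, using the sign conventions $\partial\hat{g}_{n,2}^{obs}/\partial\sigma^2=-\mathcal{D}_{n,2}$ and $\partial\hat{g}_{n,2}^{obs}/\partial\theta^T=-\mathcal{D}_{n,3}$. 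After dividing by $n^{1/2}$ and regrouping, the defining equation yields, with $\tilde\eta_n$ an intermediate point on the segment joining $\hat\eta_n$ to $\eta_0$,
\begin{equation}
\big[n^{-1}\mathcal{D}_{n,2}(\tilde\eta_n)\big]\,n^{1/2}(\hat\sigma_n^2-\sigma_0^2)=n^{-1/2}\hat{g}_{n,2}^{obs}(\eta_0)-\big[n^{-1}\mathcal{D}_{n,3}(\tilde\eta_n)\big]\,n^{1/2}(\hat\theta_n-\theta_0).\nonumber
\end{equation}

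Next I would replace every random ingredient on the right by its limit. By the strong consistency of $\hat\eta_n$ (Theorem~\ref{theorem.clt.4.13}), for each $r>0$ we have $\tilde\eta_n\in B_r(\eta_0)$ eventually, so the modulus-of-continuity assumption~(\ref{equ.new.final.4.2.15}) gives $n^{-1}|\mathcal{D}_{n,k}(\tilde\eta_n)-\mathcal{D}_{n,k}(\eta_0)|\to0$ for $k=2,3$, which combined with~(\ref{equ.new.final.4.2.16}) produces $n^{-1}\mathcal{D}_{n,2}(\tilde\eta_n)\to\mathcal{D}_2(\eta_0)\neq0$ and $n^{-1}\mathcal{D}_{n,3}(\tilde\eta_n)\to\mathcal{D}_3(\eta_0)$ a.s. From the proof of Theorem~\ref{theorem.clt.4.13}, together with Proposition~\ref{Pro.4.1.2}, I would extract the linearization $n^{1/2}(\hat\theta_n-\theta_0)=\mathcal{D}_1^{-1}(\theta_0)\,n^{-1/2}\sum_{i=1}^n u_{i,1}+o_P(1)$, and from the argument of Proposition~\ref{Pro.4.1.2} applied with $k=2$ the equivalence $n^{-1/2}\hat{g}_{n,2}^{obs}(\eta_0)=n^{-1/2}\sum_{i=1}^n u_{i,2}+o_P(1)$. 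Substituting these and setting $Q_i=u_{i,2}-\mathcal{D}_3(\eta_0)\mathcal{D}_1^{-1}(\theta_0)u_{i,1}$ (so that $Q(\theta_0,\sigma_0^2)=Q_1$), the displayed identity becomes
\begin{equation}
\big(\mathcal{D}_2(\eta_0)+o_P(1)\big)\,n^{1/2}(\hat\sigma_n^2-\sigma_0^2)=n^{-1/2}\sum_{i=1}^n Q_i+o_P(1).\nonumber
\end{equation}
Because the right-hand side is $O_P(1)$ and $\mathcal{D}_2(\eta_0)\neq0$, this first forces $n^{1/2}(\hat\sigma_n^2-\sigma_0^2)=O_P(1)$, after which the $o_P(1)$-coefficient term is genuinely negligible and the identity reads $n^{1/2}(\hat\sigma_n^2-\sigma_0^2)=\mathcal{D}_2^{-1}(\eta_0)\,n^{-1/2}\sum_{i=1}^n Q_i+o_P(1)$.

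It then remains to apply a CLT to $n^{-1/2}\sum_{i=1}^n Q_i$. The data vectors $d_i$ are i.i.d.\ and $u_{i,1},u_{i,2}$ are fixed measurable functionals of $d_i$, so the $Q_i$ are i.i.d.; by the unbiasedness of $g_{n,1}^{obs}$ and $g_{n,2}^{obs}$ (Proposition~\ref{proposition1.2}) one has $E[u_{1,1}]=E[u_{1,2}]=0$, hence $E[Q_1]=0$; and under the moment hypotheses in force (the second-moment bounds of Theorem~\ref{equ.th.4.1.1} together with the corresponding integrability of the terms $b_{1j}(\eta_0)(Z_{1j}^2(\theta_0)-\sigma_0^2)$) one has $\Omega=E[Q_1^2]<\infty$. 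The Lindeberg--L\'evy CLT gives $n^{-1/2}\sum_{i=1}^n Q_i\to N(0,\Omega)$, and Slutsky's theorem applied to the last identity yields $n^{1/2}(\hat\sigma_n^2-\sigma_0^2)\to N(0,\mathcal{D}_2^{-2}(\eta_0)\Omega)$, as claimed.

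The principal obstacle is the self-referential rate estimate in the middle step: the Taylor remainder can only be absorbed once $n^{1/2}(\hat\sigma_n^2-\sigma_0^2)$ is known to be tight, yet that tightness is precisely what the expansion is meant to deliver. The clean way around this is to carry the remainder symbolically as $R_n\cdot n^{1/2}(\hat\sigma_n^2-\sigma_0^2)$ with $R_n=o_P(1)$, solve the resulting scalar relation for $n^{1/2}(\hat\sigma_n^2-\sigma_0^2)$, conclude $O_P(1)$, and only afterwards discard the term $R_n\cdot n^{1/2}(\hat\sigma_n^2-\sigma_0^2)=o_P(1)$. A secondary technical point is establishing the $k=2$ counterpart of Proposition~\ref{Pro.4.1.2} at the $n^{-1/2}$ scale, including the passage from the leave-one-out denominator $\sum_{k\neq i}I_{kj}^{cen}$ in~(\ref{equation.1.21}) to $E[I_{1j}^{cen}]$; the discrepancy contributes only an $O(n^{-1})$ correction per index $j$ and is therefore asymptotically immaterial, exactly as in the proof of Theorem~\ref{theorem.1}.
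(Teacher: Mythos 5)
Your proposal is correct and follows essentially the same route as the paper's proof in Appendix \ref{appb.2}: a mean-value expansion of $\hat{g}_{n,2}^{obs}$ around $\eta_0$, replacement of the intermediate-point derivatives via (\ref{equ.new.final.4.2.15})--(\ref{equ.new.final.4.2.16}) and the consistency of $\hat{\eta}_n$, substitution of the linearization of $n^{1/2}(\hat{\theta}_n-\theta_0)$ from the proof of Theorem \ref{theorem.clt.4.13} together with the Proposition \ref{Pro.4.1.2}-type replacement of $\hat{u}_{i,k}$ by $u_{i,k}$, and the i.i.d.\ CLT applied to the $Q_i$. The only differences are cosmetic: the paper telescopes the expansion into two one-variable mean-value steps rather than one joint step, and your explicit handling of the $o_P(1)$-coefficient tightness issue is somewhat more careful than the paper's.
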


\begin{remark}The results of Theorem \ref{theroem5} are similar to the results in 3.5 of \cite{Yi.et al: 2012}.
\end{remark}

\section{Simulation results}\label{simulation}

In this section, we investigate our proposed estimators $\hat{\eta}_n$ from (\ref{equation.1.19}) and  (\ref{equation.1.21}), with $b_{ij}(\eta)=1,(i,j=1,\ldots).$ We compare their performance to the conditional GEE estimators  in \cite{clement and Strawderman}, henceforth abbreviated C\&S  (available through the R package \texttt{condGEE}).

 We consider a {\em factorial design with 4 parameters}, two sample sizes $n=50$ and $n=200$  and two censoring schemes, 
 $C=125$ and $C=225$,  and four standardized distributions of the error terms: normal, exponential, uniform and log-normal.  For each combination of the design parameters, 1000 simulated studies are done to obtain parameter estimates, which are then compared in terms of bias and estimated standard error.
The simulated gap times (in days) are generated according to the model $ Y_{ij}=\max\{Y_{ij}^{*},1\}$, $j\geq 1$, $i=1,2\cdots,n$, where $ Y_{ij}^{*}=\mu_{ij}(\theta)+\sigma V_{ij}(\theta)\varepsilon_{ij} $ and $\varepsilon_{ij}$ are $i.i.d.$ observations from a density with mean zero and variance one. This translates to each of the following schemes: $N(0,1)$, shifted exponential with mean 0 and rate 1,  uniform on $[-\sqrt{3},\sqrt{3}]$, and lognormal $\exp{(X)}-\exp{(0.4812119)}$, where $X\sim N(0,\sqrt{0.4812119})$.

 We first compare our nonparametric method, abbreviated NP, to a correctly specified conditional model from  C\&S, as seen in Figures \ref{figbias1}-\ref{figese2}. Other schemes are presented in Table \ref{tablelabe2} and in the Supplementary material.

\begin{figure}
\includegraphics[width=.9\textwidth]{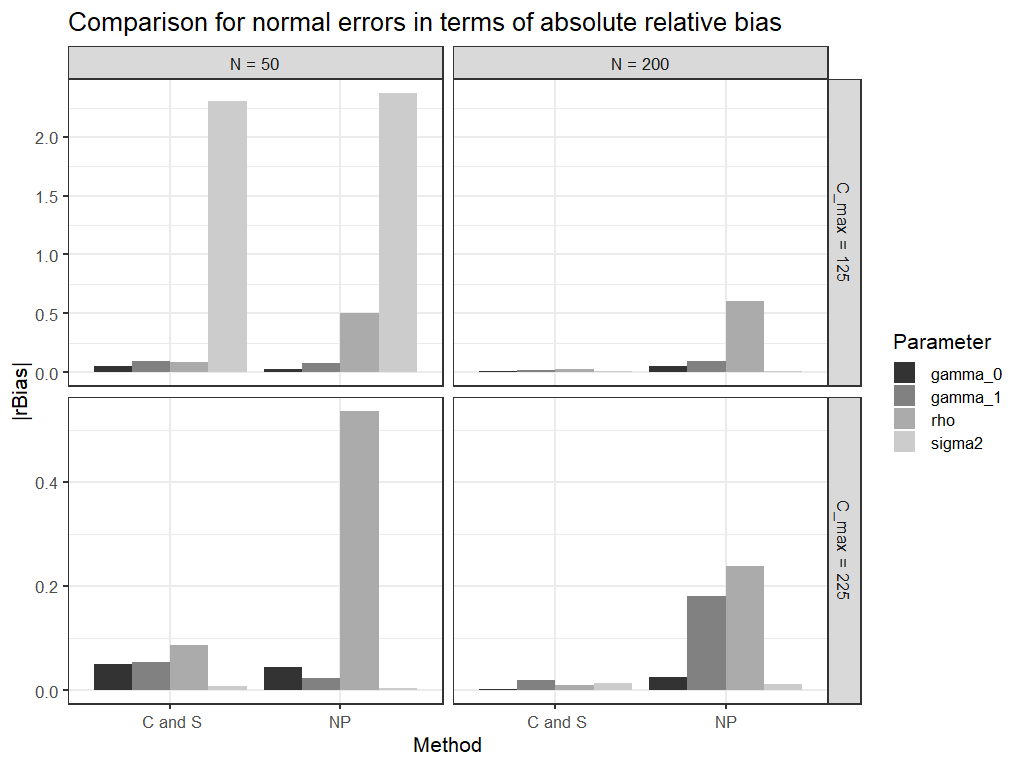}
\caption{Bias comparison}
\label{figbias1}
\end{figure}\par

As in \cite{Murphy.etal:1995}, the  conditional mean and variance functions specifications are given by (\ref{equation1.3})-(\ref{equation1.5}), with a slight difference: $\gamma_0$ in (\ref{equation1.4}) is $28+\gamma_0$ here. The parameters are:
 $\gamma_0=0.6,$ $ \gamma_1=-0.4 $ , $\rho=0.03, \sigma^2=11$.  A single time-varying covariate is used: $\overline{BMI}_{ij}=BMI_{ij}-21,$ where $BMI_{ij}$ is assumed to decrease linearly from 22 $kg/m^2$ on day 1 to 20 $kg/m^2$ on day 195, increase linearly to 21 $kg/m^2$ on day 225, and then remain constant thereafter.
With simulated observation periods of 125 and 225 days (roughly equal to 4 and 7.5 months, respectively), the average number of events per subject under an observation period is approximately 3.9 and 7.4, with little noise between subjects .

\begin{figure}
\includegraphics[width=.9\textwidth]{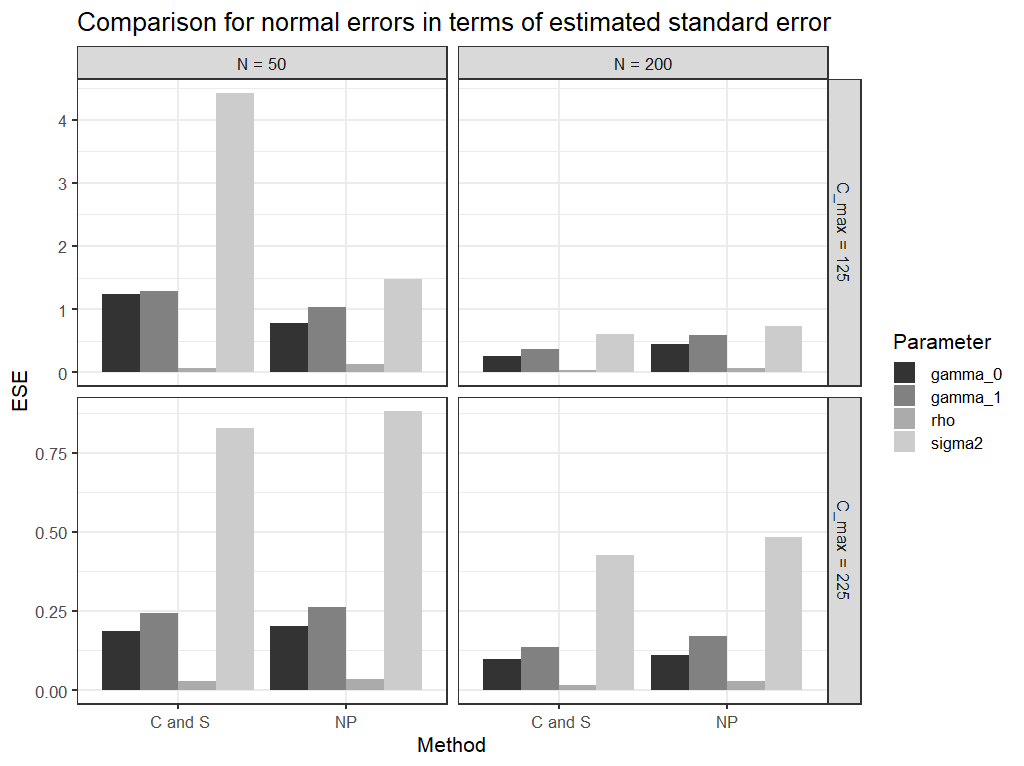}
\caption{Standard error comparison}
\label{figese2}
\end{figure}\par
Figures \ref{figbias1} and \ref{figese2} summarize results for bias and empirical standard error respectively. Our proposed method is quite biased in estimating $\rho$, which seems to also hold for other models in the Supplementary material. This may be due to the small value of $\rho$. On the other hand, our estimator of $\sigma^2$ behaves well for most scenarios. Our method seems to perform slightly better than C\&S for $\gamma_0$ and $\gamma_1$, especially on a shorter observation period with a small sample. As expected, on a large sample the method from C\&S, when correctly specified performs very well. The standard errors are similar for both methods, except on a small sample with a short observation period, where our method performs rather well. We note that the range of the $y$ axis depends on the value of $C_{\max}$ in the graphs.

As expected, increasing the sample size improves estimation for both methods. Our method is often better suited for small sample size and shorter studies, as can be seen in Figure \ref{figese2}  and Table \ref{tablelabe2} for log-normal errors.
Our simulations shed some light on the behaviour of our proposed estimators that make no assumptions on the error distribution. In the Supplementary material, one can see that while the method in C\&S is robust to some misspecifications, it is less so for others, while our method is more consistent in its bias, regardless of the error distribution. By comparing the log-normal cases for short studies ( Table \ref{tablelabe2}  here and Table 5 in the Supplementary material), we conclude that there may be reason to prefer our method on samples with fewer events and asymmetric errors.
In all tables (see also Supplementary material) we use the following abbreviations:

ENOES: Expected number of events.

$|rBias|$: Absolute relative bias: $\left|(O-E)/E \right| $ where $O$, $E$ stand for ``Observed'' and ``Expected'', respectively.

ESE: Empirical standard error.

ASE: Asymptotic standard error (when available).

\begin{table}[h]
\caption{ Comparison with C\&S, $n=50$, $V_{ij}$ in (\ref{equation1.5}), $ C_{\max}=125$ }%
\resizebox{\columnwidth}{!}{
\begin{tabular}{lcccccccc}
  {}& ENOES=3.9{} &   {} & Our method &  {} &  {}   & {} & $C\&S$ $F_0=$ Normal & {}\\
  $Estimator$ & Parameter &  $|rBias|$ & ESE & Bias & {} & $|rBias|$ & ESE & Bias\\
  Normal errors & $\gamma_0$  & 0.027 & 0.772 & -0.016 & {}  & 0.051 & 1.246 & -0.030\\
  {$\mu_{ij}(\theta):=(\ref{equation1.4})$} & $\gamma_1$ & 0.074 & 1.027 & 0.029& {}  & 0.095 & 1.289 & 0.038\\
  {$V_{ij}(\theta):=(\ref{equation1.5})$} & $\rho$ & 0.500 & 0.122 & 0.015 & {}  & 0.084 & 0.067 &0.003\\
  {} & $\sigma^2$  & 2.376& 1.471 & 7.843 & {} & 2.305 & 4.431 & 7.606\\
Exponential errors & $\gamma_0$  & 0.027 & 0.763 & -0.016 & {}  & 0.182 & 0.709& -0.109\\
  {$\mu_{ij}(\theta):=(\ref{equation1.4})$} & $\gamma_1$ &0.078 & 1.015 & 0.031 & {}  & 0.348 & 0.770& 0.139\\
  {$V_{ij}(\theta):=(\ref{equation1.5})$} & $\rho$ & 0.988 & 0.151 & 0.030 & {}  & 0.096 & 0.076 & -0.003\\
  {} & $\sigma^2$  & 2.328 & 2.482 & 7.681 & {} & 2.154 & 2.081 & 7.110\\
Uniform errors & $\gamma_0$  & 0.017 & 0.766 & -0.010 & {}  & 0.012 & 0.455 & 0.007\\
  {$\mu_{ij}(\theta):=(\ref{equation1.4})$} & $\gamma_1$ &0.076 & 1.033 & 0.030 & {}  & 0.022 & 0.694 & 0.009\\
  {$V_{ij}(\theta):=(\ref{equation1.5})$} & $\rho$ & 0.311 & 0.119 & 0.009 & {}  & 0.092& 0.059& -0.003\\
  {} & $\sigma^2$  & 2.381 & 1.078& 7.856 & {} & 2.357 & 0.812 & 7.778\\
Log-normal errors & $\gamma_0$  &  0.033 & 0.749 & 0.020& {} & 0.145 & 0.430 & -0.087\\ 
  {$\mu_{ij}(\theta)=(\ref{equation1.4})$} & $\gamma_1$& 0.054 & 1.011 & -0.022& {} & 0.291 & 0.658 & 0.116\\ 
  {$V_{ij}(\theta)=(\ref{equation1.5})$} & $\rho$ &  1.077 & 0.219 & 0.032 &{}& 0.077 & 0.064 & -0.002\\ 
  {} & $\sigma^2$  & 0.003 & 3.992 & 0.028 &{}& 0.074 & 2.665 & -0.814\\ 
\end{tabular}
\label{tablelabe2}
}
\end{table}
One  limitation of these simulations is that only one set of values of $\eta$ is used. Thus, we are not assessing behaviour under the null hypothesis or within a range of weak to strong effects. Additionally, having only one deterministic covariate for all subjects is  an assumption of lack of noise. However, modeling recurrent events with a ``full history'' as in  (\ref{equation1.4}) is quite complex even with a single covariate, as the number of entities to input into the formula grows linearly with the number of events. In our simulations, individuals with longer past gap times will tend to have the mean log-gap  longer as well. The effect size of this is harder to explain in simple terms to a data user, unlike, say, the effect size from a proportional hazard model. It is not surprising that the default setup of the \texttt{condGEE} package actually models independent gap times within individuals, even if as a model it is rather uninteresting.  A future project could attempt to apply our method to models with marginal conditional covariance dependent on the conditional mean.

\section{The conditional approach, comparisons and conclusion}\label{coditionalcompared}
\subsection{Our conditional approach}

In this section we show how for each subject $i$ in (\ref{eauation.1.18}) and (\ref{equation.1.20}), 
 the sum in $j$ can be expressed as the conditional expectation of the corresponding sum in (\ref{equation1.9}) or (\ref{equation1.10}), respectively, given a $\sigma$-field $\mathcal{O}_{\tau_i}$, which will be defined below.

For simplicity and since we work at the level of a subject, we omit writing the index $i$,  as well as the parameters $\theta$ and $\eta$. We present the results only for the estimating function (\ref{equation1.9}), as (\ref{equation1.10}) is treated in a similar manner. Recall that, with the obvious notation,
\begin{equation}
\label{section6.1}
g_1= \sum_{j=1}^{\infty} f_jZ_jI\{S_{j-1}< C\}=\sum_{j=1}^{\infty}(f_jZ_j I_{j}^{obs}+f_jZ_jI_{j}^{cen}) =\sum_{j=1}^{\tau}f_j Z_j
\end{equation}
where we used (\ref{equation1.9}), (\ref{sequation.2.8newf}) and (\ref{equation1.16}) . We will be dealing with $\displaystyle g_{1}^{[l]}$, defined in (\ref{equation2.11.newf}), $(l=1,\ldots)$. When $l\rightarrow\infty$, we have that, $a.s.$,
 \begin{equation}
 \label{section6.2}
 g_{1}^{[l]}=\sum_{j=1}^l f_{j}Z_{j}I\{S_{j-1}<C\}\rightarrow \sum_{j=1}^{\tau}f_jZ_jI\{S_{j-1}<C\}
 \end{equation}

To define $\mathcal{O}_{\tau}$(see also p.219 of \cite{Williams:1995}), we proceed by defining first a sequence of $\sigma$-fields $\mathcal{O}_n$, generated as follows:
\begin{equation}
\label{section6.3}
\mathcal{O}_n =\sigma( f_jZ_jI_{n}^{cen}, j\leq n-1, f_nI_{n}^{cen}, b_nI_{n}^{cen}, I_{n}^{cen}), \quad n\geq 1.
\end{equation}
Let $\displaystyle \mathcal{O}_{\tau}= \{ A\subseteq \Omega: A \cap\{\tau=n\}\in \mathcal{O}_n, n\geq 1\}$.
We note that $\mathcal{O}_{\tau}$ is a $\sigma$-field because $\mathcal{O}_n$ is, for each $n\geq 1$. Perhaps surprisingly, the building blocks of our `` observed '' $\sigma$- field $\mathcal{O}_{\tau}$ are the censoring events.  Although $I\{\tau=n\}= I_{n}^{cen}$ is $\mathcal{O}_n$-measurable, $\tau$ is not a stopping time with respect to $\{\mathcal{O}_n\}_{n\geq 1}$, which is not a filtration.
\begin{lemma}
\label{lemma6.1}
The functions $\displaystyle f_jZ_jI_{j}^{obs}, f_{j}I_{j}^{cen}, b_jI_{j}^{cen}, I_{j}^{cen}$  are $\mathcal{O}_\tau$- measurable, $(j=1,\ldots).$
\end{lemma}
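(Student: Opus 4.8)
The plan is to verify the defining property of the stopping-time $\sigma$-field directly: a function $g$ is $\mathcal{O}_\tau$-measurable precisely when, for every $n\geq 1$, the restriction $g\,I\{\tau=n\}$ (equivalently $g$ on the event $\{\tau=n\}$) is $\mathcal{O}_n$-measurable. So for each of the four families $f_jZ_jI_j^{obs}$, $f_jI_j^{cen}$, $b_jI_j^{cen}$, $I_j^{cen}$ I would multiply by $I\{\tau=n\}=I_n^{cen}$ and check $\mathcal{O}_n$-measurability, splitting into the cases $j<n$, $j=n$, and $j>n$.

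First I would dispose of the easy families. For $I_j^{cen}$: on $\{\tau=n\}$ we have $I_n^{cen}=1$, hence (since the censoring events are disjoint) $I_j^{cen}=0$ for all $j\neq n$ and $I_j^{cen}=1$ for $j=n$; in either case $I_j^{cen}I_n^{cen}$ equals $0$ or $I_n^{cen}$, which is a generator of $\mathcal{O}_n$ by (\ref{section6.3}). The same disjointness argument handles $f_jI_j^{cen}$ and $b_jI_j^{cen}$: when $j\neq n$ the product with $I_n^{cen}$ vanishes, and when $j=n$ it equals $f_nI_n^{cen}$ (resp.\ $b_nI_n^{cen}$), which is listed among the generators of $\mathcal{O}_n$. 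So these three are immediate.

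The substantive family is $f_jZ_jI_j^{obs}$. Multiplying by $I_n^{cen}$, I consider $f_jZ_jI_j^{obs}I_n^{cen}$. For $j\geq n$ this is zero: on $\{\tau=n\}=\{S_{n-1}<C\le S_n\}$ one has $S_{j-1}\geq S_{n}\geq C$ for $j\ge n+1$ so $I_j^{obs}=0$ there, while for $j=n$ the events $I_n^{obs}$ and $I_n^{cen}$ are disjoint. For $j\le n-1$, note that $S_{j-1}<S_n$ always and, on $\{\tau=n\}$, $C\le S_n$; but in fact on $\{\tau=n\}$ one has $C>S_{n-1}\ge S_j$ for $j\le n-1$, so $S_j\le S_{n-1}<C$, hence $S_j<C$, i.e. $I_j^{obs}=1$ on $\{\tau=n\}$ for every $j\le n-1$. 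Therefore $f_jZ_jI_j^{obs}I_n^{cen}=f_jZ_jI_n^{cen}$ for $j\le n-1$, and $f_jZ_jI_n^{cen}$ with $j\le n-1$ is exactly a generator of $\mathcal{O}_n$ listed in (\ref{section6.3}). Collecting the three cases, $f_jZ_jI_j^{obs}I_n^{cen}$ is $\mathcal{O}_n$-measurable for every $n$, which is what we needed.

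The main obstacle, such as it is, is the bookkeeping in the $j\le n-1$ case: one must be careful that on the event $\{\tau=n\}$ every earlier gap time is \emph{fully} observed, so that the indicator $I_j^{obs}$ can be dropped and $f_jZ_jI_j^{obs}$ collapses onto the generator $f_jZ_jI_n^{cen}$ rather than onto something only $\mathcal{O}_n$-measurable after further argument. This hinges on the monotonicity $S_0<S_1<\cdots$ together with the definition $\tau=\min\{j:C\le S_j\}$, which forces $C>S_{n-1}$ on $\{\tau=n\}$. Once that observation is in place, everything else is the disjointness of the censoring events and reading generators off (\ref{section6.3}). Hence the lemma follows.
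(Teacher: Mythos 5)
Your proposal is correct and follows essentially the same route as the paper: reduce $\mathcal{O}_\tau$-measurability to checking that each function times $I_n^{cen}=I\{\tau=n\}$ is $\mathcal{O}_n$-measurable, use disjointness of the censoring events for the $I_j^{cen}$-type terms, and for $f_jZ_jI_j^{obs}$ observe that the product vanishes when $j\ge n$ while for $j\le n-1$ the inclusion $\{S_{n-1}<C<S_n\}\subseteq\{S_j\le C\}$ collapses it onto the generator $f_jZ_jI_n^{cen}$. The only difference is cosmetic: you spell out the measurability criterion and the $b_jI_j^{cen}$, $I_j^{cen}$ cases explicitly, which the paper leaves to "avoid repetition."
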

\begin{proof}
We show first that $\displaystyle f_jZ_jI_{j}^{obs}$ is $\mathcal{O}_\tau$-measurable, for each $j\geq 1$.\\
$ \displaystyle \quad
\mbox{If}\; \; \;\;n\leq j,\quad    \{S_{n-1}<C < S_n\}\cap \{S_j\leq C\}=\phi\in \mathcal{O}_n.\nonumber $\\
$\displaystyle \quad
\mbox{If}\;\;\;\; n>j,\quad
\{S_{n-1}< C< S_n\} \subseteq \{S_j\leq C\},\;\; \mbox{so}\;\;  f_jZ_jI_{j}^{obs}I_{n}^{cen}= f_jZ_jI_{n}^{cen},$\\
 which is a generator of $\mathcal{O}_n$ in (\ref{section6.3}). To avoid repetition, we only show that $f_jI_{j}^{cen}$ is $\mathcal{O}_\tau$- measurable.\\
   $\displaystyle \quad \mbox{If}\;\;\;\; n\neq j, \quad \{S_{n-1}<C<S_n\}\cap\{S_{j-1}< C< S_n\}=\phi\in \mathcal{O}_n.\; \mbox{When} \;\; n=j,$ \\
    $ f_jI_{j}^{cen}I_{n}^{cen}=f_nI_{n}^{cen}$, a generator of $\mathcal{O}_n$.
\end{proof}

\begin{remark}
The $\sigma$-fields $\mathcal{O}_n$ are generated by some observed functions, and $\mathcal{O}_\tau$ tracks the evolution in time of a subject. We note that $I_{n}^{cen}$ is observed, even if the value of $S_n$ is not. We see at the time of censoring that $S_n$ will be larger than C. We therefore can observe $f_jZ_jI_{n}^{cen}$, as $f_j$ is observed at time $S_{j-1}$ and $Z_j$ at time $S_j , j\leq n-1$. On the other hand, $Z_nI_{n}^{cen}$ is not observed, because the unobserved value $S_n$ is needed in the calculation of $Z_n$.
\end{remark}

\begin{theorem}
Assume that Condition \ref{Condition2.5} holds.  Furthermore, assume that (B) holds:

\begin{equation}
(B)\quad\quad\quad \quad \quad\quad\qquad  E[Z_jI_{j}^{cen}|\mathcal{O}_\tau]= E[Z_jI_{j}^{cen}|I_{j}^{cen}] \quad\quad \quad\quad  (j=1,\ldots).\nonumber
\end{equation}
Then
\begin{equation}
\label{section6.5}
 E[g_1|\mathcal{O}_\tau]=\sum_{j=1}^{\tau}\bigg(f_jZ_jI_{j}^{obs}+ f_jI_{j}^{cen}\frac{E[Z_jI_{j}^{cen}]}{E[I_{j}^{cen}]}\bigg)
\end{equation}
In addition, $E[g_1]=0$, so $g_1$ in (\ref{section6.1}) is unbiased.
\end{theorem}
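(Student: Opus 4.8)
The plan is to compute $E[g_1\mid\mathcal{O}_\tau]$ term by term, using the measurability facts of Lemma \ref{lemma6.1} to pull the ``observed'' pieces out of the conditional expectation, and then to deduce $E[g_1]=0$ from the tower property together with the single-subject case of Proposition \ref{Proposition1.1}.

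First I would write $g_1$ as in (\ref{section6.1}), namely
\[
g_1=\sum_{j=1}^{\infty}\bigl(f_jZ_jI_{j}^{obs}+f_jZ_jI_{j}^{cen}\bigr),
\]
where the series converges absolutely $a.s.$ and, with $g=\sup_j\|f_jZ_j\|$, is dominated by the integrable envelope $g\tau$ (Condition \ref{Condition2.5}, index $i$ suppressed). This domination legitimizes interchanging $E[\cdot\mid\mathcal{O}_\tau]$ with the infinite sum: apply conditional dominated convergence to the partial sums $g_{1}^{[l]}$ of (\ref{equation2.11.newf}), which converge $a.s.$ to $g_1=g_{1}^{[\tau]}$ as in (\ref{section6.2}), to get $E[g_1\mid\mathcal{O}_\tau]=\sum_{j=1}^{\infty}E[f_jZ_jI_{j}^{obs}\mid\mathcal{O}_\tau]+\sum_{j=1}^{\infty}E[f_jZ_jI_{j}^{cen}\mid\mathcal{O}_\tau]$. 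By Lemma \ref{lemma6.1} each $f_jZ_jI_{j}^{obs}$ is $\mathcal{O}_\tau$-measurable, so $E[f_jZ_jI_{j}^{obs}\mid\mathcal{O}_\tau]=f_jZ_jI_{j}^{obs}$; it remains to handle the censored terms.

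For those, the key step is the idempotency $\bigl(I_{j}^{cen}\bigr)^2=I_{j}^{cen}$, which lets me factor $f_jZ_jI_{j}^{cen}=\bigl(f_jI_{j}^{cen}\bigr)\bigl(Z_jI_{j}^{cen}\bigr)$. Since $f_jI_{j}^{cen}$ is $\mathcal{O}_\tau$-measurable (again Lemma \ref{lemma6.1}) and the product is integrable, I may pull it out and then invoke assumption $(B)$:
\[
E[f_jZ_jI_{j}^{cen}\mid\mathcal{O}_\tau]=f_jI_{j}^{cen}\,E[Z_jI_{j}^{cen}\mid\mathcal{O}_\tau]=f_jI_{j}^{cen}\,E[Z_jI_{j}^{cen}\mid I_{j}^{cen}].
\]
Discarding (as in Step \ref{step2.1}) the indices with $E[I_{j}^{cen}]=0$, the elementary identity $E[Z_jI_{j}^{cen}\mid I_{j}^{cen}=1]=E[Z_jI_{j}^{cen}]/E[I_{j}^{cen}]$ gives $I_{j}^{cen}E[Z_jI_{j}^{cen}\mid I_{j}^{cen}]=I_{j}^{cen}E[Z_jI_{j}^{cen}]/E[I_{j}^{cen}]$. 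Summing over $j$ and noting that $I_{j}^{obs}=I_{j}^{cen}=0$ whenever $j>\tau$ (since then $S_{j-1}\ge S_\tau\ge C$), the series collapses to $\sum_{j=1}^{\tau}$ and produces exactly (\ref{section6.5}).

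For the last assertion, take expectations in (\ref{section6.5}) and use $E[g_1]=E\bigl[E[g_1\mid\mathcal{O}_\tau]\bigr]$; since $g_1=g_{1}^{[\tau]}$ is precisely the single-subject ($n=1$, index $i$ suppressed) instance of the estimating function in Proposition \ref{Proposition1.1}, whose proof establishes $E\bigl[g_{1}^{[\tau]}\bigr]=0$ under Condition \ref{Condition2.5}, we conclude $E[g_1]=0$. I expect the main obstacle to be the bookkeeping around $\mathcal{O}_\tau$: because $\tau$ is \emph{not} a stopping time for the non-filtration $\{\mathcal{O}_n\}_{n\ge1}$, no optional-stopping shortcut is available, so the $\mathcal{O}_\tau$-measurability of the relevant functions must be read off the generators in (\ref{section6.3}) by hand (exactly what Lemma \ref{lemma6.1} does), and the termwise conditioning of the infinite series must be justified by the dominated-convergence argument above rather than taken for granted.
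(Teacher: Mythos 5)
Your proposal is correct and follows essentially the same route as the paper's proof: both pass to the partial sums $g_{1}^{[l]}$, invoke conditional dominated convergence under the integrable envelope $g\tau$ from Condition \ref{Condition2.5}, use Lemma \ref{lemma6.1} to pull the $\mathcal{O}_\tau$-measurable factors out, apply $(B)$ to the censored terms, and reduce the unbiasedness claim to the argument of Proposition \ref{Proposition1.1}. The only difference is that you spell out two steps the paper leaves implicit (the idempotency factorization $f_jZ_jI_{j}^{cen}=(f_jI_{j}^{cen})(Z_jI_{j}^{cen})$ and the identity $I_{j}^{cen}E[Z_jI_{j}^{cen}\mid I_{j}^{cen}]=I_{j}^{cen}E[Z_jI_{j}^{cen}]/E[I_{j}^{cen}]$), which is a harmless elaboration rather than a different method.
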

\begin{proof}
Recall Proposition \ref{Proposition1.1}, (\ref{equation2.11.newf}) and (\ref{equation1.14}). By (g), p.88 of \cite {Williams:1995}, \\$\displaystyle \lim_{l\rightarrow \infty}E[g_{1}^{[l]}|\mathcal{O}_\tau]=E[g_1|\mathcal{O}_\tau]$, and $ \displaystyle \lim_{l\rightarrow \infty}E[g_{1}^{[l]}]=E[g_1]$. From Lemma \ref{lemma6.1}, \\for $(l=1,\ldots,),$
\begin{equation}
\label{section6.4new}
E[g_{1}^{[l]}|\mathcal{O}_{\tau}]=\sum_{j=1}^l\bigg(f_jZ_jI_{j}^{obs}+f_jI_{j}^{cen}E[Z_jI_{j}^{cen}|\mathcal{O}_{\tau}]\bigg)
\end{equation}

By (B), each second term in (\ref{section6.4new}) is of the form $\displaystyle f_jI_{j}^{cen}E[Z_jI_{j}^{cen}]/E[I_{j}^{cen}]$. Thus, $ \displaystyle \lim_{l\rightarrow \infty}
E[g_{1}^{[l]}|\mathcal{O}_\tau]$ equals the right hand side of (\ref{section6.5}), as well as $E[g_1|\mathcal{O}_\tau]$,  which proves (\ref{section6.5}). The second assertion follows as in the proof of Proposition \ref{Proposition1.1}. Alternatively, we could apply (g) p.88 of \cite{Williams:1995} with the $\sigma$-field $\{\Omega,\phi\}$
\end{proof}
\begin{remark}
Condition $(B)$ is realistic in describing the impact of an important event on the immediate development of the process. It is a conditional independence assumption which states that, when censoring occurs, the length of the would-be gap time is independent of the history of the subject imbedded in $\mathcal{O}_\tau$(see 34.11 of \cite{Billingsley:1995}).
\end{remark}
\subsection{Comparison with similar results and conclusion}
In this section we compare our wok with the results in \cite{clement and Strawderman} and \cite{Murphy.etal:1995}, starting with the conditional estimating functions used. We begin by showing on $g_1$  in (\ref{section6.1}) how conditioning on an appropriate $\sigma$-field leads to (2.12) in \cite{clement and Strawderman}, and to $S_I(\beta)$ on p.1848 of \cite{Murphy.etal:1995}, where $\beta$ is the regression parameter.

We first define the conditional $\sigma$-field which appears on p. 454 of \cite{clement and Strawderman}. In our notation, $N+1=\tau$, and $\{W_{\tau}(\eta)> w(\eta)\}$ reduces to $\displaystyle \{S_{\tau}>C\}=\Omega$, by the definition of $\tau$. Then the conditioning $\sigma$-field in \cite{clement and Strawderman} is:
\begin{equation} H_{\tau}=\{ A \subseteq \Omega: A \cap \{\tau=n\}\in  \mathcal{O}_{n}^{[1]}\}, \quad\quad \mbox{with}\nonumber
\end{equation}
\begin{equation} \mathcal{O}_{n}^{[1]}=\sigma( F\cap \{\tau=n\}, f_n I_{n}^{cen}, b_nI_{n}^{cen}, F \in \mathcal{F}_{n-1}), \;\;\; (n=1,\ldots).\nonumber
\end{equation}
In $\displaystyle g_1= \sum_{j=1}^{\tau-1}f_jZ_j + f_{\tau}Z_{\tau}, I_{n}^{cen}(\sum_{j=1}^{n-1}f_jZ_j) $ is $\mathcal{O}_{n}^{[1]}$- measurable. Likewise, $f_{\tau}$ is $H_{\tau}$-measurable, since $\displaystyle I_{n}^{cen}f_{\tau}=I_{n}^{cen}f_n$, which is a generator of $\mathcal{O}_{n}^{[1]}$. Then
\begin{equation}
\label{equation6.6newf}
E[g_1|H_{\tau}] =\sum_{j=1}^{\tau-1} f_j Z_j + f_{\tau}E[ Z_\tau|H_{\tau}],
\end{equation}
which is essentially (2.8) of \cite{clement and Strawderman}. The unbiasedness in (\ref{equation6.6newf}) is stated in their Theorem 2.1, and proved in Supplementary material from \cite{clement and Strawderman}, under the additional assumption $(T1)$. For the estimating function in (2.9), \cite{clement and Strawderman} gives:

\begin{equation}
\label{equation6.7newf}
E[g_2|H_{\tau}] =\sum_{j=1}^{\tau-1} b_j( Z_{j}^2-\sigma^2) + b_{\tau}(E[Z_{\tau}^2|H_{\tau}]-\sigma^2)
\end{equation}

The conditioning $\sigma$-field $H_{\tau}$ in \cite{clement and Strawderman} is similar to our $\mathcal{O}_{\tau}$ in Section 6.1, but the approach to the treatment of the censored gap times is different. The terms of the sums in (\ref{equation6.6newf}--\ref{equation6.7newf}) represent the observed terms, while the last terms contain the first two conditional moments of $Z_{\tau}$, given the $\sigma$-field $H_{\tau}$ over the gap times at the time of censoring. With the notation
\begin{equation}
 W_j =\sigma^{-1}Z_j, \quad\quad\quad (j=1, \ldots)\nonumber
\end{equation}
(\ref{equation6.6newf}--\ref{equation6.7newf}) generate the terms corresponding to individuals in (2.12--2.13), or (2.14--2.15) of \cite{clement and Strawderman}. Specifically, let
\begin{equation}
\label{equation6.8new}
n^{-1}S_1=\sum_{j=1}^{\tau-1} f_jW_j+f_{\tau}K_1(w),
\end{equation}
\begin{equation}
\label{equation6.9new}
n^{-1}S_2=\sum_{j=1}^{\tau-1} b_j(W_{j}^2-1)+b_{\tau}(K_{2}(w)-1),
\end{equation}
where $\displaystyle w=\frac{C-S_{\tau-1}-\mu_{\tau}}{\sigma V_{\tau}}.$\\
In (\ref{equation6.8new}--\ref{equation6.9new}),  $\displaystyle K_{r}(w)=\int_{w}^{\infty}\frac{u^r}{1-F_0(w-)} dF_0(u),\; r=1,2$, where $F_0$ specifies the complete conditional distribution of $W_{\tau}$, given $H_{\tau}$, which is assumed to be known.

We now discuss conditioning in \cite{Murphy.etal:1995}. The estimating function $S_I(\beta)$ on p.1848 of \cite{Murphy.etal:1995} is our estimating function (\ref{eauation.1.18}), once we add in \cite{Murphy.etal:1995} the implicit restriction
$\displaystyle C_i-\sum_{l=1}^{j-1} Y_{il}>0 \; (i=1,\ldots).$ We propose that the conditioning field denoted ``obs" in \cite{Murphy.etal:1995}, but not spelt out there, be
our $\mathcal{O}_{\tau}$ in Section 6.1. To obtain the simple form of $S_I(\beta)$ we required assumption (B), which, with Condition \ref{Condition2.5} prove the unbiasedness of (\ref{eauation.1.18}) and $S_I(\beta)$. We note that neither condition(B) nor the unbiasedness of $S_I(\beta)$ is discussed in \cite{Murphy.etal:1995}.

Our Proposition \ref{proposition1.2}  proves this result also when (B) holds, since (B) implies both $(B_{f(\theta)})$ and $(B_{b(\eta)})$. Indeed, from (\ref{equation1.22}) and by Lemma \ref{lemma6.1}, we have :

\begin{equation}
\sigma(I_{ij}^{cen}, \;(j=1,\ldots))\subset \mathcal{O}(f_{i}(\theta))\subset \mathcal{O}_{\tau_i}\quad ( i=1, \ldots).\nonumber
\end{equation}
Then
\begin{eqnarray}
I_{ij}^{cen}E_{\theta}[Z_{ij}(\theta)|\mathcal{O}(f_{i}(\theta))]&=&I_{ij}^{cen}E_{\theta}[E_{\theta}[Z_{ij}(\theta)| \mathcal{O}_{\tau_i}]| \mathcal{O}(f_{i}(\theta))]\nonumber\\
&=&I_{ij}^{cen}E_{\theta}[E_{\theta}[Z_{ij}(\theta)| \sigma(I_{ij}^{cen}, j=1, \ldots)]| \mathcal{O}(f_{i}(\theta))]\nonumber\\
&=&I_{ij}^{cen}E_{\theta}[Z_{ij}(\theta)| \sigma(I_{ij}^{cen}, j=1, \ldots)],\nonumber
\end{eqnarray}
where we used (B) for the second equality. This proves $(B_{f(\theta)})$, and $(B_{b(\eta)})$ can be proved similarity.

Ultimately, it is the conditioning $\sigma$-fields of the partially observed gap times that matter, and ours  are identical to the ``obs"  in \cite{Murphy.etal:1995},
while $H_{\tau} $ in \cite{clement and Strawderman} is larger. Our smaller conditioning $\sigma$-fields are obtained by imposing condition (B).

The next step in the analysis with censored  data is replacing the conditioned terms in (\ref{equation6.6newf}--\ref{equation6.7newf}) with actual data, $i.e.$, imputing appropriate values in the estimating functions.
In \cite{clement and Strawderman} and \cite{Murphy.etal:1995}, the authors generate  data from a known distribution $F_0$ to replace the conditioned terms in (\ref{equation6.6newf}--\ref{equation6.7newf}), for all subjects. Thus, they use a parametric method of imputation.

The  method used in \cite{Murphy.etal:1995} to construct estimators of the regression and overdispersion parameters  is described succinctly in Section 2.4 of \cite{clement and Strawderman}. 
 The results from $n$ individuals are combined in  functions of type (\ref{eauation.1.18}), denoted $S_M(\theta)$ on p.457 of \cite{clement and Strawderman}. The estimating equation $S_M(\theta)=0$  is solved in $\theta$, which is then used to update $\hat{\sigma}$, as in (2.20) of \cite{clement and Strawderman}. A full solution $(\hat{\theta},\hat{\sigma})$ is now available, and the procedure continues with the next iteration, until the values of these estimators stabilize. A formal proof of convergence is not given.

We now compare our mathematical techniques with those in \cite{clement and Strawderman}, \cite{Murphy.etal:1995}, \cite{Balan.etal:2010} and  \cite{Xie and Yang:2003}. Technically, the most difficult part is proving the existence and consistency of the estimators, which are defined implicitly as roots of estimating equations. Conditions  for these results to hold are not given in \cite{Murphy.etal:1995}. An  appropriate comparison with our work can be made with the conditions in the Appendix of \cite{clement and Strawderman}. They base their theoretical results on \cite{Yuan and Jennrich: 1998}, while we rely on  the more recent results in \cite{Xie and Yang:2003}, generalized in \cite{Balan.etal:2005} and \cite{Balan.etal:2010}. Some of the conditions  in \cite{Yuan and Jennrich: 1998} are difficult to prove, $e.g.$, conditions (A2)-(A3) listed in \cite{clement and Strawderman}.  Our conditions in  Theorem \ref{theorem2} rely on analytical properties of the derivatives $\mathcal{D}_{n}(\eta)$ of our estimating functions, without appealing to the existence of a limiting process and any type of convergence to it. 
 Our conditions, $e.g.$ S(ii), can be expressed in terms of simpler random variables and deterministic functions that make-up our estimating functions(see Appendix \ref{app}).  We base our technical approach on \cite{Xie and Yang:2003},  \cite{Balan.etal:2005} and \cite{Balan.etal:2010}. However, we are dealing here with a more general model. In our case the conditional variance need not be analytically linked with the conditional mean.
 The function representing this mean may be nonlinear in the components of the parameters ( see Example \ref{example1.4} ) .

We prove all our results on the asymptotic normality of our estimators in Appendix \ref{appb.2}. Our Theorem  \ref{theroem5} gives the asymptotic normal distribution of  $\hat{\sigma}_{n}^2$ , taking into account the asymptotic distribution of $\hat{\theta}_n$. We followed
ideas from Section 3.5 of \cite{Yi.et al: 2012} and provided the technical details, as outlined in \cite{Park.J:2018}.

We conclude that our proposed method is a valid alternative to the methods  proposed in \cite{clement and Strawderman} and \cite{Murphy.etal:1995}. It has the advantage of employing a nonparametric imputation method and it relies on proven mathematical results. The numerical results for a small sample size and a short study are good.

\appendix

\section{Results for the derivative}\label{app}
We present below some conditions for the derivative of $\hat{g}_{n,1}^{obs}$ to satisfy $S(ii)$ of Theorem \ref{theorem2} for Example \ref{example1.5}. Technical results of this nature appeared often in the literature, starting with the work in \cite{Xie and Yang:2003}. A more recent reference is to lemmas 4.7-4.9 in \cite{Balan.etal:2010}. What is required is a mixture of analytical and stochastic conditions on the random variables that make up the terms of the estimating functions.  Let $X_i$ denote the $\tau_i\times q$ matrix of covariates, with $jh$ entry $\displaystyle (X_i)_{jh}=x_{ij,h},(j=1,2\ldots,\tau_i;h=1,\ldots,q)$ and $(i=1,\ldots),$ where $q$ is the maximum number of covariates over all individuals, assumed to be finite. Note that $\displaystyle \sum_{j=1}^{\tau_i}\| x_{ij}\|^2=tr(X_{i}^TX_i)$.

The stochastic condition needed here is:
\begin{equation}
\label{A1}
E [tr(X_{1}^TX_1)]<\infty
\end{equation}
This assumption ensures the $a.s.$ convergence of $\displaystyle n^{-1}\sum_{i=1}^n tr(X_{i}^TX_i)$ to the left hand side of $(\ref{A1})$, by the SLLN.

For ease of notation, we use $\mu_{ij}^{[s]}(\theta)$ for the $s^{th}$ derivative of $\mu: R\rightarrow R,$ assumed to be continuous and evaluated at $c_{ij}^T(\theta)x_{ij}$, and $B_r$ for $B_{r}(\theta_0), r> 0$. Let $h_{ij,k}(\theta)\in R$ be a family of functions indexed by $k\in R^{l},(i,j,=1,\ldots).$ We define
\begin{equation}
b_{r,n}(h)=\sup_{\theta',\theta\in B_r}\max_{1\leq i\leq n,1\leq j\leq \tau_i}V_{ij}^{-1}(\theta')\|h_{ij}(\theta)\|_l,\nonumber
 \end{equation}
 where $\displaystyle \|h_{ij}(\theta)\|_l=\max_{1\leq k\leq l}\mid h_{ij,k}(\theta)\mid$, which is equivalent to the Euclidean norm in $R^{l}$. For instance, with $\displaystyle h=\partial V/\partial\theta,\;\|h_{ij}(\theta)\|_p=\max_{1\leq a\leq p}\mid \partial V_{ij}(\theta)/\partial \theta_a\mid$. When $\displaystyle h=V,\;b_{r,n}(V)= \sup_{\theta',\theta\in B_r}\max_{1\leq i\leq n,1\leq j\leq \tau_i}V_{ij}^{-1}(\theta')V_{ij}(\theta)$.
For $c_{ij}(\theta)\in R^q$, we do not use a normalizer and define instead :
\begin{equation}
c_{n}^{[3]}(r)=\sup_{\theta\in B_{r}(\theta)}\max_{1\leq i\leq n,1\leq j\leq\tau_i}\max_{a,b,d\leq p, h\leq q}\bigg | \frac{\partial ^3c_{ij,h}(\theta)}{\partial \theta_a\partial \theta_b\partial \theta_d}\bigg |,\nonumber
\end{equation}
where we assumed that all partial derivatives are continuous. We define similarly $c_{n}^{[t]}(r), (t=1,2)$.

We now state the equiboundedness condition $(A2)$:
\begin{eqnarray}
\label{A2}
\lim_{r\rightarrow 0}\limsup_{n\rightarrow \infty}\max\{b_{r,n}(h), c_{n}^{[t]}(r)\}<\infty,
\end{eqnarray}
which holds for
\begin{equation}
h=\mu, \mu^{[s]}, V,\partial V/\partial \theta, (s, t=1,2,3).\nonumber
\end{equation}
We define  moduli of  equicontinuity  at $\theta_0\in R^P$. With $h_{ij,k}(\theta)$ as before, let:
\begin{equation}
\delta_{r,n}(h)=\sup_{\theta',\theta\in B_r}\max_{1\leq i\leq n,1\leq j\leq \tau_i}V_{ij}^{-1}(\theta')\|h_{ij}(\theta)-h_{ij}(\theta_0)\|_l.\nonumber
\end{equation}
For instance, $\displaystyle \delta_{r,n}(V)=\sup_{\theta',\theta\in B_r}\max_{1\leq i\leq n,1\leq j\leq \tau_i}V_{ij}^{-1}(\theta')|V_{ij}(\theta)-V_{ij}(\theta_0)|.$

As before, we define separately the moduli:
\begin{equation}
\delta(c_{n}^{[1]}(r))=\sup_{\theta\in B_r}\max_{1\leq i\leq n, 1\leq j\leq\tau_i}\bigg\|\frac{\partial c_{ij}^{T}(\theta)}{\partial \theta}-
\frac{\partial c_{ij}^{T}(\theta_0)}{\partial \theta}\bigg\|_{pq},\nonumber
\end{equation}
with $\delta(c_{n}^{[2]}(r)) $ defined in a similar manner.  The generic equicontinuity condition is:
 \begin{eqnarray}
 \label{A3}
\lim_{r\rightarrow 0}\limsup_{n\rightarrow \infty}\max\{\delta_{r,n}(h), \delta (c_{n}^{[t]}(r))\}=0,\nonumber\\
 \mbox{ with}\;\;
h=\mu, \mu^{[s]}, V,\partial V/\partial \theta, \;(s, t=1,2).
\end{eqnarray}
 If (\ref{A2}) holds, a less restricted form of (\ref{A3}) can be used, as equiboundedness of derivatives at some level implies equicontinuity at one level below. We now state:
\begin{proposition}
Assume that $(T2)$ holds, along with conditions (\ref{A1})--(\ref{A3}). Then condition $S(ii)$ of Theorem \ref{theorem2} holds for the derivative of $\hat{g}_{n,1}^{obs}(\theta)$.
\end{proposition}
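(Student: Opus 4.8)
The plan is to verify condition $S(ii)$ of Theorem~\ref{theorem2} for $\mathcal{D}_{n,1}(\theta)=-\partial\hat{g}_{n,1}^{obs}(\theta)/\partial\theta^{T}$ by differentiating $\hat{g}_{n,1}^{obs}$ in closed form, peeling off a leading term to which Lemma~\ref{lemma2} applies verbatim, and then disposing of a fixed finite number of remainder terms by the Cauchy--Schwarz\,/\,SLLN\,/\,equicontinuity bookkeeping already exhibited in the proof of Lemma~\ref{lemma2} and in lemmas~4.7--4.9 of \cite{Balan.etal:2010}. Using $\partial Z_{ij}(\theta)/\partial\theta^{T}=-f_{ij}^{T}(\theta)-Z_{ij}(\theta)V_{ij}^{-1}(\theta)\,\partial V_{ij}(\theta)/\partial\theta^{T}$, one obtains
\[
\mathcal{D}_{n,1}(\theta)=H_{n}(\theta)+R_{n}(\theta),\qquad H_{n}(\theta)=\sum_{i=1}^{n}\sum_{j=1}^{\tau_{i}}f_{ij}(\theta)f_{ij}^{T}(\theta)I_{ij}^{obs},
\]
where $R_{n}$ collects (a) the ``observed'' remainder terms $(\partial f_{ij}/\partial\theta^{T})Z_{ij}I_{ij}^{obs}$ and $f_{ij}Z_{ij}V_{ij}^{-1}(\partial V_{ij}/\partial\theta^{T})I_{ij}^{obs}$, and (b) the ``imputation'' remainder terms obtained by differentiating $f_{ij}(\theta)I_{ij}^{cen}\hat{R}_{nj}(\theta)$ with $\hat{R}_{nj}(\theta)=\big(\sum_{k=1}^{n}Z_{kj}(\theta)I_{kj}^{obs}\big)\big/\big(\sum_{k=1}^{n}I_{kj}^{cen}\big)$, namely $(\partial f_{ij}/\partial\theta^{T})I_{ij}^{cen}\hat{R}_{nj}(\theta)$ and $f_{ij}(\theta)I_{ij}^{cen}\,\partial\hat{R}_{nj}(\theta)/\partial\theta^{T}$. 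Since $(T2)$ makes every inner $j$-sum finite ($j\le m$), it suffices to treat each fixed $j\le m$ separately.

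For the leading term I would check the hypotheses of Lemma~\ref{lemma2} from (\ref{A1})--(\ref{A3}): (\ref{A1}) is exactly $E[tr(X_{1}^{T}X_{1})]<\infty$; writing $f_{ij}(\theta)=V_{ij}^{-1}(\theta)\,\mu^{[1]}(c_{ij}^{T}(\theta)x_{ij})\,(\partial c_{ij}^{T}(\theta)/\partial\theta)x_{ij}$ and iterating the elementary product inequalities $b_{r,n}(gh)\le b_{r,n}(g)b_{r,n}(h)$ and $\delta_{r,n}(gh)\le b_{r,n}(g)\delta_{r,n}(h)+\delta_{r,n}(g)b_{r,n}(h)$ (with the analogous estimates for the $c_{n}^{[t]}$ moduli), the equiboundedness (\ref{A2}) for $h=\mu^{[1]},V,\partial V/\partial\theta$ and for $c^{[1]}$ gives $\limsup_{n}b_{r,n}(\dot\mu\,\partial c^{T}/\partial\theta)<\infty$, while (\ref{A3}) gives $\lim_{r\to0}\limsup_{n}\delta_{r,n}(\dot\mu\,\partial c^{T}/\partial\theta)=0$. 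Lemma~\ref{lemma2} then yields $\lim_{r\to0}\limsup_{n}\sup_{\theta\in B_{r}(\theta_{0})}n^{-1}|||H_{n}(\theta)-H_{n}(\theta_{0})|||=0$ $a.s.$

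For the observed remainder (a), I would bound $n^{-1}|||R_{n}^{(a)}(\theta)-R_{n}^{(a)}(\theta_{0})|||$ term by term exactly as in the proof of Lemma~\ref{lemma2}: each term splits as (coefficient difference)$\times$(data) $+$ (coefficient)$\times$(data difference), i.e. a product of a per-subject random factor that is a polynomial in $|Y_{ij}|$ and $\|x_{ij}\|$ and is bounded over $B_{r}(\theta_{0})$ by (\ref{A2}) (in particular $\sup_{\theta\in B_{r}}|Z_{ij}(\theta)|\le V_{ij}^{-1}(\theta)(|Y_{ij}|+|\mu_{ij}(\theta)|)$ is so bounded), times a sum of products of the analytic moduli $\delta_{r,n}(h)$ for $h\in\{\mu,\mu^{[s]},V,\partial V/\partial\theta,V^{-1}\}$ and $\delta(c_{n}^{[t]}(r))$, plus the extra modulus $\sup_{\theta\in B_{r}}|Z_{ij}(\theta)-Z_{ij}(\theta_{0})|$. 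Applying the SLLN to $n^{-1}\sum_{i}(\cdot)$ (a finite $j$-sum by $(T2)$, integrable by (\ref{A1}) and the standing finite-second-moment hypothesis on $Y_{ij}$) leaves $\limsup_{n}$ equal to a constant times the relevant moduli and $E[\sup_{\theta\in B_{r}}|Z_{1j}(\theta)-Z_{1j}(\theta_{0})|\,I_{1j}^{obs}]$, all of which vanish as $r\to0$ by (\ref{A3}) and dominated convergence.

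The genuinely delicate part is the imputation remainder (b), because $\hat{R}_{nj}(\theta)$ couples all subjects through the ratio of empirical sums. I would first note that the ordinary SLLN at the single point $\theta_{0}$ gives $\hat{R}_{nj}(\theta_{0})\to E[Z_{1j}(\theta_{0})I_{1j}^{obs}]/E[I_{1j}^{cen}]$ $a.s.$ (the denominator is positive, indices with $E[I_{1j}^{cen}]=0$ having been discarded in Step~\ref{step2.1}), so $\hat{R}_{nj}(\theta_{0})$ is $a.s.$ bounded for large $n$; the pieces $(\partial f_{ij}(\theta)/\partial\theta^{T}-\partial f_{ij}(\theta_{0})/\partial\theta^{T})I_{ij}^{cen}\hat{R}_{nj}(\theta_{0})$ and, after writing $\partial Z_{kj}/\partial\theta^{T}=-f_{kj}^{T}-Z_{kj}V_{kj}^{-1}\partial V_{kj}/\partial\theta^{T}$, the $\partial\hat{R}_{nj}/\partial\theta^{T}$ piece, are then controlled just as in paragraph two together with (\ref{A1}). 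The remaining contribution $f_{ij}(\theta)I_{ij}^{cen}(\hat{R}_{nj}(\theta)-\hat{R}_{nj}(\theta_{0}))$ is the obstacle: it requires $\sup_{\theta\in B_{r}}|\hat{R}_{nj}(\theta)-\hat{R}_{nj}(\theta_{0})|$ to be uniformly small, i.e. a uniform-in-$\theta$ SLLN rather than a pointwise one. The plan is the standard envelope device — dominate the numerator difference by $\sum_{k}\big(\sup_{\theta\in B_{r}}|Z_{kj}(\theta)-Z_{kj}(\theta_{0})|\big)I_{kj}^{obs}$, apply the ordinary SLLN to this single i.i.d. sum (integrable by the second-moment hypothesis and (\ref{A2})), divide by $n^{-1}\sum_{k}I_{kj}^{cen}\to E[I_{1j}^{cen}]>0$, and let $r\to0$ by dominated convergence. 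Once this uniform convergence of the empirical ratios (with denominators bounded away from zero and adequate integrability for the envelopes) is established, reassembling $H_{n}$ and the finitely many remainder pieces gives $S(ii)$ for $\mathcal{D}_{n,1}$, completing the proof.
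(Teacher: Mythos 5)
Your decomposition $\mathcal{D}_{n,1}(\theta)=H_{n}(\theta)+R_{n}(\theta)$, with Lemma \ref{lemma2} disposing of the leading term and the nonimputed and imputed remainders controlled through the moduli in (\ref{A1})--(\ref{A3}), the SLLN, and an envelope argument for the empirical ratio $\hat{R}_{nj}(\theta)$, is essentially the paper's own route: the paper applies Lemma \ref{lemma2} to $H_{n}$ and refers the nonimputed and the imputed remainder terms to lemmas 3.2.1--3.2.5 and 3.2.9--3.2.10 of \cite{Liu.etal:2018}. Your proposal is correct and in fact more self-contained than the published argument, which defers all remainder estimates to that thesis.
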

This result covers the content of Proposition 3.2 in \cite{Liu.etal:2018}.  
We streamlined in (\ref{A2})--(\ref{A3}) the notation and the conditions found in \cite{Liu.etal:2018}. The proof of this result can be found in lemmas 3.2.1--3.2.5 and 3.2.9--3.2.10 for the nonimputed and the imputed part of the derivatives, respectively, all in \cite{Liu.etal:2018}.

\begin{example}
\label{appendixexample1}
We illustrate conditions (\ref{A2})--(\ref{A3}) on Example \ref{example1.4} with formula (\ref{equation1.5}). We assume that $(T2)$ holds. Without loss of generality, we restrict $\rho$ to the interval $(0,1)$. We denote by $\displaystyle \dot{f}_j,\ddot{f}_j,\dddot{f}_j$ the first three derivatives of $ f_j$ in (\ref{equation1.3}). We first examine $\displaystyle V_{ij}(\rho)$, and show that there exist $r_0>0$ and constants $0<C_1<C_2$, such  that, for $0<r\leq r_0$,
\begin{equation}
\label{A4}
C_1\leq \sup_{\rho\in B_{r}(\rho_{0})}\max_{1\leq i\leq n; 1\leq j\leq\tau_i}V_{ij}^{-1}(\rho)\leq C_2.
\end{equation}
 Let $0<\rho_{0}<1$ and take $0<r_{0}<\min\{\rho_{0},1-\rho_{0}\}$. As $\dot{f}_{j}(\rho)>0$,  $f_{j}(\rho)$ in (\ref{equation1.3}) is increasing in $\rho$ and decreasing in $j$. For $\rho\in B_r(\rho_{0}),r\leq r_0,$
\begin{equation}
1+f_j(\rho_0-r_0)\leq V_{ij}^2(\rho)\leq 1+f_{j}(\rho_0+r_0)\;\;(j=1,\ldots,\tau_i, i=1,\ldots,n).\nonumber
\end{equation}
Then
\begin{equation}
 C_{2}^{-2}=1\leq 1+f_{j}(\rho_{0}-r_{0})\leq V_{ij}^2(\rho)\leq 1+f_{1}(\rho_{0}+r_0)=C_{1}^{-2}.\nonumber
\end{equation}

The first inequality above holds because, in (\ref{equation1.3}),
$\displaystyle (\rho_{0}-r_{0})(j-2)+1>0. $

We obtain (\ref{A4}) by taking reciprocals. So (\ref{A2}) holds for $V$, and, for some $C>0$,
\begin{equation}
\delta_{r,n}(V) \leq \sup_{\rho \in B_r(\rho_0)}\max_{1\leq i\leq n,1\leq j\leq\tau_i,} C\mid V_{ij}(\rho) - V_{ij}(\rho_0)\mid \nonumber
\end{equation}
We now show (\ref{A3}).
Since $\displaystyle f_{j}(\rho_0+r_0)\rightarrow 0$ as $\displaystyle j\rightarrow \infty$, for $\displaystyle 0<\varepsilon < \rho_0$, we can find a first integer $\displaystyle j_0=j(r_0,\varepsilon)\geq 1/\varepsilon-1/(\rho_0+r_0)+2$, with $\displaystyle f_j(\rho_0+r_0)\leq \varepsilon$ $\displaystyle(j=j_0,\ldots)$. If $\displaystyle r\leq r_0$, then $\displaystyle j(r,\varepsilon)\leq j_0$. Since $f_j$ is continuous  at $\rho_0$ for each $j$, let $\displaystyle r(\varepsilon, j_0)>0$ be such that, if
$\displaystyle \rho\in B_{r(\varepsilon,j_0)}(\rho_0)$, $\displaystyle \max_{1\leq j\leq j_0}\mid V_{j}(\rho)-V_{j}(\rho_0)\mid\leq \varepsilon$. Since $f_j$ is increasing in $\rho$ and from the definition of $j_0$,
$\displaystyle \sup_{j\geq 1}\mid V_{j}(\rho)-V_{j}(\rho_0)\mid \leq 2 \varepsilon$ which, by (\ref{A4}), proves (\ref{A3}) for V. As (\ref{A2}) holds for V, we ignore it as normalizer for other functions which appear in (\ref{A2}) and (\ref{A3}).

Through a similar calculation we can show that (\ref{A2}--\ref{A3}) hold for $\dot{V}$.

We now examine conditions for $\mu$ to satisfy (\ref{A2}--\ref{A3}). Since $\mu$ is the identity function here, its derivatives satisfy (\ref{A2}--\ref{A3}). When $(T2)$ holds, not only is $j$ or $\displaystyle \sup_{i}\tau_i $ bounded for all individuals, but so is the number of components of $\displaystyle c_j(\theta)$, which depends on $j, j=1,\ldots,m$. Therefore $\displaystyle c_j(\theta) \in R^q$, for some $q\geq 1$, for all $j$ and $\theta \in R^3$. From examples \ref{example1.4}  and  \ref{example1.5}, all components $\displaystyle c_{j,h}(\theta)$ are continuous at $\theta_0$, so for this finite family, we have (\ref{A2}--\ref{A3}) for $c$. Now

 \begin{equation}\mid \mu_{ij}(\theta)\mid= \mid c_{ij}^T(\theta)x_{ij}\mid\leq \parallel c_{ij}(\theta)\parallel_q\parallel x_{ij}\parallel_q\leq\max_{j\leq \tau_i,h\leq q}\mid c_{j,h}(\theta)\mid \parallel x_{ij}\parallel_q,\nonumber
 \end{equation}

 \begin{equation}
 b_{r, n}(\mu)\leq b_{r}(c)\max_{1\leq i\leq n, j\leq \tau_i}\parallel x_{ij}\parallel_q.\nonumber
 \end{equation}
 It means that $\mu$ satisfies (\ref{A2}), if $x$ satisfies (\ref{A2}), $i.e.$, if :
\begin{equation}
\label{A5}
\limsup_{n\rightarrow \infty}\max_{1\leq i\leq n, j\leq\tau_i}\parallel x_{ij}\parallel_q< \infty.
\end{equation}

Similarly, $\displaystyle \delta_{r,n}(\mu)\leq \delta_r(c)b_{r,n}(x)$ and since  $\displaystyle \lim_{r\rightarrow 0}\delta_r(c)=0$, $\mu$ satisfies (\ref{A3}), if (\ref{A5}) holds. We note that (\ref{A5}) holds in this example, because we can conceive of a universal upper bound for the BMI's for all individuals. For the function $c$, it suffices to show that the first three partial derivatives satisfy (\ref{A2}), as, by the mean value theorem, this implies (\ref{A3}) for the first two partial derivatives, calculated from Example \ref{example.1.5new} . Since $\displaystyle \theta^T=(\gamma_0,\gamma_1, \rho), \theta \in B_{r}(\theta_0)$ and $j$ is bounded,

 \begin{equation}
  \parallel\partial c_{j}(\theta)/\partial \theta\parallel_3\leq C\max_{1\leq j\leq m}\max \{f_{j}(\rho),\dot{f}_{j}(\rho), r\}, \nonumber
  \end{equation}
 $ C> 0$  a constant.

Now (\ref{A2}) follows from the equicontinuity and equiboundedness of the functions on the right hand side. Likewise, the finite set of second or third order derivatives is  bounded by $\displaystyle C \max\{ \dot{f}_j(\rho),\ddot{f}_{j}(\rho)\}, C\max\{\ddot{f}_{j}(\rho), \dddot{f}_{j}(\rho)\}$, respectively.
\end{example}

\begin{remark}
\label{remarkA.1}
Assume that (\ref{A4}) holds. From the definitions of $\delta_{r,n}(h)$ and $b_{r,n}(h)$ in Lemma \ref{lemma2},  it suffices to show that (\ref{A2}--\ref{A3}) hold for $c_{n}^{[1]}(r)$.
\end{remark}

\section{Proofs of asymptotic normality}
\label{appb.2}
\subsection{Proof of Theorem \ref{equ.th.4.1.1}}

 We write $\displaystyle u_i=u_{i,1}$, with $\displaystyle u_{i}^{T}=(u_{ik})_{k=1,2,\cdots p}$, $i\geq 1$. Since $E_{\theta_0}(u_{1})=0$ by Proposition \ref{proposition1.2}, the proof of Theorem \ref{equ.th.4.1.1} follows from Theorem 29.5 of \cite{Billingsley:1995}, once we show that
\begin{equation}
\max_{1\leq k\leq p}E_{\theta_0}[u_{1k}^2]<\infty.\nonumber
\end{equation}
This will follow if all entries of $\Sigma$ have finite expectations. For a fixed $i$, and $1\leq k,l\leq p,$  to calculate the expectation of
$\displaystyle u_{ik}u_{il}$, we use
\begin{eqnarray}
&&I_{ij}^{obs}I_{ij'}^{obs}=I_{i\max\{j,j'\}}^{obs},\quad \quad   I_{ij}^{obs}I_{ij'}^{cen}=I_{ij'}^{cen},\;\;\;\; \mbox{if}\;\;\;\; j'>j,\nonumber\\
&&I_{ij}^{cen}I_{ij'}^{cen}=I_{ij}^{cen},\;\; \quad \mbox{if}\;\; j=j',\;\; \mbox{and}\; \;0 \;\;\mbox{otherwise},\nonumber
\end{eqnarray}
then take expectations and obtain (\ref{eqn.n.4.1.4.1.26}) as the $(kl)$ entry of $\Sigma$, which exist by (\ref{equ.con.4.1.1.f}).

\subsection{Proof of Proposition \ref{Pro.4.1.2}}

Since
\begin{equation}
n^{-1/2}\hat{g}_{n,1}^{obs}(\theta_{0})=n^{-1/2}g_{n,1}^{obs}(\theta_{0})+n^{-1/2}(\hat{g}_{n,1}^{obs}(\theta_{0})-g_{n,1}^{obs}(\theta_{0})),\nonumber
\end{equation}
by Theorem \ref{equ.th.4.1.1} and Slutsky's theorem, it suffices to prove
\begin{equation}
n^{-1/2}(\hat{g}_{n,1}^{obs}(\theta_{0})-g_{n,1}^{obs}(\theta_{0}))\rightarrow 0 \;\mbox{in probability.}\nonumber
\end{equation}
Now, $ \displaystyle n^{-1/2}(g_{n,1}^{obs}(\theta_{0})-\hat{g}_{n,1}^{obs}(\theta_{0}))=\sum_{j=1}^{m}\bigg (n^{-1/2}\sum_{i=1}^nf_{ij}(\theta_0)I_{ij}^{cen}\bigg)\delta_{n,j}^{cen},
$

where
$\displaystyle
\delta_{n,j}^{cen}=-E_{\theta_0}[Z_{1j}(\theta_0)I_{1j}^{obs}]/
E_{\theta_0}[I_{1j}^{cen}]+\sum_{k=1}^{n}Z_{kj}(\theta_0)I_{kj}^{obs}/
\sum_{k=1}^{n}I_{kj}^{cen}.$
With condition (\ref{equ.con.4.1.1.f}), by Theorem 29.5 of \cite{Billingsley:1995}, for $1\leq j\leq m,$
\begin{equation}
n^{-1/2}\sum_{i=1}^nf_{ij}(\theta_0)I_{ij}^{cen}\; \rightarrow\; N(E_{j}^{cen}(\theta_0),\Sigma_{j}^{cen}(\theta_0)),\mbox{ in distribution, }\nonumber
\end{equation}
 where the $k^{th}$ component of $ \displaystyle E_{j}^{cen}(\theta_0)$ is $ \displaystyle  E_{\theta_0}[f_{1j,k}(\theta_0)I_{1j}^{cen}]$ and the $kl$-entry of $\displaystyle \Sigma_{j}^{cen}(\theta_0)$ is, for $1\leq k,l\leq p,$
\begin{equation}
E_{\theta_0}[ (f_{1j,k}(\theta_0)I_{1j}^{cen}-E_{j,k}^{cen}(\theta_0))(f_{1j,l}(\theta_0)I_{1j}^{cen}-E_{j,l}^{cen}(\theta_0))].\nonumber
\end{equation}
For each $j$, $\delta_{n,j}^{cen}\rightarrow 0$ $a.s.$, when $n\rightarrow\infty$, by the SLLN.  Therefore, by Theorem 7.7.1 of \cite{Ash and Doleans:2000}, each of the terms in the sum above converges to zero in distribution, hence in probability.

\subsection{Proof of Theorem \ref{theorem.clt.4.13}}

We use the mean value theorem to write, with $\hat{\theta}_{n}$ given by Theorem \ref{theorem2},
\begin{equation}
\hat{g}_{n,1}^{obs}(\hat\theta_{n})-\hat{g}_{n,1}^{obs}(\theta_{0})=-\mathcal{D}_{n,1}(\bar{\theta}_{n})(\hat\theta_{n}-\theta_{0}),\;\mbox{where}\;||\bar{\theta}_{n}-\theta_0||\leq ||\hat{\theta}_{n}-\theta_0||.\nonumber
\end{equation}
Then
\begin{eqnarray}
\hat{g}_{n,1}^{obs}(\theta_{0})&=&\mathcal{D}_{n,1}(\bar{\theta}_{n})(\hat\theta_{n}-\theta_{0})\nonumber\\
&=&n\bigg[n^{-1}(\mathcal{D}_{n,1}(\bar{\theta}_{n})-\mathcal{D}_{n,1}(\theta_{0}))+n^{-1}\mathcal{D}_{n,1}(\theta_{0})\bigg](\hat\theta_{n}-\theta_{0})\nonumber\\
&=&n\bigg(o_P(1)+n^{-1}\mathcal{D}_{n,1}(\theta_{0})-\mathcal{D}_{1}(\theta_{0})+\mathcal{D}_{1}(\theta_{0})\bigg)(\hat\theta_{n}-\theta_{0})\nonumber
\end{eqnarray}

where we used condition $S(ii)$ and the conclusion of Theorem \ref{theorem2}. From the hypotheses,
\begin{equation}
\hat{g}_{n,1}^{obs}(\theta_{0})=n(o_P(1)+\mathcal{D}_{1}(\theta_0))(\hat\theta_{n}-\theta_{0}),\nonumber
\end{equation}
so
\begin{equation}
 n^{1/2}(\hat\theta_{n}-\theta_{0})=(o_P(1)+\mathcal{D}_{1}(\theta_0))^{-1}n^{-1/2}\hat{g}_{n,1}^{obs}(\theta_{0}).\nonumber
\end{equation}
By Proposition \ref{Pro.4.1.2},

 \begin{equation}
 n^{-1/2}\hat{g}_{n,1}^{obs}(\theta_{0})\rightarrow N(0, \Sigma). \nonumber
 \end{equation}
  By Theorem 3.2.1 of \cite{TYL:1999},

\begin{equation}
n^{1/2}(\hat\theta_{n}-\theta_{0})\rightarrow N (0, \mathcal{D}_{1}^{-1}(\theta_0)\Sigma( \mathcal{D}_{1}^{-1}(\theta_0))^T).\nonumber
\end{equation}

\subsection{Proof of Theorem \ref{theroem5}}

 By the mean value theorem and Theorem \ref{theorem2}, we obtain, for a large n,

\begin{eqnarray}\label{normaility.2.1}
&&-\hat{g}_{n,2}^{obs}(\theta_0,\sigma_{0}^2)=
\hat{g}_{n,2}^{obs}(\hat{\theta}_n,\hat{\sigma}_{n}^2)-\hat{g}_{n,2}^{obs}(\theta_0,\hat{\sigma}_{n}^2)
+\hat{g}_{n,2}^{obs}(\theta_0,\hat{\sigma}_{n}^2)-\hat{g}_{n,2}^{obs}(\theta_0,\sigma_{0}^2)\nonumber\\
&&=\frac{\partial \hat{g}_{n,2}^{obs}(\bar{\theta}_{n},\hat{\sigma}_{n}^2)}{\partial \theta^T}(\hat{\theta}_n-\theta_0)+\frac{\partial \hat{g}_{n,2}^{obs}(\theta_0,\bar{\sigma}_{n}^2)}{\partial\sigma^2}(\hat{\sigma}_{n}^2-\sigma_{0}^2),\nonumber
\end{eqnarray}

where $\displaystyle \parallel\bar{\theta}_{n}-\theta_0\parallel\leq\parallel\hat{\theta}_n-\theta_0\parallel$,
$\displaystyle \parallel\bar{\sigma}_{n}^2-\sigma_{0}^2\parallel\leq\parallel\hat{\sigma}_{n}^2-\sigma_{0}^2\parallel$.\\\\
Using (\ref{equ.new.final.4.2.15}), $S(ii)$ of Theorem \ref{theorem2} and the consistency of $(\hat{\theta}_n,\hat{\sigma}_{n}^2),$
\begin{eqnarray}
-\hat{g}_{n,2}^{obs}(\theta_0, \sigma_{0}^2)
&=&n\bigg (o_{P}(1)+\frac{1}{n}\frac{\partial \hat{g}_{n,2}^{obs}(\theta_0,\sigma_{0}^2)}{\partial \theta^T}\bigg)(\hat{\theta}_{n}-\theta_0)\nonumber\\
&+&n\bigg (o_{P}(1)+\frac{1}{n}\frac{\partial \hat{g}_{n,2}^{obs}(\theta_0,\sigma_{0}^2)}{\partial \sigma^2}\bigg)(\hat{\sigma}_{n}^{2}-\sigma_{0}^2)\nonumber\\
&=&n \bigg(o_P(1)-\mathcal{D}_{3}(\eta_0)\bigg)(\hat{\theta}_{n}-\theta_0)+n\bigg(o_P(1)-\mathcal{D}_{2}(\eta_0)\bigg)(\hat{\sigma}_{n}^{2}-\sigma_{0}^2)\nonumber
\end{eqnarray}
In the second equality  we used (\ref{equ.new.final.4.2.16}). Therefore,
\begin{eqnarray}
&&\bigg(\mathcal{D}_{2}(\eta_0)-o_{P}(1)\bigg)n(\hat{\sigma}_{n}^{2}-\sigma_{0}^2)\nonumber\\
&=&\hat{g}_{n,2}^{obs}(\theta_0, \sigma_{0}^2)
-\bigg(\mathcal{D}_{3}(\eta_0)-o_P(1)\bigg)n^{1/2}n^{1/2}(\hat{\theta}_{n}-\theta_0)\nonumber\\
&=&\hat{g}_{n,2}^{obs}(\theta_0, \sigma_{0}^2)-\bigg(\mathcal{D}_{3}(\eta_0)-o_P(1)\bigg)n^{1/2}\bigg(\mathcal{D}_{1}(\theta_0)+o_P(1)\bigg)^{-1}n^{-1/2}\hat{g}_{n,1}^{obs}(\theta_0)\nonumber\\
&=&\hat{g}_{n,2}^{obs}(\theta_0, \sigma_{0}^2)-\bigg(\mathcal{D}_{3}(\eta_0)-o_P(1)\bigg)\bigg(\mathcal{D}_{1}(\theta_0)+o_P(1)\bigg)^{-1}\hat{g}_{n,1}^{obs}(\theta_0),\nonumber
\end{eqnarray}
where we used  the proof of Theorem \ref{theorem.clt.4.13} in the second equality.
With the notation  above,
\begin{eqnarray}
&&\bigg(\mathcal{D}_{2}(\eta_0)-o_{P}(1)\bigg)n^{1/2}(\hat{\sigma}_{n}^{2}-\sigma_{0}^2)\nonumber\\
&=&n^{-1/2}\sum_{i=1}^{n}\bigg(\hat{u}_{i,2}(\eta_0)
-\mathcal{D}_{3}(\eta_0)\mathcal{D}_{1}^{-1}(\theta_0)\hat{u}_{i,1}(\theta_0)\bigg)+o_{P}(1).\nonumber
\end{eqnarray}
Recall that, by Theorem \ref{theorem.clt.4.13}, $n^{-1/2}\hat{g}_{n,1}^{obs}(\theta_0)$ is $O_{P}(1)$. As in the proof of Proposition \ref{Pro.4.1.2}, we can replace $\hat{u}_{i,1}(\theta_0) $ and $\hat{u}_{i,2}(\eta_0)$ above by $u_{i,1}$ and $u_{i,2}$, respectively. Now
\begin{equation}
n^{1/2}(\hat{\sigma}_{n}^{2}-\sigma_{0}^2)=\mathcal{D}_{2}^{-1}(\eta_0)n^{-1/2}\sum_{i=1}^{n}Q_i(\theta_0,\sigma_{0}^{2})+o_{P}(1).\nonumber
\end{equation}
Since $\displaystyle\{Q_i(\theta_0,\sigma_{0}^{2})\}_{i\geq 1}$ are $i.i.d.$ random variables with mean 0 and variance $\Omega$, by the central limit theorem,
\begin{equation}
n^{-1/2}\sum_{i=1}^{n}Q_i(\theta_0,\sigma_{0}^{2})\rightarrow  N(0,\Omega). \nonumber
\end{equation}

\section{Acknowledgements}
This research was in part supported by Natural Sciences and Engineering Research Council of Canada, notably through grant OGP009068 to M. Alvo, a grant awarded to P-J. Bergeron by the Canadian Institute for Health Research and a University of Ottawa Admission Scholarship awarded to H.Y. Liu. The authors would also like to thank Suzana Diaconescu for her contribution to examples \ref{example1.4}, \ref{example.1.5new} and \ref{appendixexample1} in this article.

\begin{supplement}\label{supplement}
\stitle{}
The model we are using is

\begin{eqnarray}
\label{eqn.7.1}
\mu_{ij}(\theta)&=&28+\gamma_0+\gamma_1\overline{BMI}_{ij}\nonumber\\
&+&\frac{\rho}{\rho(j-1)+1-\rho}
\bigg[\sum_{l=1}^{j-1}{Y_{il}}-\sum_{l=1}^{j-1}(28+\gamma_0+\gamma_1\overline{BMI}_{il})\bigg],
\end{eqnarray}

\begin{eqnarray}
\label{eqn.7.2}
\quad \quad\quad\quad\quad V_{ij}(\theta)=\left|1+\frac{\rho}{\rho(j-1)+1-\rho}\right|^{1/2},
\end{eqnarray}
To make things more compatible with modern packages in R, particularly for the \texttt{survival} package, the `` tidy data " format  of \cite{WickhamH:2014} is our format of choice.

\begin{table}[!htbp]
\centering
\caption{Simulated tidy data format }\label{parset}
\resizebox{\columnwidth}{!}{
\begin{tabular}{lcccccc}
 Subject ID & Start Time & End Time & Gap Time & Event Indicator & Event Number & $\overline{BMI}$\\
1 & 0 & $S_{1,1}$ & $Y_{1,1}$ & 1 & 1 & $\overline{BMI}_{1,1}$\\
1 & $S_{1,1}$ & $S_{1,2}$ & $Y_{1,2}$ & 1 & 2 &  $\overline{BMI}_{1,2}$\\
$\vdots$ & $\vdots$ & $\vdots$ & $\vdots$ & $\vdots$ & $\vdots$ & $\vdots$\\
1 & $S_{1,m_1-1}$ & $C_1$ & $Y_{1,m_1}$ & 0 & $m_1$ & $\overline{BMI}_{1,m_1}$\\
2 & 0 & $S_{2,1}$ & $Y_{2,1}$ & 1  & 1 & $\overline{BMI}_{2,1}$\\
$\vdots$ & $\vdots$ & $\vdots$ & $\vdots$ & $\vdots$ & $\vdots$ & $\vdots$\\
n & $S_{n,m_n-1}$ & $C_n$ & $Y_{n,m_n}$ & 0 & $m_n$ & $\overline{BMI}_{n,m_n}$\\
\end{tabular}
}
\end{table}
Models were fitted with four scenarios of errors (all with mean zero and variance one), and comparison was always with C\&S assuming normal errors. The four scenarios are standard normal errors,  shifted exponential with rate one, uniform$[-a,a]$ where $a= \sqrt{3}$, and log-normal distribution.

For numerical stability, the implementation of equations (\ref{equation.1.19}) and (\ref{equation.1.21}) require some minor adjustments in ``edge cases'', that is, some individual terms may return values of \texttt{NA} (``not available'', that is, a missing value) or \texttt{NaN} (``not a  number''), which can propagate to the solution and prevent convergence, e.g. if a denominator is zero or if an entry is missing. Those problematic terms are mapped to zero, which perturbs the equation but  prevents failures of convergence.


As explained in section 5, $\mu_{ij}(\theta)$ is given by equation (\ref{eqn.7.1}) and the variance function $V_{ij}(\theta)$ is defined by equation (\ref{eqn.7.2}). Tables ~\ref{parset1} to ~\ref{parset3} present the results for combinations of $n \in \{50,200\}$ and $C_{\max} \in \{125, 225\}$. The case $n=50, C_{\max}=125$ is presented in Table ~\ref{tablelabe2}, Section 5.

\begin{table}[!htbp]
\centering
\caption{Comparison with C\&S, $n=50$, $V_{ij}$ in (\ref{eqn.7.2}), $ C_{\max}=225$  }\label{parset1}
\resizebox{\columnwidth}{!}{
\begin{tabular}{lcccccccc}
  {}& ENOES=7.4{} &   {} & Our method &  {} &  {}   & {} & $C\&S$ $F_0=$ Normal & {}\\ 
  $Estimator$ & Parameter &  $|rBias|$ & ESE & ASE & {} & $|rBias|$ & ESE & ASE\\
  Normal errors & $\gamma_0$  & 0.045& 0.202 & 0.227 & {}  & 0.050 & 0.185& 0.192\\
  {$\mu_{ij}(\theta):=(\ref{eqn.7.1})$} & $\gamma_1$ &0.024& 0.263 & 0.353 & {}  & 0.054 & 0.242 & 0.270\\
  {$V_{ij}(\theta):=(\ref{eqn.7.2})$} & $\rho$ & 0.537 & 0.036 & 0.037 & {}  & 0.086& 0.028&0.033\\
  {} & $\sigma^2$  & 0.003 &0.882 & 0.900 & {} & 0.008& 0.828& 0.815\\
Exponential errors & $\gamma_0$  & 0.050 & 0.403 & 0.240 & {}  & 0.091 & 0.444& 0.199\\
  {$\mu_{ij}(\theta):=(\ref{eqn.7.1})$} & $\gamma_1$ &0.071 & 0.777 & 0.371 & {}  & 0.316 & 0.949 & 0.289\\
  {$V_{ij}(\theta):=(\ref{eqn.7.2})$} & $\rho$ & 0.321 & 0.180 & 0.036 & {}  & 0.236 & 0.068 &0.031\\
  {} & $\sigma^2$  & 0.068 & 4.718 & 1.049 & {} & 0.074 & 5.255 & 0.938\\
Uniform errors & $\gamma_0$  & 0.018 & 0.206 & 0.221 & {}  & 0.006 & 0.192 & 0.189\\
  {$\mu_{ij}(\theta):=(\ref{eqn.7.1})$} & $\gamma_1$ &0.137 & 0.308 & 0.342 & {}  & 0.004 & 0.292 & 0.273\\
  {$V_{ij}(\theta):=(\ref{eqn.7.2})$} & $\rho$ & 0.330 & 0.040 & 0.035 & {}  & 0.054 & 0.034 &0.032\\
  {} & $\sigma^2$  & 0.004 &0.558 & 0.591 & {} & 0.010& 1.365& 0.542\\
Log-normal errors & $\gamma_0$  & 0.032& 0.240 & 0.254 & {}  & 0.010 & 0.191 & 0.190\\
  {$\mu_{ij}(\theta):=(\ref{eqn.7.1})$} & $\gamma_1$ &0.184 & 0.527 & 0.441 & {}  & 0.007 & 0.276 & 0.253\\
  {$V_{ij}(\theta):=(\ref{eqn.7.2})$} & $\rho$ & 0.350 & 0.112 & 0.049 & {}  & 0.015& 0.037 &0.031\\
  {} & $\sigma^2$  & 0.013 &3.302 & 2.667 & {} & 0.043 & 2.295 & 1.931\\
\end{tabular}
}
\label{tablelabe5}
\end{table}

\begin{table}[!htbp]
\centering
\caption{Comparison with C\&S, $n=200$, $V_{ij}$ in (\ref{eqn.7.2}), $ C_{\max}=225$ }\label{parset2}
\resizebox{\columnwidth}{!}{
\begin{tabular}{lcccccccc}
  {}& ENOES=7.4{} &   {} & Our method &  {} &  {}   & {} & $C\&S$ $F_0=$ Normal & {}\\ 
  $Estimator$ & Parameter &  $|rBias|$ & ESE & Bias & {} & $|rBias|$ & ESE & Bias\\
  Normal errors & $\gamma_0$  & 0.025 & 0.110 & -0.015 & {}  & 0.002 & 0.097 & -0.001\\
  {$\mu_{ij}(\theta):=(\ref{eqn.7.1})$} & $\gamma_1$ &0.180& 0.169 & 0.072 & {}  & 0.019 & 0.137& 0.008\\
  {$V_{ij}(\theta):=(\ref{eqn.7.2})$} & $\rho$ & 0.239 & 0.028 & 0.007 & {}  & 0.010& 0.017 & -0.000\\
  {} & $\sigma^2$  & 0.011 & 0.485 & 0.125 & {} & 0.013 & 0.428 & 0.141\\
Exponential errors & $\gamma_0$  & 0.032 & 0.109 & -0.019 & {}  & 0.016 & 0.131 & 0.010\\
  {$\mu_{ij}(\theta):=(\ref{eqn.7.1})$} & $\gamma_1$ &0.203 & 0.172 & 0.081 & {} & 0.011 & 0.167 & -0.004\\
  {$V_{ij}(\theta):=(\ref{eqn.7.2})$} & $\rho$ & 0.273 & 0.034 & 0.008 & {} & 0.250 & 0.128 & -0.008\\
  {} & $\sigma^2$ & 0.015 &0.912 & 0.170 & {} & 0.006 & 1.672 & -0.064\\
Uniform errors & $\gamma_0$  & 0.024 & 0.108 & -0.014 & {}  & 0.003 & 0.094  & -0.002\\
  {$\mu_{ij}(\theta):=(\ref{eqn.7.1})$} & $\gamma_1$ &0.144 & 0.170 & 0.058 & {} & 0.012 & 0.139 & -0.005\\
  {$V_{ij}(\theta):=(\ref{eqn.7.2})$} & $\rho$ & 0.257 & 0.027 & 0.008 & {} & 0.014 & 0.017 & -0.000\\
  {} & $\sigma^2$  & 0.010 &0.308 & 0.110 & {} & 0.011 & 0.265 & 0.122\\
Log-normal errors & $\gamma_0$  & 0.046 & 0.109 & -0.028 & {} & 0.003 & 0.124 & -0.002\\
  {$\mu_{ij}(\theta):=(\ref{eqn.7.1})$} & $\gamma_1$ & 0.208 & 0.166 & 0.083 & {} & 0.000 & 0.161 & 0.000\\
  {$V_{ij}(\theta):=(\ref{eqn.7.2})$} & $\rho$ & 0.232 & 0.040 & 0.007 & {} & 0.151 & 0.084 & -0.005\\
  {} & $\sigma^2$  & 0.007 & 1.631 & 0.078 & {} & 0.030 & 1.896 & -0.326\\
\end{tabular}
}
\end{table}

\begin{table}[!htbp]
\centering
\caption{Comparison with C\&S, $n=200$, $V_{ij}$ in (\ref{eqn.7.2}), $ C_{\max}=125$}\label{parset3}
\resizebox{\columnwidth}{!}{
\begin{tabular}{lcccccccc}
  {}& ENOES=3.9{} &   {} & Our method &  {} &  {}   & {} & $C\&S$ $F_0=$ Normal & {}\\ 
  $Estimator$ & Parameter &  $|rBias|$ & ESE & Bias & {} & $|rBias|$ & ESE & Bias\\
  Normal errors & $\gamma_0$  & 0.049 & 0.438 & 0.029 & {} & 0.008 & 0.249 & 0.005\\
  {$\mu_{ij}(\theta):=(\ref{eqn.7.1})$} & $\gamma_1$ &0.095 & 0.583 & -0.038 & {} & 0.021 & 0.359 & -0.008\\
  {$V_{ij}(\theta):=(\ref{eqn.7.2})$} & $\rho$ & 0.606 & 0.070 & 0.018 & {} & 0.023& 0.033& -0.001\\
  {} & $\sigma^2$  & 0.008 &0.730 & 0.089 & {} & 0.012 & 0.604 & 0.134\\
Exponential errors & $\gamma_0$  & 0.062 & 0.451 & 0.037 & {} & 0.045 & 0.232 & -0.027\\
  {$\mu_{ij}(\theta):=(\ref{eqn.7.1})$} & $\gamma_1$ &0.112 & 0.591 & -0.045 & {} & 0.086 & 0.347 & 0.034\\
  {$V_{ij}(\theta):=(\ref{eqn.7.2})$} & $\rho$ & 1.069 & 0.092 & 0.032 & {}  & 0.123 & 0.031 & -0.004\\
  {} & $\sigma^2$  & 0.008 &1.302 & 0.084 & {} & 0.024 & 1.081 & -0.264\\
Uniform errors & $\gamma_0$  & 0.064 & 0.440 & 0.039 & {} & 0.035 & 0.233 & 0.021\\
  {$\mu_{ij}(\theta):=(\ref{eqn.7.1})$} & $\gamma_1$ & 0.130 & 0.591 & -0.052 & {} & 0.075 & 0.355 & -0.030\\
  {$V_{ij}(\theta):=(\ref{eqn.7.2})$} & $\rho$ & 0.502 & 0.067 & 0.015 & {} & 0.038 & 0.029 & -0.001\\
  {} & $\sigma^2$  & 0.007 & 0.525 & 0.079 & {} & 0.011 & 0.381 & 0.124\\
Log-normal errors & $\gamma_0$  & 0.068 & 0.465 & 0.041 & {}  & 0.085 & 0.226 & -0.051\\
  {$\mu_{ij}(\theta):=(\ref{eqn.7.1})$} & $\gamma_1$ &0.106 & 0.631 &$-$ 0.042 & {} & 0.157 & 0.345 & 0.063\\
  {$V_{ij}(\theta):=(\ref{eqn.7.2})$} & $\rho$ & 0.765 & 0.358 & 0.023 & {} & 0.033 & 0.031 & 0.001\\
  {} & $\sigma^2$ & 0.016 & 2.693 & $-$0.181 & {} & 0.064 & 1.300 & -0.703\\
\end{tabular}
}
\end{table}

\end{supplement}


\end{document}